\documentclass{article}
\usepackage{amsmath,amssymb,amsfonts,amsthm}
\usepackage[margin=1in]{geometry}
\usepackage{aliascnt}
\usepackage{hyperref}
\usepackage{enumitem}
\usepackage{float} 
\usepackage{placeins}
\usepackage{mathtools} 
\usepackage{todonotes} 
\usepackage{tikz-cd}

\usepackage{graphicx} 
\theoremstyle{definition}
\newtheorem{theorem}{Theorem}[section] 
\newtheorem{lemma}[theorem]{Lemma} 
\newtheorem{definition}[theorem]{Definition}
\newtheorem{proposition}[theorem]{Proposition}
\newtheorem{corollary}[theorem]{Corollary} 
\newaliascnt{remark}{theorem}
\newtheorem*{theorem*}{Theorem}
\newtheorem{remark}[remark]{Remark}
\aliascntresetthe{remark}
\newcommand{\APSreg}{\text{APS,reg}}

\DeclareMathOperator{\APS}{APS}

\DeclareMathOperator{\Dom}{Dom}   
\hypersetup{colorlinks,breaklinks, linkcolor = purple, citecolor = teal, urlcolor = purple, }
\numberwithin{equation}{section}

\title{Higher-Rank Orthogonal Twists, APS Boundary Conditions, and O(2)-Equivariant Spectral Flow on a Warped Cylinder}
\author{Taro Kimura$^*$, Sanchita Sharma%
\thanks{Université Bourgogne Europe, CNRS, IMB UMR 5584, 21000 Dijon, France}}
\date{}

\begin{document}

\maketitle

\begin{abstract}
We study $O(2)$-equivariant spectral flow for Dirac operators on a finite warped cylinder equipped with fixed admissible regularized APS boundary conditions.  The twisting bundle is a real higher-rank orthogonal bundle, and reflection symmetry is implemented by a fiber involution.  After complexifying the twisting bundle and diagonalizing the orthogonal twist, the Dirac equation decomposes into a scalar Fourier-mode radial equation, with moving rotating blocks and stationary neutral blocks. After regrouping conjugate and reflection-paired blocks, the crossing
contributions define real $RO(O(2))$-classes. Consequently, we obtain an explicit blockwise formula for the $RO(O(2))$-valued spectral flow of the resulting regularized APS family.  Under the standard self-adjoint Fredholm, endpoint-invertibility, and regular-crossing hypotheses, together with a fixed neutral-sector convention, this formula is obtained by assembling the local crossing contributions of the separated blocks.  It refines ordinary integer-valued
spectral flow and shows explicitly how the dimension map $RO(O(2))\to\mathbb Z$ loses representation-theoretic information.  We also discuss the rank-three case, including the role of the fixed neutral sector, and the corresponding endpoint $\eta$/APS index interpretation.
\end{abstract}

\setcounter{tocdepth}{2}
\tableofcontents

\section{Introduction}
\label{sec:introduction}

Atiyah-Patodi-Singer boundary conditions are among the basic nonlocal elliptic boundary conditions for Dirac-type operators on manifolds with boundary \cite{APS1,APS2,APS3}. They are defined from the spectral decomposition of the induced self-adjoint boundary Dirac operator and play a central role in index theory, spectral asymmetry, and spectral flow. Analytic treatments of spectral flow for self-adjoint Fredholm operators appear, for example, in \cite{Phillips1996,Waterstraat2017FredholmSpectralFlow}, while elliptic boundary-value problems for Dirac operators and APS-type realizations are treated in \cite{BoossWojciechowski1993,BaerBallmann2012}. Spectral flow for Dirac operators with boundary conditions has also been studied in several directions, including \cite{GL14,van,BaerZiemke2025SpectralFlowAPSIndex}.

When symmetries are present, ordinary integer-valued spectral flow is often too coarse. A crossing kernel may carry a nontrivial group representation, and one should remember this representation rather than only its dimension. This leads to equivariant spectral flow, valued in a representation ring. Equivariant spectral flow has appeared in several
forms, including the Lefschetz-type formula of Fang \cite{Fang}, the equivariant spectral-flow and $\eta$-invariant framework of Lim-Wang
\cite{LimWang2021EquivariantSpectralFlowEtaBoundary}, the equivariant APS index framework of Hochs-Wang-Wang
\cite{HochsWangWang2019EquivariantAPSI,HochsWangWang2020EquivariantAPSII}, the equivariant Toeplitz index theory of Lim-Wang
\cite{LimWang2021EquivariantToeplitzBoundary}, and recent uniqueness and family-level treatments \cite{HochsYanes2024EquivariantSpectralFlowDiracType,
IzydorekJanczewskaStarostkaWaterstraat2026}. Related symmetry-protected refinements, including $\mathbb Z_2$-valued spectral flow, parity, and mod-two APS-type phenomena, occur in real and time-reversal symmetric settings \cite{DollSchulzBaldesWaterstraat2019,
Doll_SchulzBaldes_arXiv_2021,
BravermanHajSaeediSadegh2024Z2ToeplitzSpectralFlow,
FukayaFurutaMatsukiMatsuoOnogiYamaguchiYamashita2022}.
From the mathematical-physics point of view, reflection and more general crystalline symmetries are also central in the study of topological crystalline phases \cite{Ando_Fu_2015}.

This paper continues the finite warped-cylinder program developed in the first paper \cite{KS}. There, the basic model was the Dirac operator on the curved cylinder $M=[0,T]\times S^1$ with metric $g=dt^2+f(t)^2\,d\theta^2$, coupled to a background $U(1)$ gauge field. That work identified the endpoint boundary operators defining the
APS condition, described the modewise APS spectral problem, proved the cancellation of endpoint reduced $\eta$ contributions in the constant-gauge invertible case, and introduced a regularized APS-type boundary family for studying zero-mode crossings along gauge paths.

The second paper \cite{KSReflectionPartI} added reflection symmetry to this scalar line-twist model. The geometric symmetry group is $O(2)$, generated by rotations of the circle and the reflection $\mathbf{r}(t,\theta)=(t,-\theta)$. For a complex line twist with holonomy parameter $A$, that paper showed that the reflection lifts to the twisted
Dirac setting exactly in the half-integral holonomy case, equivalently $2A\in\mathbb Z$. In that reflection-compatible case, reflection pairs the shifted Fourier parameters $m=k+A$ and $-m$, and this pairing governs the APS boundary blocks, the reflection trace, and the residual parity information for holonomy deformations.

The current paper is the higher-rank orthogonal analogue of the scalar reflection model. Instead of a complex line twist, we consider the trivial real Euclidean twisting bundle $E_{\mathbb R}^{(n)}=M\times\mathbb R^n$ with orthogonal connection $\nabla^{E_{\mathbb R}^{(n)}}_{A_p}=d+A_pJ_n\,d\theta$, where
$J_n\in\mathfrak{so}(n)$ and $p\mapsto A_p$ is a path of real parameters. Reflection is implemented on the twisting factor by an orthogonal involution $C_n$ satisfying $C_nJ_nC_n^{-1}=-J_n$, which compensates for the sign change $d\theta\mapsto-d\theta$ under reflection. After complexification, $J_n$ decomposes into rotating two-plane blocks and neutral blocks. If the nonzero rotation weights are
$\mu_1,\ldots,\mu_r>0$, then the rotating scalar blocks are governed by the effective parameters $m_{k,j,+}(p)=k+\mu_jA_p$ and $m_{k,j,-}(p)=k-\mu_jA_p$. Thus the higher-rank problem reduces modewise to the same scalar warped-cylinder APS block, but with a weighted list of crossing equations, while the neutral blocks corresponding to $\ker J_n$ are stationary and require a separate convention.

This should be distinguished from the scalar complex line-twist obstruction in \cite{KSReflectionPartI}.  There, reflection compatibility was an arithmetic condition on the holonomy class, namely $2A\in\mathbb Z$.  Here the twisting data are different: the fiber involution $C_n$ is part of the real orthogonal model and is chosen to satisfy $C_nJ_nC_n^{-1}=-J_n$.  Thus, reflection compatibility is built into the higher-rank orthogonal twisting data, rather than imposed as a half-integrality condition on a complex line holonomy.

There is one analytic issue that is essential for moving families. The ordinary APS projector jumps when the boundary operator has kernel. Hence, along a path crossing a boundary-zero value, the ordinary APS domains do not form a continuous self-adjoint Fredholm family. We therefore fix an admissible regularized APS convention. This regularized boundary family agrees with the ordinary APS condition away from small regularization intervals, is Lagrangian for the boundary Green form, and is fixed so that the scalar transfer determinant has a simple zero at each regular crossing. All spectral-flow statements below refer to this fixed regularized APS family.

In the local crossing analysis, we use the standard crossing-form viewpoint on
spectral flow. For ordinary self-adjoint Fredholm paths, regular crossings are measured by crossing forms, whose signatures give the local contributions to spectral flow \cite{Waterstraat2017FredholmSpectralFlow, RobbinSalamon1995SpectralFlowMaslov}.
For Dirac operators with boundary, this viewpoint is closely related to the Maslov-index formulation of spectral flow for Lagrangian boundary data
\cite{BoossBavnbekFurutani1998Maslov,KirkLesch2004EtaMaslovSpectralFlow}.
In the equivariant setting considered here, the relevant crossing spaces carry finite-dimensional $O(2)$-representations, and the crossing forms are $O(2)$-invariant. Thus the local signs may be read representation sector by representation sector, giving the blockwise $RO(O(2))$-valued contributions used below. This is the local crossing-form perspective underlying the
equivariant spectral-flow frameworks of
\cite{LimWang2021EquivariantSpectralFlowEtaBoundary, IzydorekJanczewskaStarostkaWaterstraat2026, 
IzydorekJanczewskaStarostkaWaterstraat2025PartI}.

The main theorem should, therefore, be read as an explicit blockwise calculation
in this warped-cylinder regularized APS model. It is not intended as a general
compact-group or arbitrary-boundary-condition equivariant APS spectral-flow
theorem.

\medskip

\noindent
\textbf{Main results.}
The contributions of the paper are as follows.

\begin{enumerate}[label=\textup{(\roman*)}]
\item We construct reflection-compatible higher-rank orthogonal twisting data
$(E_{\mathbb R}^{(n)},J_n,C_n)$ and show that the lifted reflection gives an $O(2)$-equivariant higher-rank Dirac family. The rank of the twisting bundle changes the crossing multiplicities and weights, but the representation ring remains $RO(O(2))$, because $O(2)$ is the geometric symmetry group of the base circle.

\item We diagonalize the complexified higher-rank twist into scalar separated blocks. Each rotating two-plane contributes two scalar parameters $k+\mu_jA_p$ and $k-\mu_jA_p$, while each neutral direction contributes the stationary parameter $k$. For $\mu_j > 0$, the moving crossings occur at
$A_{p_*}=\pm\frac{k}{\mu_j}$.

\item We introduce a scalar regularized APS boundary family and impose it blockwise on
the rotating separated blocks.  The admissibility condition is expressed by the
simple-zero condition for the scalar transfer determinant.

\item Under endpoint invertibility, regular-crossing, admissibility, and
neutral-sector hypotheses, we assemble the local crossings into an
$RO(O(2))$-valued spectral-flow formula. Nonzero reflected Fourier pairs
contribute signed copies of the real two-dimensional representation $\rho_k$, while the self-paired zero Fourier block can split into the two one-dimensional sectors $\mathbf 1$ and $\det$. The detailed local Maslov-frame sign check is given in \autoref{app:maslov_reflected_frame_check}.

\item We give explicit examples showing that the ordinary dimension-valued spectral
flow forgets representation-theoretic information in two distinct ways. Under the
dimension map
$\dim:RO(O(2))\longrightarrow \mathbb Z$,
one has
\begin{equation}
[\rho_k]\longmapsto 2,
\qquad
[\mathbf 1]\longmapsto 1,
\qquad
[\det]\longmapsto 1.
\end{equation}
Thus ordinary spectral flow cannot distinguish different nonzero Fourier
representations of the same dimension; for example,
$\dim[\rho_k]=\dim[\rho_\ell]=2$, for $k,\ell\geq1$,
even though $[\rho_k]\neq[\rho_\ell]$ in $RO(O(2))$ when $k\neq\ell$.

There is also a second, stronger loss of information. With the scalar
$m$-slope sign convention used in the affine examples, a rotating zero-mode
crossing can contribute the signed class
$[\mathbf 1]-[\det]$.
This class is generally nonzero in $RO(O(2))$, but it is killed by the ordinary
dimension map:
$\dim\left([\mathbf 1]-[\det]\right)=1-1=0$.
Thus ordinary spectral flow can miss a nontrivial signed equivariant zero-mode
crossing entirely.

\item Finally, we describe the corresponding fixed-endpoint APS index and $\eta$-term interpretation. In the two-dimensional flat-twist model, the local index density has no degree-two contribution, so the fixed-parameter chiral APS index is controlled by endpoint $\eta$ terms and the finite kernel correction determined by the chosen APS
convention.
\end{enumerate}

\medskip

The paper is organized as follows. 
\autoref{sec:setup} fixes the warped-cylinder Dirac operator, the Fourier-mode
reduction, and the scalar APS boundary conventions.
\autoref{sec:higher_rank_orthogonal_twists} introduces reflection-compatible
higher-rank orthogonal twists and diagonalizes the complexified twisting data.
\autoref{sec:hrank_separated_blocks_regularized_APS} constructs the separated
block picture, the regularized APS boundary condition, and the scalar
admissibility criterion.
\autoref{sec:hrank_ROO2_spectral_flow_formula} proves the higher-rank
$RO(O(2))$-valued spectral-flow formula.
\autoref{sec:rank_three_neutral_block} works out the rank-three case explicitly,
emphasizing how the moving rank-two block is supplemented by a fixed neutral
sector.
\autoref{sec:examples_dimension_map_loss} gives examples showing the loss of information under the ordinary dimension map.
Finally, \autoref{sec:hrank_APS_index_endpoint_eta} describes the APS index and endpoint $\eta$ interpretation for the fixed-parameter higher-rank model.~\autoref{app:maslov_reflected_frame_check} gives the explicit Maslov-frame calculation explaining why nonzero reflected pairs give one signed $\rho_k$-block, while the self-paired zero mode may produce separate $\mathbf 1$ and $\det$ contributions.

\paragraph{Acknowledgments.}
We would like to thank Prof. Waterstraat for useful communications.
This work was supported by EIPHI Graduate School (No.~ANR-17-EURE-0002) and the Bourgogne-Franche-Comté region.

\section{Geometric setup and scalar block conventions}
\label{sec:setup}

We fix the warped-cylinder Dirac model and the scalar block notation used throughout the paper. The higher-rank orthogonal twist will be introduced in~\autoref{sec:higher_rank_orthogonal_twists}; the scalar line-twist notation below is used as the model calculation obtained after diagonalizing the higher-rank twist into separated blocks. We adapt the geometric and analytic setup from \cite{KS}.

\subsection{Geometry, bundles, and function spaces}
Let $M$ be the finite warped cylinder
\begin{equation}\label{eq:MT_metric} M=[0,T]\times S^1, \qquad g=dt^2+f(t)^2\,d\theta^2,
\end{equation}
where $f\in C^\infty([0,T])$ is strictly positive. Thus, $t\in[0,T]$ is the longitudinal coordinate, $\theta$ is the angular coordinate on $S^1$, and the metric $g$ is the warped product line element $ds^2=dt^2+f(t)^2\,d\theta^2$, so that the circle fiber at $t$ has radius $f(t)$.
Its boundary is
\begin{equation}\label{eq:boundary_decomp} \partial M=Y_0\sqcup Y_T, \qquad Y_0=\{0\}\times S^1, \qquad Y_T=\{T\}\times S^1.
\end{equation}

We use the orthonormal coframe
\begin{equation}\label{eq:coframe} e^1=dt, \qquad e^2=f(t)\,d\theta,
\end{equation}
with dual frame
\begin{equation}\label{eq:dual_frame} e_1=\partial_t, \qquad e_2=\frac{1}{f(t)}\partial_\theta.
\end{equation}
We orient $M$ by $dt\wedge d\theta$. With this convention, the inward unit normal is
\begin{equation}\label{eq:inward_normal}
N=
\begin{cases}
+\partial_t,& \text{on }Y_0,\\
-\partial_t,& \text{on }Y_T.
\end{cases}
\end{equation}

Fix a spin structure on $M$, and let $S\to M$ denote the corresponding complex spinor bundle. In dimension two, $S$ has rank $2$ and splits chirally as
\begin{equation}\label{eq:chirality_split} S=S^+\oplus S^-.
\end{equation}
Let $E\to M$ be a Hermitian complex line bundle with a unitary connection
\begin{equation}\label{eq:connectionE} \nabla^E=d+iA\,d\theta, \qquad A\in\mathbb R.
\end{equation}
Only the class
\begin{equation}\label{eq:A_class} [A]\in \mathbb R/\mathbb Z
\end{equation}
is gauge invariant: if one changes the local trivialization of the line bundle by a unitary gauge transformation, then the parameter $A$ may change by an integer. However, its class modulo $\mathbb Z$ remains unchanged. Equivalently, different real numbers $A$ and $A+\ell$, with $\ell\in\mathbb Z$, determine gauge-equivalent connections and hence the same holonomy around the circle. Thus, the twisted Dirac operator acts on sections of the twisted bundle,

\begin{equation}\label{eq:twisted_bundle} S\otimes E\to M.
\end{equation}
From this point onward, we suppress the twist $E$ from the notation when no confusion can arise, and write $S$, $S^\pm$, and $S_{Y_{t_0}}$ for the corresponding twisted bundles.

For any vector bundle $V\to M$, we write $\Gamma(V)$ for the smooth sections of $V$. We use the standard Hilbert and Sobolev spaces
\begin{equation}\label{eq:bulk_spaces} L^2(M;V), \qquad H^1(M;V),
\end{equation}
defined using the Riemannian volume form of $g$ and the Hermitian bundle metric. On each boundary component $Y_{t_0}$, where $t_0\in\{0,T\}$, we similarly write
\begin{equation}\label{eq:boundary_spaces} L^2(Y_{t_0};V|_{Y_{t_0}}), \qquad H^1(Y_{t_0};V|_{Y_{t_0}}), \qquad H^{1/2}(Y_{t_0};V|_{Y_{t_0}})
\end{equation}
for the corresponding boundary spaces, where $H^{1/2}$ is the standard trace space. In particular, for the twisted spinor bundle $S$, the trace map takes the form
\begin{equation}\label{eq:trace_map} H^1(M;S)\longrightarrow H^{1/2}(Y_{t_0};S|_{Y_{t_0}}).
\end{equation}

\subsection{Clifford conventions and the twisted Dirac operator}

Let $\sigma_1,\sigma_2,\sigma_3$ represent the Pauli matrices, and set
\begin{equation}\label{eq:gamma_def} \gamma_1=\sigma_1, \qquad \gamma_2=\sigma_2, \qquad \gamma_3=i\gamma_1\gamma_2=-\sigma_3.
\end{equation}
Thus the chirality decomposition \eqref{eq:chirality_split} is the $\pm1$-eigenspace decomposition of $\gamma_3$.

Using the tensor product connection $
\nabla^{S\otimes E}=\nabla^S\otimes 1 + 1\otimes \nabla^E$, we define the twisted Dirac operator
\begin{equation}\label{eq:Dirac_def} D:\Gamma(S)\longrightarrow \Gamma(S).
\end{equation}
In the orthonormal frame \eqref{eq:dual_frame}, one obtains
\begin{equation}\label{eq:Dexplicit}
D
=
i\sigma_1\Bigl(\partial_t+\frac{f'(t)}{2f(t)}\Bigr)
+
i\sigma_2\,\frac{1}{f(t)}\bigl(\partial_\theta+iA\bigr).
\end{equation}
Equivalently,
\begin{equation}\label{eq:Dmatrix}
D
=
i
\begin{pmatrix}
0 &
\partial_t+\dfrac{f'(t)}{2f(t)}-\dfrac{i}{f(t)}(\partial_\theta+iA)
\\
\partial_t+\dfrac{f'(t)}{2f(t)}+\dfrac{i}{f(t)}(\partial_\theta+iA)
&
0
\end{pmatrix}.
\end{equation}

\subsection{Fourier reduction and mode spaces}

Since the metric and connection are $\theta$-independent, the operator $D$ commutes with rotations in the $\theta$-variable and therefore preserves the Fourier-mode decomposition in the angular direction. In other words, $D$ acts diagonally with respect to the Fourier basis, so the full operator splits into independent mode operators, one for each allowed angular mode. Depending on the chosen spin structure on $S^1$, the allowed Fourier modes are
\begin{equation}\label{eq:Kdef}
\mathcal K=\mathbb Z
\quad\text{(periodic case)},
\qquad
\mathcal K=\mathbb Z+\frac12
\quad\text{(anti-periodic case)}.
\end{equation}
For $k\in\mathcal K$, we write
\begin{equation}\label{eq:fourier_ansatz} \psi(t,\theta)=e^{ik\theta}\binom{u(t)}{v(t)}.
\end{equation}
Since $(\partial_\theta+iA)e^{ik\theta} = i(k+A)e^{ik\theta}$, it is convenient to introduce the shifted mode parameter
\begin{equation}\label{eq:mdef} m=k+A.
\end{equation}

For each mode $k$, the natural radial Hilbert and Sobolev spaces are
\begin{equation}\label{eq:mode_spaces} H_k=L^2([0,T],f(t)\,dt;\mathbb C^2), \qquad W_k=H^1([0,T],f(t)\,dt;\mathbb C^2).
\end{equation}
In the $k$-th mode, $D$ reduces to 
\begin{equation}\label{eq:Dk}
D^{(k)}
=
i
\begin{pmatrix}
0 & \mathcal A^+\\
\mathcal A^- & 0
\end{pmatrix},\qquad
\mathcal A^\pm = \partial_t+\frac{f'(t)}{2f(t)}\pm \frac{m}{f(t)}.
\end{equation}
Hence, the eigenvalue equation
\begin{equation}\label{eq:eigen_eq} D\psi=\lambda\psi, \qquad \lambda\in\mathbb R,
\end{equation}
becomes the modewise first-order system, $\mathcal A^+v=-i\lambda u$, $\mathcal A^-u=-i\lambda v$. 
Equivalently,
\begin{equation}\label{eq:mode_system_components}
v'(t)+\frac{f'(t)}{2f(t)}v(t)+\frac{m}{f(t)}v(t)=-i\lambda u(t),
\qquad
u'(t)+\frac{f'(t)}{2f(t)}u(t)-\frac{m}{f(t)}u(t)=-i\lambda v(t).
\end{equation}
Eliminating one component yields a second-order scalar equation; in this work, the boundary conditions and spectral data are organized entirely in terms of $m$.

\subsection{The boundary Green form}
\label{subsec:boundary_green_form}

For later use, we state the boundary form associated with the Dirac operator. If $\psi,\phi\in H^1(M;S\otimes E)$, Green's formula gives
\begin{equation}\label{eq:green_formula}
\langle D\psi,\phi\rangle_{L^2(M)}
-
\langle \psi,D\phi\rangle_{L^2(M)}
=
\mathcal G(\psi|_{\partial M},\phi|_{\partial M}),
\end{equation}
where the boundary Green form is
\begin{equation}\label{eq:boundary_green_form}
\mathcal G(\psi_{\partial},\phi_{\partial})
=
\int_{\partial M}
\left\langle i\gamma(N)\psi_{\partial},\phi_{\partial}\right\rangle
\,d\operatorname{vol}_{\partial M}.
\end{equation}
Here $N$ is the inward unit normal fixed in \eqref{eq:inward_normal}.  Equivalently, using the decomposition $\partial M=Y_0\sqcup Y_T$, this form is the sum of the two endpoint boundary contributions determined by $N=+\partial_t$ on $Y_0$ and $N=-\partial_t$ on $Y_T$.

A boundary subspace $L\subset H^{1/2}(\partial M;S\otimes E)$ is called isotropic for the Green form if
\begin{equation}
\mathcal G(\psi_{\partial},\phi_{\partial})=0
\qquad
\text{for all }\psi_{\partial},\phi_{\partial}\in L.
\end{equation}
It is called Lagrangian if it is maximal among boundary subspaces with this property. Imposing a Lagrangian boundary condition makes the boundary term in Green's formula vanish on the domain, which is the boundary mechanism behind self-adjointness.

\subsection{Boundary operators and APS boundary conditions}

Let $Y_{t_0}$ be one of the boundary circles, and let
$S_{Y_{t_0}}=S|_{Y_{t_0}}$. For the unit tangent vector
$U=e_2|_{t=t_0}$, define the induced boundary Clifford action by
\begin{equation}\label{eq:boundary_clifford} c(U)=-\,i\,\gamma(N)\gamma(U).
\end{equation}
With the inward normal convention \eqref{eq:inward_normal}, this gives
\begin{equation}\label{eq:cU_formula}
c(U)=
\begin{cases}
\ \sigma_3,& \text{on }Y_0,\\
-\sigma_3,& \text{on }Y_T.
\end{cases}
\end{equation}

The intrinsic tangential operator is
\begin{equation}\label{eq:Dt0} D_{t_0} = \frac{1}{f(t_0)}(-i\partial_\theta+A),
\end{equation}
and hence the self-adjoint boundary Dirac operators entering the APS projector are
\begin{equation}\label{eq:B0BT} B_0=\frac{1}{f(0)}\,\sigma_3\,(-i\partial_\theta+A), \qquad B_T=-\frac{1}{f(T)}\,\sigma_3\,(-i\partial_\theta+A).
\end{equation}
Equivalently, on the $k$-th Fourier mode, with $m=k+A$,
\begin{equation}\label{eq:Bk_mode} B_0^{(k)}=\frac{m}{f(0)}\sigma_3, \qquad B_T^{(k)}=-\frac{m}{f(T)}\sigma_3.
\end{equation}

Let $P_{>0}(B_{t_0})$ denote the orthogonal projection onto the strictly
positive spectral subspace of $B_{t_0}$. For $m\neq 0$, the APS boundary
condition is
\begin{equation}\label{eq:APSabstract} P_{>0}(B_0)\bigl(\psi|_{Y_0}\bigr)=0, \qquad P_{>0}(B_T)\bigl(\psi|_{Y_T}\bigr)=0.
\end{equation}
Since \eqref{eq:Bk_mode} is diagonal, the corresponding APS domain of the
mode operator $D^{(k)}$ is
\begin{equation}\label{eq:DomDkAPS}
\Dom(D^{(k)}_{\APS})
=
\begin{cases}
\displaystyle
\left\{\binom{u}{v}\in W_k:\ u(0)=0,\ v(T)=0\right\},
& m>0,\\
\displaystyle
\left\{\binom{u}{v}\in W_k:\ v(0)=0,\ u(T)=0\right\},
& m<0.
\end{cases}
\end{equation}

\begin{remark}[kernel APS convention]
\label{rem:global_APS_convention}
On every nonzero boundary mode, we impose the standard strict APS projection
$P_{>0}(B_t)=1_{(0,\infty)}(B_t)$.
Thus no auxiliary choice is involved when the boundary operator is invertible.

If $B_t$ has nontrivial kernel, however, the strict projection
$P_{>0}(B_t)$ does not prescribe a boundary condition on $\ker B_t$.
In this case the APS condition must be completed by a finite-dimensional
self-adjoint kernel condition. Equivalently, on the total boundary-kernel
space one chooses a Lagrangian subspace for the boundary Green form and adds
this choice to the strict APS condition. This is the convention under which
we use the notation $P_t^{\APS}$ for the completed APS projector and
$D^{\APS}$ for the corresponding completed APS realization.

In the reflection-symmetric parts of the paper, this kernel completion is
taken to be invariant under the boundary reflection, denoted later by
$\mathcal R_t$. Only the self-paired sector $m=0$ depends on this
auxiliary finite-dimensional completion; for $m\neq0$, the boundary
operators are invertible and the ordinary strict APS condition determines the
domain uniquely.
\end{remark}

\subsection{Relation with the scalar line-twist model}
\label{subsec:relation_scalar_line_twist}

The scalar line-twist model recalled in this section is the local block from which the higher-rank construction below is assembled. It is also the point at which the current paper connects with the preceding reflection-compatible scalar analysis of \cite{KSReflectionPartI}.

In the scalar model, the twisting parameter is a holonomy class
$[A]\in \mathbb R/\mathbb Z$,
and the $k$-th Fourier mode is controlled by the shifted parameter
$m=k+A$.
The ordinary scalar APS boundary condition is therefore governed, mode by
mode, by the sign of $m$. The boundary-zero case $m=0$ is the only case in
which a kernel convention or regularized APS convention is needed.

The reflection of the cylinder is
$\mathbf r(t,\theta)=(t,-\theta)$.
In the scalar line-twist model, this reflection sends the holonomy class
$[A]$ to $[-A]$. Hence a scalar reflection lift exists exactly in the
fixed-holonomy reflection-compatible case
\begin{equation}
[A]=[-A]\quad\text{in }\mathbb R/\mathbb Z,
\qquad\text{equivalently}\qquad
2A\in\mathbb Z.
\end{equation}
When this condition holds, reflection sends the shifted mode parameter $m$
to $-m$, and hence exchanges the two nonzero APS boundary regimes.

We use the scalar radial APS block as the local model for the real higher-rank orthogonal twist introduced in~\autoref{sec:higher_rank_orthogonal_twists}. After the, complexification of the higher-rank twist, each rotating real two-plane
gives two scalar-type blocks, and the scalar APS calculation is applied
separately in each block. What changes is not the scalar boundary calculation itself, but the list of effective shifted parameters and the way the resulting crossing spaces are regrouped into real $O(2)$-representation classes.

In the scalar line-twist model, a nonconstant path of holonomy parameters generally cannot remain pointwise reflection-compatible because the condition $2A\in\mathbb Z$ forces a continuous compatible path to be locally constant. In the higher-rank orthogonal model, reflection compatibility is instead built into the real
twisting data by a fiber involution introduced below. This allows the
higher-rank family to vary with the external parameter while remaining within
the $O(2)$-equivariant framework used later.

Thus, the scalar line-twist setup is the common local APS block from which the higher-rank $RO(O(2))$-valued spectral-flow formula is assembled.

\section{Higher-rank orthogonal twists and reflection symmetry}
\label{sec:higher_rank_orthogonal_twists}

In this section, we replace the scalar block model of
\autoref{sec:setup} by a real orthogonal twisting bundle of arbitrary finite
rank.  The base geometry and symmetry group remain the warped cylinder with
its $O(2)$-action; the higher rank refers only to the twisting bundle.

\subsection{Reflection-compatible higher-rank twisting}
\label{subsec:reflection_compatible_higher_rank_data}

Let
\begin{equation}\label{eq:hrank_real_bundle} E_{\mathbb R}^{(n)}=M\times\mathbb R^n
\end{equation}
be the trivial real Euclidean bundle of rank $n$. A higher-rank orthogonal twist is determined by a fixed skew-symmetric matrix $J_n\in\mathfrak{so}(n)$, along with a real parameter $A$. For a continuously differentiable real-valued path $p\mapsto A_p$, we define the corresponding orthogonal connection by
\begin{equation}\label{eq:hrank_orthogonal_connection} \nabla^{E_{\mathbb R}^{(n)}}_{A_p}=d+A_pJ_n\,d\theta .
\end{equation}
Since $J_n\in\mathfrak{so}(n)$, the one-form $A_pJ_n\,d\theta$ is $\mathfrak{so}(n)$-valued, equivalently skew-adjoint with respect to the Euclidean fiber metric, and therefore $\nabla^{E_{\mathbb R}^{(n)}}_{A_p}$ is a metric connection on $E_{\mathbb R}^{(n)}$.

We denote the complexification of $E_{\mathbb R}^{(n)}$ by
\begin{equation}\label{eq:hrank_complexified_bundle} E_{\mathbb C}^{(n)}=E_{\mathbb R}^{(n)}\otimes_{\mathbb R}\mathbb C .
\end{equation}
The connection $\nabla^{E_{\mathbb R}^{(n)}}_{A_p}$ complexifies to the unitary connection
\begin{equation}\label{eq:hrank_complexified_connection} \nabla^{E_{\mathbb C}^{(n)}}_{A_p}=d+A_pJ_n\,d\theta ,
\end{equation}
where $J_n$ now acts $\mathbb C$-linearly on the complex vector space
$\mathbb C^n$.

The base reflection is $\mathbf r(t,\theta)=(t,-\theta)$. Since $\mathbf r^*(d\theta)=-d\theta$, the angular connection form changes sign under pullback by $\mathbf r$. To obtain a reflection-compatible twist, the fiber action must therefore reverse the skew-symmetric generator $J_n$.

\begin{definition}[Reflection-compatible higher-rank orthogonal twist]
\label{def:hrank_reflection_compatible_twist}
A reflection-compatible higher-rank orthogonal twist is a real triple $(E_{\mathbb R}^{(n)},J_n,C_n)$, where $E_{\mathbb R}^{(n)}=M\times\mathbb R^n$, $J_n\in\mathfrak{so}(n)$, and $C_n\in O(n)$, such that $C_n$ is an orthogonal involution and
\begin{equation}\label{eq:hrank_twist_definition_condition} C_nJ_nC_n^{-1}=-J_n, \qquad C_n^2=I .
\end{equation}
Therefore, $C_n$ conjugates $J_n$ to $-J_n$. For such a real triple, a path $p\mapsto A_p$ determines a family of real orthogonal connections
$\nabla^{E_{\mathbb R}^{(n)}}_{A_p}=d+A_pJ_n\,d\theta$.
Its complexification acts on $E_{\mathbb C}^{(n)}=E_{\mathbb R}^{(n)}\otimes_{\mathbb R}\mathbb C$ by the same formula, with $J_n$ extended $\mathbb C$-linearly.
\end{definition}

\subsection{Normal form of the twisting matrix}
\label{subsec:hrank_normal_form}

We use the following normal form throughout the higher-rank analysis. Since $J_n$ is real and skew-symmetric, there is an orthonormal basis of $\mathbb R^n$ in which $J_n$ is block diagonal. In other words, there exist real numbers  $\mu_1,\ldots,\mu_{r}>0$ and an integer $z\geq0$, with $2r+z=n$, such that
\begin{equation}\label{eq:hrank_Jn_normal_form} J_n=\mu_1J\oplus\mu_2J\oplus\cdots\oplus\mu_{r}J\oplus 0_z , \qquad n = 2r +z
\end{equation}
where
\begin{equation}\label{eq:hrank_standard_J_block}
J=
\begin{pmatrix}
0&-1\\
1&0
\end{pmatrix}.
\end{equation}
The block $0_z$ denotes the zero matrix on the neutral subspace
$\mathbb R^z$. If $n$ is odd, then $z\geq1$. If $z=0$, the twist has no neutral directions. Since every nonzero skew-symmetric block has real dimension $2$, the neutral dimension $z=\dim\ker J_n$ has the same parity as $n$. In particular, if $n$ is odd, then $z\geq1$. If $n$ is even, then $z$ is even and may be zero.

In this normal form, a canonical choice of reflection-compatible fiber involution is obtained by putting the rank-two reflection on each rotating two-plane and an arbitrary orthogonal involution on the neutral part. For the rotating blocks we use
\begin{equation}\label{eq:hrank_standard_C_block}
C=
\begin{pmatrix}
1&0\\
0&-1
\end{pmatrix},
\end{equation}
so that $CJC^{-1}=-J$. Thus, one convenient choice is
\begin{equation}\label{eq:hrank_Cn_normal_form} C_n=C\oplus C\oplus\cdots\oplus C\oplus C_0 ,
\end{equation}
where there are $r$ copies of $C$, and $C_0\in O(z)$ is any orthogonal involution on the neutral block. Since $0_z$ commutes with every $C_0$, and since $C^2=I$ and $C_0^2=I$, this choice satisfies \eqref{eq:hrank_twist_definition_condition}.

The positive numbers $\mu_j$ measure the angular velocity with which the connection rotates the $j$-th real two-plane in the twisting bundle. When all $\mu_j=1$, the higher-rank model is simply a direct sum of identical rank-two twists. When the $\mu_j$'s are different, different internal two-planes cross at different values of the external parameter $A_p$. This is the source of the weighted crossing equations
appearing later.

\begin{remark}[Gauge convention for the higher-rank parameter]
\label{rem:hrank_gauge_convention}
In the scalar line model of~\autoref{sec:setup}, the parameter $A$ is naturally understood modulo the usual integer gauge shifts. In the higher-rank orthogonal model, however, we keep a fixed real representative $A_p$ along the chosen path. Thus, the
crossing equations below are equations for this chosen real representative. The effective crossing parameters are $k\pm\mu_jA_p$.  In the scalar line-bundle case, integer shifts of $A$ may be absorbed by gauge transformations.  In the higher-rank model, however, an integer shift $A_p\mapsto A_p+\ell$ changes the effective parameters by $\pm\mu_j\ell$.  For arbitrary real weights $\mu_j$, this need not be an integer Fourier shift and can not be identified with the original path unless additional integrality conditions are imposed.  We therefore fix the trivialization in which
$\nabla^{E_{\mathbb R}^{(n)}}_{A_p}=d+A_pJ_n\,d\theta$
is written, and compute crossing sets for this fixed real value $A_p$.

\end{remark}

\subsection{The lifted reflection}
\label{subsec:hrank_lifted_reflection}

We lift the base reflection $\mathbf r(t,\theta)=(t,-\theta)$ to the spinor factor by the unitary involution
\begin{equation}\label{eq:hrank_spinorial_reflection_lift} U_{\mathbf r}=\sigma_1.
\end{equation}
With the Clifford conventions of~\autoref{sec:setup}, this lift satisfies
\begin{equation}\label{eq:hrank_spinorial_reflection_relations}
U_{\mathbf r}^2=I,
\qquad
U_{\mathbf r}\sigma_1U_{\mathbf r}^{-1}=\sigma_1,
\qquad
U_{\mathbf r}\sigma_2U_{\mathbf r}^{-1}=-\sigma_2.
\end{equation}
For a reflection-compatible twist
$(E_{\mathbb R}^{(n)},J_n,C_n)$, we define the lifted reflection on
$S\otimes E_{\mathbb C}^{(n)}$ by
\begin{equation}\label{eq:hrank_reflection_lift} \mathcal U_{\mathbf r}^{E_{\mathbb C}^{(n)}}=U_{\mathbf r}\,C_n\,\mathbf r^* .
\end{equation}
Here $\mathbf r^*$ acts on the base variable, $C_n$ acts on the twisting factor
$E_{\mathbb C}^{(n)}$, and $U_{\mathbf r}$ acts on the spinor factor.

The corresponding twisted Dirac operator is
\begin{equation}\label{eq:hrank_twisted_Dirac}
D_p^{E_{\mathbb C}^{(n)}}
=
i\sigma_1\left(\partial_t+\frac{f'(t)}{2f(t)}\right)
+
i\sigma_2\,\frac{1}{f(t)}\left(\partial_\theta+A_pJ_n\right).
\end{equation}
This is the same warped-cylinder Dirac operator as before, except that the scalar line-twist term is replaced by the skew-symmetric endomorphism $A_pJ_n$.

\begin{proposition}[Reflection equivariance of the higher-rank twisted operator]
\label{prop:hrank_reflection_equivariance}
Assume that $C_n^2=I$ and $C_nJ_nC_n^{-1}=-J_n$. Then the lifted reflection $\mathcal U_{\mathbf r}^{E_{\mathbb C}^{(n)}}$ is an involutive lift of the base reflection, and for every $p\in[0,1]$, the higher-rank twisted Dirac operator satisfies
\begin{equation}\label{eq:hrank_reflection_commutation}
\mathcal U_{\mathbf r}^{E_{\mathbb C}^{(n)}}D_p^{E_{\mathbb C}^{(n)}}
\left(\mathcal U_{\mathbf r}^{E_{\mathbb C}^{(n)}}\right)^{-1}
=
D_p^{E_{\mathbb C}^{(n)}} .
\end{equation}
\end{proposition}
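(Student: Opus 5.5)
The plan is a direct conjugation computation, done factor by factor. The three constituents of $\mathcal U_{\mathbf r}^{E_{\mathbb C}^{(n)}}=U_{\mathbf r}C_n\mathbf r^*$ act on different tensor factors (spinor, twisting fiber, base variable), so they pairwise commute.

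First, involutivity and the lift property. Since $\mathbf r$ is an involution of $M$, we have $(\mathbf r^*)^2=I$. By \eqref{eq:hrank_spinorial_reflection_relations}, $U_{\mathbf r}^2=I$, and $C_n^2=I$ by hypothesis. Because the three factors commute, $(\mathcal U_{\mathbf r}^{E_{\mathbb C}^{(n)}})^2=U_{\mathbf r}^2C_n^2(\mathbf r^*)^2=I$. Moreover, $\mathcal U_{\mathbf r}^{E_{\mathbb C}^{(n)}}$ is fiberwise unitary ($\sigma_1$ is unitary and $C_n$ is real orthogonal, hence unitary after complexification). It also covers $\mathbf r$, since it maps the fiber over $\mathbf r(x)$ to the fiber over $x$. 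I would also record that $\mathbf r$ preserves the volume form $f(t)\,dt\,d\theta$ up to orientation, so $\mathbf r^*$ is unitary on $L^2$.

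Second, I conjugate each term of \eqref{eq:hrank_twisted_Dirac}, using $\mathcal U^{-1}=\mathcal U$.

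For the radial term, $\mathbf r^*$ commutes with $\partial_t$ and with multiplication by $f'(t)/2f(t)$, because $\mathbf r$ fixes $t$. The matrix $C_n$ commutes with anything scalar on the twisting factor, and $U_{\mathbf r}\sigma_1U_{\mathbf r}^{-1}=\sigma_1$. So $i\sigma_1(\partial_t+f'/2f)$ is invariant.

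For the angular term, the chain rule gives $\mathbf r^*\partial_\theta(\mathbf r^*)^{-1}=-\partial_\theta$. Since $A_pJ_n$ is constant on $M$, it commutes with $\mathbf r^*$, and $C_n(A_pJ_n)C_n^{-1}=-A_pJ_n$ by hypothesis. Also $f(t)^{-1}$ is $\mathbf r$-invariant. Hence $(\partial_\theta+A_pJ_n)\mapsto-(\partial_\theta+A_pJ_n)$; this uses the compatibility $C_nJ_nC_n^{-1}=-J_n$ in an essential way. Finally, $U_{\mathbf r}\sigma_2U_{\mathbf r}^{-1}=-\sigma_2$, and the two sign changes cancel, so $i\sigma_2 f^{-1}(\partial_\theta+A_pJ_n)$ is invariant.

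Adding the two terms yields \eqref{eq:hrank_reflection_commutation} for every $p$.

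The only delicate point, which I expect to be the main (mild) obstacle, is the bookkeeping of signs. The sign from $\partial_\theta$ and the sign from $J_n$ must agree, so that $\partial_\theta+A_pJ_n$ flips as a whole rather than becoming $-\partial_\theta+A_pJ_n$. This is exactly where the hypothesis on $C_n$ enters, compensating $\mathbf r^*d\theta=-d\theta$. A secondary point is that the identity should hold as an equality of operators on the appropriate domain (e.g. $H^1$ or smooth sections). This follows because $\mathcal U_{\mathbf r}^{E_{\mathbb C}^{(n)}}$ preserves $H^1(M;S\otimes E_{\mathbb C}^{(n)})$, as $\mathbf r$ is an isometry of $g$.
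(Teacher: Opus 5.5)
Your proposal is correct and follows the same route as the paper. It conjugates the radial and angular terms separately, uses $U_{\mathbf r}\sigma_1U_{\mathbf r}^{-1}=\sigma_1$ for the radial term, and in the angular term lets the flip $(\partial_\theta+A_pJ_n)\mapsto-(\partial_\theta+A_pJ_n)$ (from $\mathbf r^*\partial_\theta(\mathbf r^*)^{-1}=-\partial_\theta$ and $C_nJ_nC_n^{-1}=-J_n$) cancel against $U_{\mathbf r}\sigma_2U_{\mathbf r}^{-1}=-\sigma_2$. Your explicit verification of involutivity via the commuting factors $U_{\mathbf r}^2=C_n^2=(\mathbf r^*)^2=I$, and your remarks on unitarity and preservation of $H^1$, fill in details that the paper only asserts.
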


\begin{proof}
The $t$-derivative part of $D_p^{E_{\mathbb C}^{(n)}}$ is unchanged by the base reflection, and  $U_{\mathbf r}\sigma_1U_{\mathbf r}^{-1}=\sigma_1$. For the angular part, the pullback by $\mathbf r$ sends $\partial_\theta$ to $-\partial_\theta$, while the fiber operator $C_n$ sends $J_n$ to $-J_n$. Therefore the angular covariant derivative transforms as
\begin{equation}\label{eq:hrank_angular_covariant_transform}
C_n\mathbf r^*\left(\partial_\theta+A_pJ_n\right)(\mathbf r^*)^{-1}C_n^{-1}
=
-\left(\partial_\theta+A_pJ_n\right).
\end{equation}
On the other hand, the spinorial reflection satisfies $U_{\mathbf r}\sigma_2U_{\mathbf r}^{-1}=-\sigma_2$. Thus the sign change in the angular covariant derivative is exactly cancelled by the
sign change in the Clifford matrix $\sigma_2$. Hence the full operator is preserved by the lifted reflection, which proves \eqref{eq:hrank_reflection_commutation}.
\end{proof}

\begin{remark}[Complexification, real structure, and representation-ring convention]
\label{rem:representation_ring_convention}
The separated-block calculation is performed here after complexifying the
real orthogonal twisting bundle $E_{\mathbb R}^{(n)}$. This is only a diagonalization tool: a real skew-symmetric block has no real eigenlines, but after complexification
it splits into the two conjugate eigenlines with eigenvalues $+i\mu_j$ and
$-i\mu_j$. Thus an individual separated block
$\xi_{j,+}$ or $\xi_{j,-}$ is a complex block, not by itself a real
one-dimensional $O(2)$-representation.

However, the real structure does not disappear. Complex conjugation exchanges the two fiber
eigenlines $\xi_{j,+}$ and $\xi_{j,-}$, while reflection exchanges the Fourier
modes $k$ and $-k$. Therefore, the final representation class is assigned only after
regrouping the conjugate and reflection-paired complex blocks into the real
$O(2)$-representations from which they arose.

For $k>0$, the Fourier pair $e^{ik\theta}$ and $e^{-ik\theta}$ gives the real
two-dimensional irreducible $O(2)$-representation $\rho_k$. At $k=0$, the
self-paired block decomposes according to the reflection action into the trivial and
determinant representations $\mathbf 1$ and $\det$. Hence the symbols
$[\rho_k]$, $[\mathbf 1]$, and $[\det]$ mentioned are real
$O(2)$-representations, while the crossing multiplicities are computed using the complex operator on $S\otimes E_{\mathbb C}^{(n)}$.

Throughout the paper, $\operatorname{sf}_{O(2)}$ denotes the representation-valued
spectral flow obtained from the complex crossing spaces after regrouping conjugate and
reflection-paired blocks into their associated real $O(2)$-representations. Thus the coefficient
of $[\rho_k]$, $[\mathbf 1]$, or $[\det]$ gives the crossing-form signature on the
corresponding complex crossing block, while the symbol gives the associated real
$O(2)$-representation. This is the convention under which the dimension map gives the ordinary integer-valued spectral flow of the complexified operator family. With this convention, applying the real dimension map
$\dim:RO(O(2))\to\mathbb Z$ recovers the ordinary
integer-valued spectral flow obtained after forgetting the $O(2)$-representation.
\end{remark}

\subsection{Complex weight blocks of the higher-rank twist}
\label{subsec:hrank_complex_weight_blocks}
We now diagonalize the fiber action of $J_n$ after complexification.  The
real representation-ring convention is the one fixed in
Remark~\ref{rem:representation_ring_convention}.

For each rotating
two-plane in the normal form \eqref{eq:hrank_Jn_normal_form}, let $\xi_{j,+}$ and
$\xi_{j,-}$ be complex eigenvectors satisfying
\begin{equation}\label{eq:hrank_rotating_eigenvectors} J_n\xi_{j,+}=i\mu_j\xi_{j,+}, \qquad J_n\xi_{j,-}=-i\mu_j\xi_{j,-}, \qquad 1\leq j\leq r .
\end{equation}
For the neutral block $\ker J_n$, let $\eta_1,\ldots,\eta_z$ be a complex basis
satisfying
\begin{equation}\label{eq:hrank_neutral_eigenvectors} J_n\eta_a=0, \qquad 1\leq a\leq z .
\end{equation}
Thus, the complexified twisting space decomposes as a direct sum of rotating blocks
and neutral blocks.

For a Fourier mode $k\in\mathcal K$, the rotating separated sections are
\begin{equation}\label{eq:hrank_rotating_separated_section}
\Psi_{k,j,\pm}(t,\theta)
=
e^{ik\theta}
\binom{u_{k,j,\pm}(t)}{v_{k,j,\pm}(t)}
\otimes \xi_{j,\pm}.
\end{equation}
Using \eqref{eq:hrank_rotating_eigenvectors}, the angular covariant derivative acts on
these sections by
\begin{equation}\label{eq:hrank_rotating_block_action}
\left(\partial_\theta+A_pJ_n\right)\Psi_{k,j,\pm}
=
i\left(k\pm\mu_jA_p\right)\Psi_{k,j,\pm}.
\end{equation}
Therefore the effective scalar parameters in the rotating blocks are
\begin{equation}\label{eq:hrank_rotating_block_parameters} m_{k,j,+}(p)=k+\mu_jA_p, \qquad m_{k,j,-}(p)=k-\mu_jA_p .
\end{equation}

The neutral separated sections are
\begin{equation}\label{eq:hrank_neutral_separated_section}
\Psi_{k,a,0}(t,\theta)
=
e^{ik\theta}
\binom{u_{k,a,0}(t)}{v_{k,a,0}(t)}
\otimes \eta_a .
\end{equation}
Since $J_n\eta_a=0$, the angular covariant derivative acts on a neutral block by
\begin{equation}\label{eq:hrank_neutral_block_action} \left(\partial_\theta+A_pJ_n\right)\Psi_{k,a,0} = ik\,\Psi_{k,a,0}.
\end{equation}
Thus the neutral effective parameter is
\begin{equation}\label{eq:hrank_neutral_block_parameter} m_{k,a,0}(p)=k .
\end{equation}

Substituting the rotating separated section \eqref{eq:hrank_rotating_separated_section}
into the higher-rank Dirac operator gives the same scalar radial operator as in the
rank-two calculation, with $m$ replaced by $m_{k,j,\pm}(p)$. More explicitly, the
eigenvalue equation
$D_p^{E_{\mathbb C}^{(n)}}\Psi_{k,j,\pm}=\lambda\Psi_{k,j,\pm}$ becomes
\begin{equation}\label{eq:hrank_rotating_radial_system}
\left(
\partial_t+\frac{f'(t)}{2f(t)}+\frac{m_{k,j,\pm}(p)}{f(t)}
\right)v_{k,j,\pm}(t)
=
-i\lambda u_{k,j,\pm}(t),
\qquad
\left(
\partial_t+\frac{f'(t)}{2f(t)}-\frac{m_{k,j,\pm}(p)}{f(t)}
\right)u_{k,j,\pm}(t)
=
-i\lambda v_{k,j,\pm}(t).
\end{equation}
Similarly, in a neutral block, the same formula holds with $m_{k,j,\pm}(p)$
replaced by the constant parameter $k$.

\subsection{Boundary crossings}
\label{subsec:hrank_boundary_crossing_consequences}

The scalar radial calculation is unchanged in higher rank. The only new feature is the list of scalar parameters to which the calculation is applied. In each rotating block, the regularized APS crossing occurs when $m_{k,j,\pm}(p)=0$. Therefore, the moving crossing equations for the $j$-th rotating two-plane are
\begin{equation}\label{eq:hrank_crossing_equations} A_{p_*}=\frac{k}{\mu_j} \qquad\text{or}\qquad A_{p_*}=-\frac{k}{\mu_j}.
\end{equation}
Thus, each positive weight $\mu_j$ rescales the values of $A_p$ at which the $k$-th Fourier block can cross.

The neutral blocks are stationary.  Since $m_{k,a,0}(p)=k$, they do not
produce $A_p$-dependent crossings.  For $k\neq0$, the neutral block stays
away from the boundary-kernel value.  In the periodic sector, however, the
neutral zero mode satisfies $m_{0,a,0}(p)=0$ for all $p$.  This is not an
isolated crossing, so it is handled by a fixed APS kernel completion rather
than by the rotating-block regularization.

\begin{definition}[Reflection-compatible neutral-sector convention]
\label{def:hrank_neutral_sector_convention}
Let $z>0$. In the periodic spin sector, the neutral $k=0$ blocks satisfy
$m_{0,a,0}(p)=0$
for every $p\in[0,1]$. This is a stationary boundary-kernel sector, not an
isolated moving crossing. We therefore complete the APS condition on this
finite-dimensional sector by a fixed neutral convention.

Let
\begin{equation}
\widehat\gamma_0\psi
=
f(0)^{1/2}\psi|_{Y_0}^{k=0,\ker J_n},
\qquad
\widehat\gamma_T\psi
=
f(T)^{1/2}\psi|_{Y_T}^{k=0,\ker J_n}
\end{equation}
denote the scaled boundary traces in the neutral zero sector. We impose the
two-endpoint graph condition
\begin{equation}\label{eq:neutral_global_completion} \widehat\gamma_T\psi=-\widehat\gamma_0\psi.
\end{equation}
Equivalently,
\begin{equation}
\psi|_{Y_T}^{k=0,\ker J_n}
=
-\left(\frac{f(0)}{f(T)}\right)^{1/2}
\psi|_{Y_0}^{k=0,\ker J_n}.
\end{equation}

This is a $p$-independent finite-dimensional completion of the APS condition on
the neutral boundary-kernel sector. It is not an additional condition on a
sector where strict APS has already imposed something: on the boundary kernel,
the strict projection $P_{>0}$ imposes no condition, so a kernel completion has
to be specified.

The sign in \eqref{eq:neutral_global_completion} is compatible with the
two-endpoint Green form. Indeed, Green's formula contains
\begin{equation}
\mathcal G(\psi_\partial,\phi_\partial)
=
\int_{\partial M}
\left\langle i\gamma(N)\psi_\partial,\phi_\partial\right\rangle
\,d\operatorname{vol}_{\partial M},
\end{equation}
and the inward normal is $+\partial_t$ on $Y_0$ and $-\partial_t$ on $Y_T$.
After the rescaling by $f(0)^{1/2}$ and $f(T)^{1/2}$, the neutral zero-sector
part of the boundary form has the form
\begin{equation}
\mathcal G_{\text{neut}}(\psi,\phi)
=
\mathfrak b(\widehat\gamma_0\psi,\widehat\gamma_0\phi)
-
\mathfrak b(\widehat\gamma_T\psi,\widehat\gamma_T\phi),
\end{equation}
where $\mathfrak b$ is the endpoint Hermitian form induced by
$i\gamma(\partial_t)$ on the neutral zero sector. If both boundary values satisfy
\eqref{eq:neutral_global_completion}, then
\begin{equation}
\mathfrak b(\widehat\gamma_T\psi,\widehat\gamma_T\phi)
=
\mathfrak b(\widehat\gamma_0\psi,\widehat\gamma_0\phi),
\end{equation}
so the two endpoint contributions cancel. Thus the boundary Green form vanishes
on this graph. Since the graph has half the dimension of the total two-endpoint
neutral boundary space, it is Lagrangian for the total boundary Green form and
defines a self-adjoint finite-dimensional APS completion.

The same condition also removes the stationary neutral zero solution. A
zero-energy neutral solution has constant scaled trace,
$\widehat\gamma_T\psi=\widehat\gamma_0\psi$.
Together with \eqref{eq:neutral_global_completion}, this gives
$\widehat\gamma_0\psi=\widehat\gamma_T\psi=0$.
Hence the persistent neutral zero mode is removed, and the neutral zero sector
does not contribute to the moving spectral flow.
\end{definition}

\subsection{Hypotheses for the higher-rank regularized family}
\label{subsec:hrank_standing_assumptions}

For the higher-rank formula later in the paper, we impose the following 
hypotheses. These are the standard hypotheses used in the crossing-form approach to
spectral flow for paths of self-adjoint Fredholm operators; see, for example,
\cite{Phillips1996,Waterstraat2017FredholmSpectralFlow,RobbinSalamon1995SpectralFlowMaslov}.

\begin{enumerate}[label=\textup{(\roman*)}]

\item \textbf{Regularity of the parameter path.}
The path $p\mapsto A_p$ is a continuously differentiable real-valued path on $[0,1]$.

\item \textbf{Fixed admissible regularized APS convention.}
We fix the admissible regularized APS boundary family used in~\autoref{sec:hrank_separated_blocks_regularized_APS}. This includes the scalar
regularized APS boundary family in the moving blocks, the chosen zero-block
regularization, and the fixed neutral-sector convention of
Definition~\ref{def:hrank_neutral_sector_convention}.

\item \textbf{Endpoint invertibility.}
The endpoint operators
\begin{equation}\label{eq:hrank_endpoint_invertibility_assumption} D^{E_{\mathbb C}^{(n)}}_{0,\APSreg}, \qquad D^{E_{\mathbb C}^{(n)}}_{1,\APSreg}
\end{equation}
are invertible. Hence the spectral flow of the regularized path is defined without
endpoint correction terms.

\item \textbf{Crossing form and regular crossings.}
Let $p_*\in(0,1)$ be a crossing parameter, so that
\begin{equation}\label{eq:hrank_crossing_kernel_condition} \ker D^{E_{\mathbb C}^{(n)}}_{p_*,\APSreg}\neq 0 .
\end{equation}
After using the local graph trivialization supplied by the regularized boundary family,
the crossing form at $p_*$ is the Hermitian form
\begin{equation}\label{eq:hrank_operator_crossing_form}
\Gamma_{p_*}(u,v)
=
\left\langle
\left.\frac{d}{dp}\right|_{p=p_*}
D^{E_{\mathbb C}^{(n)},\text{loc}}_{p,\APSreg}u,
v
\right\rangle_{L^2(M)},
\qquad
u,v\in\ker D^{E_{\mathbb C}^{(n)}}_{p_*,\APSreg}.
\end{equation}
Here $D^{E_{\mathbb C}^{(n)},\text{loc}}_{p,\APSreg}$ denotes the locally trivialized
operator family near $p_*$. In a separated scalar block, this crossing form is
represented by the derivative of the corresponding scalar transfer determinant. Thus,
for a moving block with effective parameter $m_{k,j,\sigma}(p)$, regularity is ensured by
\begin{equation}\label{eq:hrank_regular_crossing_condition} \left.\frac{d}{dp}\right|_{p=p_*}m_{k,j,\sigma}(p)\neq0,
\end{equation}
together with the admissibility condition that the scalar transfer determinant has a
simple zero at $m=0$.

The crossing is called regular if the form $\Gamma_{p_*}$ is nondegenerate. In that
case, its contribution to spectral flow is the signature
\begin{equation}\label{eq:hrank_crossing_sign_signature} \operatorname{sign}\Gamma_{p_*}.
\end{equation}

Throughout the spectral-flow formula, we require that all moving zeroes of
$m_{k,j,\pm}(p)$ are isolated and regular; paths for which a rotating block
satisfies $m_{k,j,\pm}(p)\equiv0$ are not included in the regular-crossing
case.
 \end{enumerate}

These hypotheses are used only to put the regularized APS family into the standard
regular-crossing framework for spectral flow. The higher-rank structure enters through
the separated effective parameters
\begin{equation}\label{eq:hrank_standing_effective_parameters} m_{k,j,+}(p)=k+\mu_jA_p, \qquad m_{k,j,-}(p)=k-\mu_jA_p, \qquad m_{k,a,0}(p)=k,
\end{equation}
and through the resulting $O(2)$-representation carried by each crossing space.

\begin{remark}[Simultaneous crossings.]
If several rotating blocks cross zero at the same parameter value, we require either
that the path has been perturbed within the reflection-compatible class so that the
crossings occur at distinct nearby parameter values, or that the total crossing form
$\Gamma_{p_*}$ is nondegenerate and block-diagonal with respect to the separated
block decomposition.

Under this block-diagonal convention, the local representation-valued contribution is
the sum of the contributions of the individual crossing blocks. If the crossing form
mixes blocks carrying the same $O(2)$-representation class, then the individual block
signs are not assigned separately. One first restricts $\Gamma_{p_*}$ to the combined
multiplicity space for that representation class and uses the signature of this
restricted form as the local coefficient.
\end{remark}

\section{Separated blocks and regularized APS boundary conditions}
\label{sec:hrank_separated_blocks_regularized_APS}

We now impose the regularized APS convention used in the spectral-flow formula.
The scalar construction is applied to each moving separated block, while the stationary
neutral zero sector uses the fixed completion of
Definition~\ref{def:hrank_neutral_sector_convention}. The resulting family is then
reassembled into the real $O(2)$-representation blocks specified in
Remark~\ref{rem:representation_ring_convention}.

\subsection{Separated scalar blocks and boundary operators}
\label{subsec:hrank_separated_scalar_blocks_boundary_operators}

We use the separated-block decomposition from
\autoref{subsec:hrank_complex_weight_blocks}.  The allowed Fourier modes are
$k\in\mathcal K$, with $\mathcal K$ as in \eqref{eq:Kdef}.  The rotating
two-planes of $J_n$ are indexed by
$\mathcal J=\{1,\ldots,r\}$,
and the neutral directions in $\ker J_n$ are indexed by
$\mathcal N=\{1,\ldots,z\}$.

For $j\in\mathcal J$, $a\in\mathcal N$, and
$\sigma\in\{+,-\}$, the effective scalar parameters are
\begin{equation}\label{eq:sep_effective_parameters} m_{k,j,+}(p)=k+\mu_jA_p, \qquad m_{k,j,-}(p)=k-\mu_jA_p, \qquad m_{k,a,0}(p)=k .
\end{equation}
Thus only the rotating blocks depend on the path parameter $A_p$; the neutral
blocks are stationary.

In every separated block, the eigenvalue equation
$D_p^{E_{\mathbb C}^{(n)}}\Psi=\lambda\Psi$
reduces to the same scalar radial system
\begin{equation}\label{eq:sep_scalar_radial_system}
\left(
\partial_t+\frac{f'(t)}{2f(t)}+\frac{m}{f(t)}
\right)v(t)
=
-i\lambda u(t),
\qquad
\left(
\partial_t+\frac{f'(t)}{2f(t)}-\frac{m}{f(t)}
\right)u(t)
=
-i\lambda v(t),
\end{equation}
with
\begin{equation}
m=m_{k,j,\sigma}(p)
\quad\text{in a rotating block},
\qquad
m=m_{k,a,0}(p)=k
\quad\text{in a neutral block}.
\end{equation}

Let $B_{0,p}^{(n)}$ and $B_{T,p}^{(n)}$ denote the higher-rank boundary
Dirac operators on $Y_0$ and $Y_T$.  With the inward-normal convention fixed
in \autoref{sec:setup}, they are
\begin{equation}\label{eq:sep_boundary_operators_higher_rank}
B_{0,p}^{(n)}
=
\frac{1}{f(0)}\,\sigma_3
\left(-i\partial_\theta-iA_pJ_n\right),
\qquad
B_{T,p}^{(n)}
=
-\frac{1}{f(T)}\,\sigma_3
\left(-i\partial_\theta-iA_pJ_n\right).
\end{equation}
On a rotating block $(k,j,\sigma)$, these become
\begin{equation}\label{eq:sep_boundary_operators_rotating_block}
B_{0,p}^{(k,j,\sigma)}
=
\frac{m_{k,j,\sigma}(p)}{f(0)}\,\sigma_3,
\qquad
B_{T,p}^{(k,j,\sigma)}
=
-\frac{m_{k,j,\sigma}(p)}{f(T)}\,\sigma_3.
\end{equation}
On a neutral block $(k,a,0)$, they become
\begin{equation}\label{eq:sep_boundary_operators_neutral_block}
B_{0,p}^{(k,a,0)}
=
\frac{k}{f(0)}\,\sigma_3,
\qquad
B_{T,p}^{(k,a,0)}
=
-\frac{k}{f(T)}\,\sigma_3.
\end{equation}

For $m\neq0$, the ordinary APS condition in a scalar block is determined by
the sign of $m$.  If $m>0$, the domain condition is
$u(0)=0$ and $v(T)=0$,
whereas if $m<0$, it is
$v(0)=0$ and $u(T)=0$.
The jump at $m=0$ is the only scalar boundary-kernel phenomenon that must be
regularized in the moving rotating blocks.

\subsection{The scalar regularized APS boundary family}
\label{subsec:hrank_scalar_regularized_APS}

We fix one scalar regularized APS model and use it in every moving block. Let
$\alpha:\mathbb R\to[0,\pi/2]$ be a smooth nondecreasing function satisfying
\begin{equation}\label{eq:sep_alpha_conditions}
\alpha(x)=0\ \text{for }x\leq -1,
\qquad
\alpha(x)=\frac{\pi}{2}\ \text{for }x\geq1,
\qquad
\alpha(-x)=\frac{\pi}{2}-\alpha(x).
\end{equation}
The symmetry condition implies $\alpha(0)=\pi/4$.

After the finite set of rotating crossing parameters has been determined, let
$\delta>0$ be fixed so that the regularization intervals are disjoint and
the intervals avoid the endpoints $p=0,1$. The additional scalar admissibility
restriction on $\delta$ will be derived from the scalar transfer determinant
in \autoref{subsec:hrank_scalar_transfer_admissibility}.

If several separated blocks cross at the same parameter value, we use the
simultaneous-crossing convention fixed in
\autoref{subsec:hrank_standing_assumptions}. Thus the local contribution is
computed either after separating the crossings by a small
reflection-compatible perturbation, or from the signature of the total
crossing form on the relevant multiplicity space.

Let
\begin{equation}\label{eq:sep_spinor_basis} e_+= \binom{1}{0}, \qquad e_-= \binom{0}{1}.
\end{equation}
For a scalar block with parameter $m$, define the regularized positive boundary
lines by
\begin{equation}\label{def:scalar_regularized_APS}
\ell_0^{\text{reg}}(m)
=
\mathbb C\left(
\cos\alpha\left(\frac{m}{\delta}\right)e_-
+
i\sin\alpha\left(\frac{m}{\delta}\right)e_+
\right),
\qquad
\ell_T^{\text{reg}}(m)
=
\mathbb C\left(
\cos\alpha\left(\frac{m}{\delta}\right)e_+
-
i\sin\alpha\left(\frac{m}{\delta}\right)e_-
\right).
\end{equation}
These are the projected-out positive APS lines. The corresponding domain lines are the
complementary lines
\begin{equation}\label{eq:sep_reg_domain_lines}
\ell_0^{\text{dom}}(m)
=
\mathbb C\left(
\cos\alpha\left(\frac{m}{\delta}\right)e_+
+
i\sin\alpha\left(\frac{m}{\delta}\right)e_-
\right),
\qquad
\ell_T^{\text{dom}}(m)
=
\mathbb C\left(
-i\sin\alpha\left(\frac{m}{\delta}\right)e_+
+
\cos\alpha\left(\frac{m}{\delta}\right)e_-
\right).
\end{equation}
Thus, for each $t\in\{0,T\}$, the regularized APS domain line is
\begin{equation}\label{eq:sep_domain_line_kernel} L_t^{\text{reg}}(m)=\ell_t^{\text{dom}}(m)=\ker P_t^{\text{APS,reg}}(m).
\end{equation}

The regularization agrees with the ordinary APS convention outside the regularization interval
$|m|<\delta$. More precisely,
\begin{subequations}\label{eq:sep_reg_lines_negative_m}
\begin{align}
m<-\delta
\quad\Longrightarrow\quad
\ell_0^{\text{reg}}(m)=\mathbb C e_-,
\qquad
\ell_T^{\text{reg}}(m)=\mathbb C e_+,
\\
m>\delta
\quad\Longrightarrow\quad
\ell_0^{\text{reg}}(m)=\mathbb C e_+,
\qquad
\ell_T^{\text{reg}}(m)=\mathbb C e_-.
\end{align}
\end{subequations}
Since $P_t^{\text{APS,reg}}(m)$ projects onto $\ell_t^{\text{reg}}(m)$, the imposed
boundary condition is that the boundary value lies in the complementary domain line
$\ell_t^{\text{dom}}(m)$. Thus, for fixed $m<0$ and sufficiently small $\delta$,
\eqref{eq:sep_reg_lines_negative_m} gives
\begin{equation}
\ell_0^{\text{dom}}(m)=\mathbb C e_+,
\qquad
\ell_T^{\text{dom}}(m)=\mathbb C e_-,
\end{equation}
which is exactly the APS condition $v(0)=0$ and $u(T)=0$. Similarly, for fixed
$m>0$ and sufficiently small $\delta$, it gives
\begin{equation}
\ell_0^{\text{dom}}(m)=\mathbb C e_-,
\qquad
\ell_T^{\text{dom}}(m)=\mathbb C e_+,
\end{equation}
which is exactly the APS condition $u(0)=0$ and $v(T)=0$. Hence, the regularized
boundary condition reduces to the ordinary APS boundary condition away from the
regularization interval, and in the limit $\delta\to0$ for each fixed $m\neq0$.

The phases in \eqref{def:scalar_regularized_APS} and \eqref{eq:sep_reg_domain_lines} are
chosen so that the domain lines are Lagrangian for the boundary Green form defined in
\eqref{eq:boundary_green_form}. Since each crossing block is two-dimensional over
$\mathbb C$, a one-dimensional domain line is Lagrangian exactly when the Green form
vanishes on that line.

\begin{definition}[Scalar regularized APS projection]
\label{def:sep_scalar_regularized_APS_projection}
For a scalar block with effective parameter $m$, the regularized APS projection $P_t^{\text{APS,reg}}(m)$ is the orthogonal projection onto $\ell_t^{\text{reg}}(m)$. Its kernel is the regularized APS domain line $\ell_t^{\text{dom}}(m)$.
\end{definition}

\subsection{Scalar transfer determinant and admissibility}
\label{subsec:hrank_scalar_transfer_admissibility}

The higher-rank construction reduces the local crossing analysis to a scalar
boundary-value problem with an effective parameter $m$.  This scalar calculation
will be applied later to the rotating blocks by substituting
$m=m_{k,j,+}(p)$ or $m=m_{k,j,-}(p)$.

Let
\begin{equation}\label{eq:hrank_Lf_def} L_f=\int_0^T\frac{d\tau}{f(\tau)}.
\end{equation}
In a scalar block with parameter $m$, the zero-energy radial equation is
\begin{equation}\label{eq:hrank_scalar_zero_energy_system}
\left(
\partial_t+\frac{f'(t)}{2f(t)}+\frac{m}{f(t)}
\right)v(t)=0,
\qquad
\left(
\partial_t+\frac{f'(t)}{2f(t)}-\frac{m}{f(t)}
\right)u(t)=0.
\end{equation}
Hence
\begin{equation}\label{eq:hrank_scalar_zero_energy_solutions}
u(t)=a\,f(t)^{-1/2}
\exp\left(m\int_0^t\frac{d\tau}{f(\tau)}\right),
\qquad
v(t)=b\,f(t)^{-1/2}
\exp\left(-m\int_0^t\frac{d\tau}{f(\tau)}\right).
\end{equation}
After discarding the common nonzero scalar factor coming from
$f(T)^{-1/2}/f(0)^{-1/2}$, the zero-energy transfer matrix from $Y_0$ to
$Y_T$ is
\begin{equation}\label{eq:hrank_scalar_transfer_matrix}
\Phi_m
=
\begin{pmatrix}
e^{mL_f}&0\\
0&e^{-mL_f}
\end{pmatrix}.
\end{equation}

\begin{definition}[Admissible scalar regularized crossing]
\label{def:hrank_admissible_scalar_crossing}
The scalar regularized APS boundary family is called admissible at $m=0$ if
the zero-energy transfer determinant
\begin{equation}\label{eq:hrank_scalar_transfer_determinant_def} F(m) = \det\left( \ell_T^{\text{dom}}(m), \Phi_m\ell_0^{\text{dom}}(m) \right)
\end{equation}
satisfies
\begin{equation}\label{eq:hrank_admissibility_conditions} F(0)=0, \qquad F'(0)\neq0,
\end{equation}
and, after possibly replacing the regularization interval by a smaller fixed
positive interval,
$F(m)\neq0$ for $0<|m|<\delta$.
\end{definition}

We denote by $D_m^{\text{reg}}$ the scalar radial Dirac operator with effective parameter $m$ and with the regularized domain lines $L_0^{\text{reg}}(m)$ and $L_T^{\text{reg}}(m)$.

\begin{lemma}[Transfer determinant and scalar zero modes]
\label{lem:transfer_determinant_detects_zero_modes}
For the scalar regularized APS boundary problem with domain lines
$\ell_0^{\text{dom}}(m)$ and $\ell_T^{\text{dom}}(m)$, one has
$F(m)=0$ if and only if $\ker D_m^{\text{reg}}\neq 0$.
Consequently, if the scalar regularized APS boundary family is admissible then, after possibly replacing the regularization interval by a smaller fixed positive interval, one has
\begin{equation}
\ker D^{\text{reg}}_m = 0
\qquad
\text{for }0<|m|<\delta,
\qquad
\ker D^{\text{reg}}_0 \neq 0 .
\end{equation}
\end{lemma}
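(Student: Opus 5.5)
The argument reduces the kernel question to a shooting problem for the first-order radial system \eqref{eq:hrank_scalar_zero_energy_system}. An element of $\ker D_m^{\text{reg}}$ is a solution $(u,v)$ of the zero-energy system lying in $W_k$ whose boundary values satisfy $(u(0),v(0))\in\ell_0^{\text{dom}}(m)$ and $(u(T),v(T))\in\ell_T^{\text{dom}}(m)$. The system is a regular linear ODE on $[0,T]$ because $f>0$ is smooth. Its solutions are therefore uniquely determined by their value at $t=0$, are smooth (hence automatically in $W_k$), and are given explicitly by \eqref{eq:hrank_scalar_zero_energy_solutions}. In particular, the map from initial data to solutions is a linear isomorphism $\mathbb C^2\to\{\text{solutions}\}$. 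Under this identification, the value at $t=T$ equals $c\,\Phi_m$ applied to the value at $t=0$, where $c=(f(0)/f(T))^{1/2}\neq0$ is the common scalar factor discarded in \eqref{eq:hrank_scalar_transfer_matrix}.

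\textbf{Forward direction.} Suppose $\psi=(u,v)\neq0$ lies in the kernel. Then $x:=\psi(0)\neq0$ by uniqueness, and $x$ spans $\ell_0^{\text{dom}}(m)$. The endpoint value $\psi(T)=c\,\Phi_m x$ is nonzero, since $\Phi_m$ is invertible, and it lies in $\ell_T^{\text{dom}}(m)$. Consequently $\Phi_m\ell_0^{\text{dom}}(m)=\ell_T^{\text{dom}}(m)$. The two columns in \eqref{eq:hrank_scalar_transfer_determinant_def} are then linearly dependent, so $F(m)=0$.

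\textbf{Converse.} Suppose $F(m)=0$. Fix spanning vectors $w_T$ of $\ell_T^{\text{dom}}(m)$ and $w_0$ of $\ell_0^{\text{dom}}(m)$. The determinant $\det(w_T,\Phi_m w_0)$ vanishes, and both vectors are nonzero. Hence $\Phi_m w_0=\lambda w_T$ for some $\lambda\neq0$. The solution with initial value $w_0$ is nonzero, satisfies the boundary condition at $0$ by construction, and has endpoint value $c\lambda w_T\in\ell_T^{\text{dom}}(m)$. It is therefore a nonzero element of $\ker D_m^{\text{reg}}$.

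\textbf{Well-definedness of $F$.} One should remark that the vanishing of $F$ does not depend on the choice of spanning vectors. Rescaling either generator multiplies the determinant by a nonzero constant, so the zero set of $F$ is intrinsic.

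\textbf{The consequence.} This follows by combining the equivalence with Definition~\ref{def:hrank_admissible_scalar_crossing}. The condition $F(0)=0$ gives $\ker D_0^{\text{reg}}\neq0$. The requirement $F(m)\neq0$ for $0<|m|<\delta$, after shrinking the interval if necessary, gives trivial kernel there. If one prefers not to take the punctured nonvanishing as part of the definition, it also follows from the conditions in \eqref{eq:hrank_admissibility_conditions}. Indeed, $F$ is $C^1$ in $m$, because $\alpha$ is smooth and $\Phi_m$ is analytic. Then $F(0)=0$ and $F'(0)\neq0$ imply that $0$ is an isolated zero, so some punctured neighbourhood is zero-free.

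The main point needing care is the bookkeeping of the discarded factor $c$ and the identification of boundary values with solutions. Both are routine here because the transfer matrix is diagonal and invertible. No step is a genuine obstacle.
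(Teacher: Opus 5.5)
Your proposal is correct and follows the paper's own argument: a zero mode is determined by its initial value in $\ell_0^{\text{dom}}(m)$ and transported by $\Phi_m$. It therefore exists exactly when $\Phi_m\ell_0^{\text{dom}}(m)$ meets $\ell_T^{\text{dom}}(m)$, which for two lines in $\mathbb C^2$ means the determinant $F(m)$ vanishes, and the consequence is then read off from admissibility. Your extra bookkeeping goes slightly beyond the paper's short proof without changing the approach: you track the discarded factor $(f(0)/f(T))^{1/2}$, check independence of the choice of spanning vectors, and derive the punctured nonvanishing directly from $F(0)=0$ and $F'(0)\neq0$.
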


\begin{proof}
A zero-energy solution is determined by its initial boundary value at $Y_0$.
If this initial value lies in $\ell_0^{\text{dom}}(m)$, then its terminal
value at $Y_T$ lies in $\Phi_m\ell_0^{\text{dom}}(m)$.  Hence a nonzero
solution satisfying both endpoint boundary conditions exists exactly when
$\Phi_m\ell_0^{\text{dom}}(m) \cap \ell_T^{\text{dom}}(m) \neq \{0\}$.
Since both spaces are complex lines in $\mathbb C^2$, this is equivalent to
their linear dependence, which is detected by the determinant
$F(m)$.  The final statement follows from admissibility.
\end{proof}

\begin{lemma}[Admissibility of the scalar regularized APS boundary family]
\label{lem:hrank_scalar_boundary_family_admissible}
For the regularized domain lines in \eqref{eq:sep_reg_domain_lines}, the scalar
transfer determinant is, up to a nonzero scalar factor,
\begin{equation}\label{eq:hrank_scalar_transfer_determinant}
F(m)
=
\sin^2\alpha\left(\frac{m}{\delta}\right)e^{-mL_f}
-
\cos^2\alpha\left(\frac{m}{\delta}\right)e^{mL_f}.
\end{equation}
If
\begin{equation}\label{eq:hrank_scalar_admissibility_condition} \frac{2\alpha'(0)}{\delta}\neq L_f,
\end{equation}
then the scalar regularized APS boundary family is admissible at $m=0$.
This condition is open and can be imposed by avoiding at most one value of
$\delta$.  After replacing $\delta$ by a positive fixed interval, the determinant has no
additional zero in the punctured regularization interval
$0<|m|<\delta$.
\end{lemma}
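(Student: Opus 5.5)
The plan is to compute $F$ directly from the explicit lines \eqref{eq:sep_reg_domain_lines} and the diagonal transfer matrix \eqref{eq:hrank_scalar_transfer_matrix}, and then read off admissibility from a first-order Taylor expansion at $m=0$.

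\textbf{Computing $F$.} Write $a=\alpha(m/\delta)$ and use coordinates with respect to $(e_+,e_-)$. Then $\ell_0^{\text{dom}}(m)$ is spanned by $(\cos a,\, i\sin a)^{\mathsf T}$, so
\[
\Phi_m\ell_0^{\text{dom}}(m)=\mathbb C\,(e^{mL_f}\cos a,\; i e^{-mL_f}\sin a)^{\mathsf T},
\]
while $\ell_T^{\text{dom}}(m)$ is spanned by $(-i\sin a,\;\cos a)^{\mathsf T}$. The $2\times2$ determinant of these two spanning vectors is
\[
(-i\sin a)(i e^{-mL_f}\sin a)-\cos a\cdot e^{mL_f}\cos a
=\sin^2 a\,e^{-mL_f}-\cos^2 a\,e^{mL_f}.
\]
This is \eqref{eq:hrank_scalar_transfer_determinant}. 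The dropped factor $f(T)^{-1/2}/f(0)^{-1/2}$, and any rescaling of the spanning vectors, only multiply $F$ by a nonzero constant. So the zero set and the nonvanishing of $F'(0)$ are unaffected, and Definition~\ref{def:hrank_admissible_scalar_crossing} may be checked on this representative.

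\textbf{Checking $F(0)=0$ and $F'(0)\neq0$.} The symmetry condition in \eqref{eq:sep_alpha_conditions} gives $\alpha(0)=\pi/4$, so $\sin^2=\cos^2=1/2$ at $m=0$ and $F(0)=0$. Differentiating, and using $\tfrac{d}{dm}\sin^2 a=2\sin a\cos a\,\alpha'(m/\delta)/\delta$, which equals $\alpha'(0)/\delta$ at $m=0$, gives
\[
F'(0)=\Big(\tfrac{\alpha'(0)}{\delta}-\tfrac{L_f}{2}\Big)-\Big(-\tfrac{\alpha'(0)}{\delta}+\tfrac{L_f}{2}\Big)=\tfrac{2\alpha'(0)}{\delta}-L_f.
\]
Hence $F'(0)\neq0$ exactly under \eqref{eq:hrank_scalar_admissibility_condition}. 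Since $L_f>0$, the excluded set is at most the single value $\delta=2\alpha'(0)/L_f$, and it is empty if $\alpha'(0)=0$. Openness follows from continuity of the left-hand side in $\delta$.

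\textbf{Isolating the zero.} Because $\alpha$ is smooth, $F$ is $C^1$ in $m$. With $F(0)=0$ and $F'(0)\neq0$, the mean value theorem, or the inverse function theorem, shows $F(m)\neq0$ on some punctured neighbourhood $0<|m|<\delta'$. Replacing the interval by this smaller fixed one gives the last clause of the lemma. This also reproduces, through Lemma~\ref{lem:transfer_determinant_detects_zero_modes}, the absence of scalar zero modes there.

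For $|m|\ge\delta$ there is nothing extra to check: either $\sin a=0$ or $\cos a=0$, so $F=\mp e^{\pm mL_f}\neq0$ automatically.

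\textbf{Main obstacle.} The only delicate point is bookkeeping: the phase and sign conventions in the domain lines, and the order of the columns in the determinant. These affect $F$ only by an overall sign, so they do not change the zero set or the condition $F'(0)\neq0$.
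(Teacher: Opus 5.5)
Your proposal is correct and follows the paper's proof essentially step for step: compute $\Phi_m\ell_0^{\text{dom}}(m)$, take the determinant against the spanning vector of $\ell_T^{\text{dom}}(m)$, use $\alpha(0)=\pi/4$ to get $F(0)=0$, differentiate to obtain $F'(0)=2\alpha'(0)/\delta-L_f$, and isolate the zero by shrinking the interval. Your additional remarks fill in details the paper leaves implicit: the single excluded value $\delta=2\alpha'(0)/L_f$, the fact that an $m$-dependent rescaling of the spanning vectors cannot affect $F'(0)\neq0$ because $F(0)=0$, and $F=\mp e^{\pm mL_f}\neq0$ for $|m|\ge\delta$.
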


\begin{proof}
Using \eqref{eq:hrank_scalar_transfer_matrix} and the domain vectors in
\eqref{eq:sep_reg_domain_lines}, we have
\begin{equation}\label{eq:hrank_transfer_on_domain_vector}
\Phi_m\ell_0^{\text{dom}}(m)
=
e^{mL_f}\cos\alpha\left(\frac{m}{\delta}\right)e_+
+
i e^{-mL_f}\sin\alpha\left(\frac{m}{\delta}\right)e_- .
\end{equation}
Taking the determinant of the two column vectors
$\ell_T^{\text{dom}}(m)$ and
$\Phi_m\ell_0^{\text{dom}}(m)$ gives
\eqref{eq:hrank_scalar_transfer_determinant}.  Since
$\alpha(0)=\pi/4$, this gives $F(0)=0$.  Differentiating
\eqref{eq:hrank_scalar_transfer_determinant} at $m=0$ gives
$\frac{2 \alpha^{'}}{\delta} -L_f$.  Therefore $F'(0)\neq0$
whenever \eqref{eq:hrank_scalar_admissibility_condition} holds.  The final
claim follows by replacing the regularization interval.
\end{proof}

\begin{remark}[Meaning of the scalar transversality condition]
The term $2\alpha'(0)/\delta$ measures the first-order rotation speed of
the regularized APS boundary lines, while $L_f$ measures the first-order
effect of zero-energy transfer through the cylinder.  Thus
$\frac{2\alpha'(0)}{\delta}\neq L_f$
says that the boundary-line rotation is not exactly cancelled by the interior
transfer.
\end{remark}

\begin{remark}[Maslov interpretation]
The scalar transfer determinant is a local coordinate for the Maslov
intersection condition.  Indeed, after transporting the initial domain line by
the zero-energy transfer map, a zero mode occurs exactly when
$\Phi_m\ell_0^{\text{dom}}(m) \cap \ell_T^{\text{dom}}(m) \neq \{0\}$.
Thus the zeros of $F(m)$ are precisely the local intersections of the
corresponding path of Lagrangian pairs.
\end{remark}

\begin{corollary}[Scalar operator crossing]
\label{cor:hrank_scalar_boundary_to_operator_crossing}
\label{lem:hrank_scalar_boundary_to_operator_crossing}
Assume that the scalar regularized APS boundary family is admissible in the
sense of Lemma~\ref{lem:hrank_scalar_boundary_family_admissible}.  Let
$p\mapsto m(p)$ be continuously differentiable with $m(p_*)=0$.  If
$\left.\frac{d}{dp}\right|_{p=p_*}m(p)\neq0$,
then the corresponding scalar operator crossing at $p_*$ is regular.
\end{corollary}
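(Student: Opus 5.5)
The plan is to reduce regularity of the operator crossing to a statement about the scalar transfer determinant $F$, composed with the path $m(p)$. Set $G(p)=F(m(p))$. By Lemma~\ref{lem:transfer_determinant_detects_zero_modes}, the kernel of $D^{\text{reg}}_{m(p)}$ is nonzero exactly when $G(p)=0$. The kernel at $p_*$ is one-dimensional: a zero-energy solution is determined by its initial value in the complex line $\ell_0^{\text{dom}}(0)$, so the crossing space is spanned by a single $\psi_*$. By the chain rule, $G'(p_*)=F'(0)\,m'(p_*)$. This is nonzero by the admissibility condition $F'(0)\neq0$ of Lemma~\ref{lem:hrank_scalar_boundary_family_admissible} together with the hypothesis $m'(p_*)\neq0$. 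Hence $p_*$ is an isolated, simple zero of $G$. Moreover, $G$ changes sign at $p_*$, and for $p$ near $p_*$ with $p\neq p_*$ the operator $D^{\text{reg}}_{m(p)}$ is invertible, because $m(p)$ then lies in the punctured interval $0<|m|<\delta$ where admissibility excludes further zeros.

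Next I would identify the crossing form $\Gamma_{p_*}(\psi_*,\psi_*)$ with a nonzero multiple of $G'(p_*)$. This is the content asserted in the standing hypothesis (iv), where the crossing form in a separated block ``is represented by the derivative of the corresponding scalar transfer determinant.'' Concretely, I would use the local graph trivialization of the regularized domain lines. Choose a smooth unitary family $W_p$ on $\mathbb C^2\oplus\mathbb C^2$ carrying $\ell_0^{\text{dom}}(m(p))\oplus\ell_T^{\text{dom}}(m(p))$ to the fixed lines at $p_*$. Conjugating by a bulk extension of $W_p$ gives a family $D^{\text{loc}}_p$ on a fixed domain. Its $p$-derivative paired with $\psi_*$ splits into two terms:
\begin{equation*}
m'(p_*)\,\langle \partial_m(\text{radial potential})\psi_*,\psi_*\rangle,
\end{equation*}
which evaluates to $\pm m'(p_*)\int_0^T(|u|^2-|v|^2)\,dt$ up to the weight $f$, and a boundary term coming from the rotation of the domain lines, of size proportional to $m'(p_*)\,\alpha'(0)/\delta$. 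The explicit zero mode $u=af^{-1/2}$, $v=bf^{-1/2}$ with $|a|=|b|$ (forced by $\alpha(0)=\pi/4$) shows that the first term equals $\pm m'(p_*)\,|a|^2(\cdot)L_f$-type contributions. Comparing with the formula for $F$ then gives
\begin{equation*}
\Gamma_{p_*}(\psi_*,\psi_*)=c\,m'(p_*)\left(\frac{2\alpha'(0)}{\delta}-L_f\right)=c'\,G'(p_*),
\end{equation*}
with $c,c'\neq0$.

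Since the crossing space is one-dimensional, nondegeneracy of $\Gamma_{p_*}$ is equivalent to $\Gamma_{p_*}(\psi_*,\psi_*)\neq0$. This holds by the admissibility condition \eqref{eq:hrank_scalar_admissibility_condition} and $m'(p_*)\neq0$, so the crossing is regular.

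The main obstacle is the middle step: keeping track of signs and normalizations in the boundary contribution of the crossing form under the graph trivialization, so that the two contributions combine into exactly the factor $\frac{2\alpha'(0)}{\delta}-L_f$ rather than a different combination. If that direct computation becomes unwieldy, I would fall back on a more abstract argument. For a one-dimensional kernel with analytic dependence, the eigenvalue near zero is $\lambda(p)$, and a simple zero of the Evans-type function $G$ at an isolated crossing, together with invertibility nearby, forces $\lambda'(p_*)\neq0$. This uses the standard identification $\lambda'(p_*)=\Gamma_{p_*}(\psi_*,\psi_*)/\|\psi_*\|^2$ and the fact that $G$ and $\lambda$ vanish to the same order, since both vanish exactly when the transported line meets the terminal line.
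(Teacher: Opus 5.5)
Your argument is correct, and its first paragraph is the paper's entire proof. The paper simply declares $F(m(p))$ to be ``the local crossing function''; the identification of the crossing form with the derivative of the transfer determinant is built into standing hypothesis (iv). It then concludes from $\left.\frac{d}{dp}\right|_{p=p_*}F(m(p))=F'(0)\,m'(p_*)\neq0$.

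What you add is not in the paper. You note that the crossing space is one-dimensional, so nondegeneracy is a single nonvanishing number. You also try to actually derive $\Gamma_{p_*}(\psi_*,\psi_*)=c\,m'(p_*)F'(0)$. This is what turns the corollary from a restatement of hypothesis (iv) into a statement about the operator family. Your final identity is right. A first-order computation in the scaled variables $\tilde\psi=f^{1/2}\psi$ shows that the eigenvalue near zero is $\lambda(p)=F'(0)\,m(p)/T+O(m(p)^2)$, hence $\lambda'(p_*)=F'(0)\,m'(p_*)/T$.

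The middle step of your sketch contains a slip. Since $\partial_mD=-\sigma_2/f$ is off-diagonal, the bulk term is $-m'(p_*)\int_0^T\langle\sigma_2\psi_*,\psi_*\rangle\,dt=-2m'(p_*)\int_0^T\operatorname{Im}(v\bar u)\,dt$. It is not a multiple of $\int(|u|^2-|v|^2)$, which vanishes identically on the zero mode because $|a|=|b|$. So that expression could never produce the $L_f$ term you then assert. What $\alpha(0)=\pi/4$ actually forces is the phase relation $b=ia$, since the domain line at $Y_0$ is $\mathbb C(1,i)$. With this, the bulk term is $-2|a|^2L_f\,m'(p_*)$.

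The boundary term $\langle D\dot\psi,\psi_*\rangle$ is evaluated by Green's formula; the components of the varied traces along the domain lines drop out by the Lagrangian property. It equals $4|a|^2m'(p_*)\alpha'(0)/\delta$. The two terms add up to $2|a|^2m'(p_*)F'(0)$, with $\|\psi_*\|^2=2|a|^2T$.

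In your fallback, equal zero sets of $G$ and $\lambda$ do not imply equal vanishing orders. The correct reason uses the energy-dependent transfer determinant $E(\lambda,p)$, for which $G(p)=E(0,p)$. Since $E(\lambda(p),p)\equiv0$, you get $G(p)=E(0,p)-E(\lambda(p),p)=O(\lambda(p))$. Hence a simple zero of $G$ forces $\lambda'(p_*)\neq0$.
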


\begin{proof}
The local crossing function is $F(m(p))$.  By
Lemma~\ref{lem:hrank_scalar_boundary_family_admissible}, $F'(0)\neq0$.
Hence
\begin{equation}
\left.\frac{d}{dp}\right|_{p=p_*}F(m(p))
=
F'(0)
\left.\frac{d}{dp}\right|_{p=p_*}m(p),
\end{equation}
which is nonzero by assumption.
\end{proof}

\subsection{Regularized APS projections in rotating higher-rank blocks}
\label{subsec:hrank_reg_APS_rotating_blocks}

We now apply the scalar regularization to one representative from each nonzero
reflection-paired block and obtain its reflected partner by equivariance. If one regularized a block and its reflected partner independently using
the scalar parameter $m$ and $-m$, the local sign convention could be obscured.
Instead, for $k\geq1$, $j\in\mathcal J$, and $\sigma\in\{+,-\}$, we first fix
the representative block $(k,j,\sigma)$ and define the regularized positive
boundary line at $Y_t$ by
\begin{equation}\label{eq:sep_rotating_positive_block_line}
E_{t,k,j,\sigma,p}^{+,\text{reg},\text{rep}}
=
\ell_t^{\text{reg}}\left(m_{k,j,\sigma}(p)\right)
\otimes e^{ik\theta}\xi_{j,\sigma}.
\end{equation}
We do not define the reflected partner independently. Instead, it is obtained by
applying the lifted reflection $\mathcal U_{\mathbf r}^{E_{\mathbb C}^{(n)}}$. Thus the
regularization on the reflected block is the reflected regularization of the
representative block, not a second independent scalar choice.  The displayed definition
uses the scalar parameter $m_{k,j,\sigma}(p)$.  In the optional affine sign convention
used later, the same Lagrangian path may be reparametrized inside the regularization interval by
the local $A$-coordinate; this changes only the orientation convention for the local
crossing sign and not the crossing set or the main blockwise formula. Outside the regularization
interval this agrees with the ordinary APS positive space on the reflected block; inside
the interval it is the equivariant continuation required for an $O(2)$-equivariant
family. The regularized positive space over the nonzero reflection-paired Fourier block
$\{k,-k\}$ is therefore

\begin{equation}\label{eq:sep_nonzero_block_positive_space}
E_{t,k,p}^{+,\text{reg}}
=
\bigoplus_{j\in\mathcal J}
\bigoplus_{\sigma\in\{+,-\}}
\left(
E_{t,k,j,\sigma,p}^{+,\text{reg},\text{rep}}
\oplus
\mathcal U_{\mathbf r}^{E_{\mathbb C}^{(n)}}
\left(
E_{t,k,j,\sigma,p}^{+,\text{reg},\text{rep}}
\right)
\right),
\qquad
k\geq1.
\end{equation}
It is $O(2)$-equivariant by construction: rotations preserve the
Fourier mode, while the reflection exchanges $k$ and $-k$ and simultaneously sends
the chosen block to its reflected partner. In particular, a single chosen complex block is not counted separately in
$RO(O(2))$. The $RO(O(2))$-class is attached to the whole
reflection-paired block after the complex blocks have been regrouped into the
underlying real $O(2)$-representation.

At the zero Fourier mode $k=0$, the rotating blocks are self-paired under reflection.
For each $j\in\mathcal J$, the two moving parameters are $\mu_jA_p$ and
$-\mu_jA_p$. The possible moving zero-mode crossing occurs only when $A_p=0$.
At such a crossing, the corresponding real $O(2)$-representation is a copy of
$\mathbf 1\oplus\det$. We denote this copy by
\begin{equation}\label{eq:sep_zero_block_representation_copy} (\mathbf 1\oplus\det)_j.
\end{equation}
On this zero Fourier block we fix an $O(2)$-equivariant regularized APS
boundary family
\begin{equation}\label{eq:sep_zero_block_projection_j} P_{t,p}^{(0,j),\APSreg}, \qquad t\in\{0,T\},
\end{equation}
whose kernel is a continuous family of Lagrangian domain subspaces in the corresponding
finite-dimensional zero Fourier boundary space. It agrees with the ordinary APS
convention away from the interval $|A_p|<\delta$. We require this zero-block
boundary family to be admissible in the same transfer-determinant sense as the scalar
boundary family above: the associated zero-energy transfer determinant has a simple zero
at $A_p=0$ and has no other zero inside the punctured regularization interval.

\begin{definition}[Admissible rotating zero-block regularization]
\label{def:admissible_zero_block_regularization}
For each rotating two-plane $j$, an admissible rotating zero-block regularization is an
$O(2)$-equivariant finite-dimensional family of Lagrangian boundary projectors
$P^{(0,j),\operatorname{APS,reg}}_{t,p}$, for $t\in\{0,T\}$,
on the zero Fourier block satisfying the following conditions:
\begin{enumerate}
\item outside the regularization interval $|A_p|<\delta$, the projectors agree with the
ordinary completed APS projectors;
\item their kernels form a continuous family of Lagrangian domain subspaces for the total boundary
Green form;
\item the associated zero-energy transfer determinant has a simple zero at $A_p=0$ and no other
zero in the punctured regularization interval;
\item the crossing form at $A_p=0$ is nondegenerate. In the blockwise convention used below, it is
also diagonal with respect to the real decomposition
$(\mathbf 1\oplus\det)_j = \mathbf 1_j\oplus \det_j$.
\end{enumerate}
The zero-block regularization is part of the fixed admissible APS convention used in the
spectral-flow formula.
\end{definition}

\begin{lemma}[Explicit zero-block model and sign convention]
\label{lem:hrank_admissible_zero_block_regularization}
In the warped-cylinder model, admissible rotating zero-block regularizations in the sense of
Definition~\ref{def:admissible_zero_block_regularization} exist. Moreover, they may be chosen
to be diagonal with respect to the reflection splitting $
(\mathbf 1\oplus\det)_j=\mathbf 1_j\oplus\det_j$.
For such a diagonal choice, the local crossing signs at a zero-mode crossing $p_*$, where
$A_{p_*}=0$, are well-defined numbers
$\epsilon_{\mathbf{1},j,p_*}$ and $\epsilon_{\det,j,p_*}\in\{\pm1\}$.
With the zero-block labeling used in the sequel, the $+$-branch
$m_{0,j,+}(p)=\mu_jA_p$ contributes to the $\mathbf 1_j$-line and the
$-$-branch $m_{0,j,-}(p)=-\mu_jA_p$ contributes to the $\det_j$-line. Thus, for an
affine path
$A_p=A+\nu p$, with $\nu\neq0$,
the signs are
\begin{equation}
\epsilon_{\mathbf{1},j,p_*}
=
\operatorname{sign}\!\left(F'(0)\nu\right),
\qquad
\epsilon_{\det,j,p_*}
=
-\operatorname{sign}\!\left(F'(0)\nu\right).
\end{equation}
In particular, if the scalar regularization is oriented so that $F'(0)>0$, then
$\epsilon_{\mathbf{1},j,p_*}=\operatorname{sign}(\nu)$ and
$\epsilon_{\det,j,p_*}=-\operatorname{sign}(\nu)$.
\end{lemma}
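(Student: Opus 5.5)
The construction is explicit. We reduce the zero Fourier rotating block for a fixed two-plane $j$ to two copies of the scalar model of \autoref{subsec:hrank_scalar_regularized_APS}, and check that the reflection acts diagonally on the result.

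At $k=0$ the block is spanned by the separated sections $\Psi_{0,j,+}$ and $\Psi_{0,j,-}$, with effective parameters $\mu_jA_p$ and $-\mu_jA_p$. Write $\xi_{j,\pm}=(e_1\mp i e_2)/\sqrt2$ in the rotating two-plane, up to normalization. Then $C\xi_{j,+}=\xi_{j,-}$. Since $U_{\mathbf r}=\sigma_1$ and $\mathbf r^*$ acts trivially on $k=0$, the lifted reflection $\mathcal U_{\mathbf r}^{E_{\mathbb C}^{(n)}}$ sends $\binom{u}{v}\otimes\xi_{j,+}$ to $\binom{v}{u}\otimes\xi_{j,-}$. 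The reflection therefore intertwines the $+$-branch with parameter $m$ and the $-$-branch with parameter $-m$. This is consistent with Proposition~\ref{prop:hrank_reflection_equivariance}, because $\sigma_1$ conjugates the scalar system with parameter $m$ into the one with parameter $-m$.

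I will define the boundary family on the $+$-branch by the scalar lines $\ell_t^{\text{dom}}(\mu_jA_p)$, and on the $-$-branch by the $\mathcal U_{\mathbf r}$-image of these lines, exactly as in \eqref{eq:sep_nonzero_block_positive_space}. This is not an independent choice.

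First I verify the four conditions of Definition~\ref{def:admissible_zero_block_regularization}:
\begin{enumerate}
\item Outside $|\mu_jA_p|<\delta$, the lines are the ordinary APS lines by \eqref{eq:sep_reg_lines_negative_m}. Reflection maps ordinary APS lines to ordinary APS lines, since it commutes with the boundary operators.
\item Continuity is inherited from $\alpha$. The Lagrangian property follows because $\mathcal U_{\mathbf r}$ is unitary, commutes with $D_p$, and preserves the inward normal, hence preserves the Green form.
\item The zero-energy transfer determinant of the whole block is the product $F(\mu_jA_p)\cdot\widetilde F(\mu_jA_p)$. The reflected factor $\widetilde F$ agrees with $F$ up to a nonzero constant, since conjugating by $\sigma_1$ and exchanging $m\to-m$ maps the $+$-problem to the $-$-problem. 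Admissibility then follows from Lemma~\ref{lem:hrank_scalar_boundary_family_admissible} and Lemma~\ref{lem:transfer_determinant_detects_zero_modes}: both factors vanish simply at $A_p=0$ and nowhere else in the punctured interval.
\end{enumerate}

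The subtle point is that the product has a double zero, so "simple zero" in condition 3 must be read blockwise, one simple zero per line. I would state this reading explicitly, in line with the blockwise convention of Definition~\ref{def:admissible_zero_block_regularization}(4).

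For diagonality, I pass to the reflection-symmetric and antisymmetric combinations $\Psi_{0,j,+}\pm\mathcal U_{\mathbf r}\Psi_{0,j,+}$. These span $\mathbf 1_j$ and $\det_j$. At $A_p=0$ the kernel is two-dimensional and $O(2)$-invariant. The crossing form is reflection-invariant, so by Schur it is diagonal in the splitting $\mathbf 1_j\oplus\det_j$, since the two summands are inequivalent real representations. Each diagonal entry is nonzero by Corollary~\ref{cor:hrank_scalar_boundary_to_operator_crossing}, because $\frac{d}{dp}(\pm\mu_jA_p)=\pm\mu_j\nu\neq0$ for the affine path. This gives nondegeneracy and well-defined signs $\epsilon_{\mathbf 1,j,p_*}$ and $\epsilon_{\det,j,p_*}$.

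The main obstacle is the sign computation, specifically matching the $+$-branch with $\mathbf 1_j$ and the $-$-branch with $\det_j$. The symmetric and antisymmetric combinations each mix both branches, so the labeling in the statement must be taken as the stated convention. Under that convention, I use the scalar $m$-slope rule that a crossing contributes $\operatorname{sign}(F'(0)\,\dot m(p_*))$. This gives $\operatorname{sign}(F'(0)\mu_j\nu)=\operatorname{sign}(F'(0)\nu)$ for the $+$-branch, since $\mu_j>0$. For the $-$-branch it gives $\operatorname{sign}(F'(0)\cdot(-\mu_j\nu))=-\operatorname{sign}(F'(0)\nu)$.

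I would justify the slope rule via the crossing form \eqref{eq:hrank_operator_crossing_form} in the local graph trivialization. There the derivative of the operator on the kernel is a positive multiple of $F'(m)\dot m$, which is the Maslov-frame check deferred to the appendix. The final claim, the case $F'(0)>0$, is then immediate.
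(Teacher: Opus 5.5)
Your construction does give an admissible regularization that is diagonal in $\mathbf 1_j\oplus\det_j$, so the existence part goes through. The gap is the sign formula, which is false for the family you built.

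You define the $\xi_{j,-}$-part of the zero block as the $\mathcal U_{\mathbf r}$-image of the $\xi_{j,+}$-part, with domain lines $\sigma_1\ell_t^{\text{dom}}(\mu_jA_p)$. Since $\sigma_1 D_m\sigma_1=D_{-m}$ for the scalar radial operator, the $-$-branch problem is then unitarily equivalent to the $+$-branch problem at every $p$. This is exactly the nonzero-block mechanism of \autoref{app:maslov_reflected_frame_check}.

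Consequently, near $A_p=0$ the zero block has a single small eigenvalue branch $\lambda(p)$ of multiplicity two. Its eigenspace is spanned by $u=w\otimes\xi_{j,+}$ and $\mathcal U_{\mathbf r}u$. The $\xi_{j,\pm}$ blocks are decoupled and orthogonal, and $\Gamma_{p_*}$ is reflection invariant. Hence $\Gamma_{p_*}$ is $\Gamma_{p_*}(u,u)$ times the identity on this span, and for $e,o=(u\pm\mathcal U_{\mathbf r}u)/\sqrt2$ one gets $\Gamma_{p_*}(e,e)=\Gamma_{p_*}(o,o)$.

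So your Schur argument does give diagonality, but with equal entries. You get $\epsilon_{\mathbf 1,j,p_*}=\epsilon_{\det,j,p_*}$, and the block contributes $\pm([\mathbf 1]+[\det])$, not $\pm([\mathbf 1]-[\det])$. Your opposite sign comes from inserting the raw slope $-\mu_j\nu$ of the reflected branch into $\operatorname{sign}(F'(0)\dot m)$ while keeping the unreflected determinant. In the reflected chart the determinant is $\widetilde F(\widetilde q)=F(-\widetilde q)$ with $\widetilde F'(0)=-F'(0)$. This is precisely the chart-consistency point of Remark~\ref{rem:representative_branch_reflected_coordinate}.

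Declaring the branch-to-line assignment a ``convention'' cannot repair this. The signature of $\Gamma_{p_*}$ on each reflection line is an invariant of the boundary family, not of the labels. As you yourself note, in your family both lines are mixtures of the two branches.

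What is needed is the paper's construction: do not reflect a representative, but regularize the two reflection lines independently, directly in the reflection-diagonal chart. Put the scalar family with parameter $\mu_jA_p$ on the $\mathbf 1_j$-line and the scalar family with parameter $-\mu_jA_p$ on the $\det_j$-line. Concretely, on the odd line the $\xi_{j,-}$-component must lie in $\ell_t^{\text{dom}}(-\mu_jA_p)$ rather than in $\sigma_1\ell_t^{\text{dom}}(\mu_jA_p)$. Inside the regularization interval these are distinct, complex-conjugate lines; outside it both are the APS lines, so agreement with APS is unaffected.

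The remaining checks for this family are straightforward:
\begin{itemize}
\item $O(2)$-equivariance is automatic, because each summand is a reflection eigenspace and rotations act trivially at $k=0$.
\item The Lagrangian property follows from Green-orthogonality of the even and odd sectors.
\item Admissibility is Lemma~\ref{lem:hrank_scalar_boundary_family_admissible} applied line by line.
\end{itemize}
Now the two lines genuinely carry the scalar eigenvalue branches at $\mu_jA_p$ and at $-\mu_jA_p$, which cross zero in opposite directions. So $\operatorname{sign}(\pm F'(0)\mu_j\nu)$ is the actual crossing-form sign on each line. As the appendix stresses, the opposite signs are a property of this chosen regularization, not something forced by equivariance.
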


\begin{proof}
Fix a rotating two-plane $j$. At the zero Fourier mode, the two complex weight
blocks are self-paired by reflection. After regrouping the conjugate and
reflection-paired complex blocks into real representation classes, the corresponding
real $O(2)$-representation is
$(\mathbf 1\oplus\det)_j$.
Thus the finite-dimensional zero Fourier boundary space splits into the
reflection-even and reflection-odd coefficient directions, carrying
$\mathbf 1_j$ and $\det_j$, respectively.

We choose the zero-block boundary path in this reflection-diagonal chart. On the
$\mathbf 1_j$-direction we use the scalar regularized APS construction of
\autoref{subsec:hrank_scalar_regularized_APS} with local scalar parameter
$m_{0,j,+}(p)=\mu_jA_p$,
and on the $\det_j$-direction we use the same scalar construction with local scalar
parameter
$m_{0,j,-}(p)=-\mu_jA_p$.
Since these two directions are reflection eigenspaces and rotations act trivially on
the zero Fourier coefficient, their direct sum is $O(2)$-equivariant.

Each scalar piece is Lagrangian for its scalar boundary Green form, and the
$\mathbf 1_j$- and $\det_j$-directions are orthogonal under the reflection
splitting. Hence their direct sum is Lagrangian for the total zero-block
boundary Green form. The scalar construction agrees with the ordinary APS choice
outside the regularization interval, so the zero-block construction also agrees
with the completed APS convention outside
$|A_p|<\delta$.

It remains to check admissibility. The zero-energy transfer equation in each
coefficient direction is exactly the scalar zero-energy transfer problem from
the nonzero-block calculation, with $m_{0,j,+}(p)$ or $m_{0,j,-}(p)$ replacing
the scalar parameter $m$. Therefore, the scalar determinant computation from
\autoref{subsec:hrank_scalar_transfer_admissibility} applies verbatim to each
coefficient direction. Thus the determinant $F(m)$ satisfies
$F(0)=0$ and $F'(0) = \frac{2\alpha'(0)}{\delta}-L_f$.
Choosing $\delta$ and the angle function $\alpha$ so that
$\frac{2\alpha'(0)}{\delta}\neq L_f$
makes the zero at $m=0$ simple. After replacing the regularization interval if
necessary to smaller positive interval, there are no other zeroes in the punctured interval. This proves the
existence of admissible diagonal zero-block regularizations.

The local crossing sign in each direction is the sign of the derivative of the scalar
transfer determinant along the corresponding scalar branch:
\begin{equation}
\epsilon_{\mathbf{1},j,p_*}
=
\operatorname{sign}
\left(
F'(0)
\left.
\frac{d}{dp}
\right|_{p=p_*}
m_{0,j,+}(p)
\right),
\qquad
\epsilon_{\det,j,p_*}
=
\operatorname{sign}
\left(
F'(0)
\left.
\frac{d}{dp}
\right|_{p=p_*}
m_{0,j,-}(p)
\right).
\end{equation}
For an affine path $A_p=A+\nu p$, we have
\begin{equation}
\left.
\frac{d}{dp}
\right|_{p=p_*}
m_{0,j,+}(p)=\mu_j\nu,
\qquad
\left.
\frac{d}{dp}
\right|_{p=p_*}
m_{0,j,-}(p)=-\mu_j\nu .
\end{equation}
Since $\mu_j>0$, this gives
\begin{equation}
\epsilon_{\mathbf{1},j,p_*}
=
\operatorname{sign}\!\left(F'(0)\nu\right),
\qquad
\epsilon_{\det,j,p_*}
=
-\operatorname{sign}\!\left(F'(0)\nu\right).
\end{equation}
This is the sign convention used in the affine examples. For any other admissible
zero-block convention, the spectral-flow formula remains the same after replacing
these signs by the actual crossing-form signs.
\end{proof}

\begin{remark}[Existence of admissible zero-block choices in the model]
\label{rem:hrank_zero_block_existence}
On the zero Fourier block of one rotating two-plane, the boundary data split into the reflection-even and reflection-odd
real lines corresponding to $\mathbf 1_j$ and $\det_j$.  One may use Lagrangian boundary paths on these two lines by the same scalar angle construction used in \eqref{eq:sep_alpha_conditions}, with the local parameter $A_p$ replacing $m$. For a generic choice of the interval size and of the two endpoint angles, the associated zero-energy transfer determinant has a simple zero at $A_p=0$ and no other zero in the
punctured interval.  If the two coefficient signs are required to be separated, the even
and odd lines are chosen independently inside this diagonal chart.  Thus, the
zero-block convention is explicit up to the same admissibility requirement already
imposed for the nonzero scalar blocks.
\end{remark}

For the blockwise formula below, we use the signs defined by
Lemma~\ref{lem:hrank_admissible_zero_block_regularization}. Thus, at a zero-mode
crossing $p_*$, the diagonal zero-block crossing form determines two signs
\begin{equation}\label{eq:sep_zero_block_signs_j} \epsilon_{\mathbf{1},j,p_*}\in\{\pm1\}, \qquad \epsilon_{\det,j,p_*}\in\{\pm1\}.
\end{equation}
If several zero blocks are treated together, the simultaneous-crossing
convention of \autoref{subsec:hrank_standing_assumptions} applies: one uses
the signature of the total crossing form on the relevant multiplicity space
rather than assigning separate signs to individual $j$-blocks.

\begin{remark}[Zero-mode multiplicity]
\label{rem:sep_simultaneous_zero_mode_crossings}
If $r>1$, then all rotating zero-mode blocks cross at the same value
$A_p=0$. Thus, the zero-mode crossing may have multiplicity $2r$. This is the example of the simultaneous-crossing convention from
\autoref{subsec:hrank_standing_assumptions}.
\end{remark}

\subsection{Neutral blocks}
\label{subsec:hrank_neutral_blocks_regularized}

The neutral blocks have effective parameter
$m_{k,a,0}(p)=k$.
Hence they do not produce moving crossings.  If $k\neq0$, the ordinary APS
condition is fixed throughout the path.  In the periodic $k=0$ sector, the
boundary operator has persistent kernel, and we use the fixed neutral
completion of Definition~\ref{def:hrank_neutral_sector_convention}.

Thus the neutral sector contributes no moving crossing term to the
$RO(O(2))$-valued spectral-flow formula.  Its only role is to specify the
completed APS domain in the stationary zero sector.

\subsection{The global regularized APS domain}
\label{subsec:hrank_global_reg_APS_domain}

We now assemble the blockwise projections into a global regularized APS condition.
Let $P_{t,p}^{(n),\text{APS,reg}}$ be the orthogonal projection onto the direct sum
of the following subspaces on all non-neutral-zero sectors: the regularized positive spaces
$E_{t,k,p}^{+,\text{reg}}$ for all nonzero reflection-paired rotating blocks
$k\geq1$, the regularized rotating zero-mode positive spaces determined by the
projectors
$P_{t,p}^{(0,j),\APSreg}$, for $j=1,\ldots,r$,
and the ordinary APS positive spaces on non-crossing rotating and nonzero neutral blocks.
The neutral zero sector is completed separately by the global condition
\eqref{eq:neutral_global_completion}.

Equivalently, the rotating zero-mode part of $P_{t,p}^{(n),\text{APS,reg}}$ is the direct
sum over $j$ of the previously chosen $O(2)$-equivariant regularized APS projectors
$P_{t,p}^{(0,j),\APSreg}$.

The corresponding regularized APS domain is obtained by imposing the assembled
regularized APS projections on all rotating and nonzero neutral blocks, together with the
global neutral zero-sector condition \eqref{eq:neutral_global_completion}:
\begin{equation}\label{eq:sep_global_reg_APS_domain}
\begin{split}
\Dom\left(D_{p,\text{APS,reg}}^{E_{\mathbb C}^{(n)}}\right)
=
\Bigl\{
\Psi\in H^1\left(M;S\otimes E_{\mathbb C}^{(n)}\right):
&\ P_{0,p}^{(n),\text{APS,reg}}\left(\Psi|_{Y_0}\right)=0,\\
&\ P_{T,p}^{(n),\text{APS,reg}}\left(\Psi|_{Y_T}\right)=0
\text{ on all non-neutral-zero sectors},\\
&\ \widehat\gamma_T\Psi=-\widehat\gamma_0\Psi
\text{ on the neutral $k=0$ sector}
\Bigr\}.
\end{split}
\end{equation}
We write
\begin{equation}\label{eq:sep_global_reg_APS_realization}
D_{p,\text{APS,reg}}^{E_{\mathbb C}^{(n)}}
=
D_p^{E_{\mathbb C}^{(n)}}
\big|_{\Dom(D_{p,\text{APS,reg}}^{E_{\mathbb C}^{(n)}})}.
\end{equation}

Here $\mathcal R_t$ denotes the boundary action on $Y_t$ induced by the lifted reflection $\mathcal U_{\mathbf r}^{E_{\mathbb C}^{(n)}}$.

The assembled
projection $P_{t,p}^{(n),\operatorname{APS,reg}}$ commutes with the lifted boundary
reflection:
\begin{equation}
P_{t,p}^{(n),\operatorname{APS,reg}}\mathcal R_t
=
\mathcal R_t P_{t,p}^{(n),\operatorname{APS,reg}},
\qquad
t\in\{0,T\}.
\end{equation}
The global neutral zero-sector condition is also preserved by the lifted reflection. Hence, the full regularized APS boundary condition is reflection invariant. The boundary condition depends continuously on $p$, and it is smooth inside each regularization interval. Outside the finite-dimensional
regularization intervals, it agrees with the ordinary APS projection, together with the fixed
neutral-sector convention.

The following proposition is the analytic framework for the finite-dimensional crossing calculation. In this model, the regularized APS projection differs from the ordinary completed APS projection only on finitely many Fourier blocks inside
small regularization intervals.  Such a finite-rank Lagrangian change is a smoothing
perturbation of the APS projector and does not alter the principal APS elliptic
condition.  The Lagrangian condition gives self-adjointness through the Green form,
while ellipticity and Fredholmness are inherited from the APS realization.  This is the
standard elliptic-boundary framework for Dirac-type operators with APS and Lagrangian
boundary conditions, as in the treatments of Booss-Wojciechowski and B\"ar-Ballmann
\textup{\cite{BoossWojciechowski1993,BaerBallmann2012}}.

\begin{proposition}[Well-posed higher-rank regularized APS family]
\label{prop:sep_wellposed_higher_rank_reg_APS}
Consider the reflection-compatible higher-rank twisting data
$(E_{\mathbb R}^{(n)},J_n,C_n)$, the scalar regularized APS convention above, the
admissible rotating zero-block convention, and the fixed neutral-sector convention.
Then the assembled boundary data define a self-adjoint elliptic APS-type boundary
condition, with a finite-dimensional global completion on the neutral zero sector, for
every $p\in[0,1]$.  The assembled boundary condition is also preserved by the lifted
reflection. Consequently,
for each $p$, the operator
$D_{p,\text{APS,reg}}^{E_{\mathbb C}^{(n)}}$
is self-adjoint and Fredholm.  Moreover, the family
\begin{equation}\label{eq:sep_gap_continuous_family} p\longmapsto D_{p,\text{APS,reg}}^{E_{\mathbb C}^{(n)}}
\end{equation}
is continuous in the gap topology.  Under the reflection-compatible neutral convention,
it is pointwise $O(2)$-equivariant.
\end{proposition}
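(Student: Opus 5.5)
The plan is to treat $D_{p,\APSreg}^{E_{\mathbb C}^{(n)}}$ as a finite-rank perturbation of a single reference APS realization. Fix $p$. Only finitely many separated blocks carry a boundary projector that differs from the strict APS projector. These are the rotating blocks $(k,j,\sigma)$ and their reflected partners with $|m_{k,j,\sigma}(p)|<\delta$, the rotating zero blocks when $|A_p|<\delta$, and the neutral $k=0$ sector. Finiteness holds because $A_p$ is bounded on $[0,1]$, so $|k\pm\mu_jA_p|<\delta$ allows only finitely many $k$.

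\textbf{Step 1: the projector is an admissible perturbation of APS.} Write the assembled projector as
$P_{t,p}^{(n),\APSreg}=P_{>0}(B_{t,p}^{(n)})+K_{t,p}$, with $K_{t,p}$ of finite rank and supported on these finitely many Fourier$\otimes$fiber blocks. Each such block is spanned by smooth boundary sections $e^{ik\theta}e_\pm\otimes\xi$. Hence $K_{t,p}$ is smoothing, and $P^{(n),\APSreg}_{t,p}-P_{>0}(B_{t,p}^{(n)})$ is a smoothing operator. By the Booss--Wojciechowski / B\"ar--Ballmann theory, a boundary projector that differs from the APS projector by a smoothing operator defines an elliptic (well-posed) boundary condition. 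The global two-endpoint neutral graph condition \eqref{eq:neutral_global_completion} is likewise a finite-dimensional modification of the completed APS condition, so it is covered by the same framework for elliptic boundary conditions on $\partial M=Y_0\sqcup Y_T$. Ellipticity gives $H^1$-regularity of the domain and Fredholmness.

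\textbf{Step 2: self-adjointness.} By Green's formula \eqref{eq:green_formula}, it suffices to show that the boundary domain is Lagrangian for $\mathcal G$. The separated blocks are mutually $\mathcal G$-orthogonal, since $i\gamma(N)$ preserves each Fourier mode and fiber eigenline. This reduces the check to three cases.
\begin{itemize}
\item Strict APS blocks. Here the domain is $\ker P_{>0}$, which is the standard Lagrangian, because $i\gamma(N)$ exchanges the positive and negative spectral spaces of $B_t$.
\item Regularized scalar blocks. The Lagrangian property follows from the choice of phases in \eqref{eq:sep_reg_domain_lines}; one checks directly that $\langle i\sigma_1 w,w\rangle=0$ for the domain vectors $w$. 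The reflected partners are images of these under the $\mathcal G$-preserving unitary $\mathcal U_{\mathbf r}^{E_{\mathbb C}^{(n)}}$, so they are Lagrangian as well.
\item Zero blocks and the neutral sector. These are Lagrangian by Lemma~\ref{lem:hrank_admissible_zero_block_regularization} and Definition~\ref{def:hrank_neutral_sector_convention}.
\end{itemize}
Elliptic regularity then identifies the adjoint domain with the same space, so the realization is self-adjoint.

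\textbf{Step 3: gap continuity.} For continuity in the gap topology, I use the standard fact that if the boundary projectors $P_{t,p}$ vary norm-continuously, then the realizations vary gap-continuously (B\"ar--Ballmann, or via a unitary family intertwining the domains). The strict part $P_{>0}(B^{(n)}_{t,p})$ is locally constant in $p$ on every block not in the exceptional finite set, because the sign of $m$ does not change there. On the finitely many exceptional blocks, the lines \eqref{def:scalar_regularized_APS} depend smoothly on $p$ through $\alpha(m(p)/\delta)$. Moreover, at the edges $|m|=\delta$ they match strict APS, since $\alpha$ is constant outside $[-1,1]$. A uniform bound on the number of exceptional blocks over compact $p$-intervals then yields norm continuity of $P_{t,p}$.

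\textbf{Step 4: equivariance.} Rotations act diagonally on Fourier modes and commute with everything. For reflection, $\mathcal R_t$ commutes with the projector by construction: partners are defined as reflected images in \eqref{eq:sep_nonzero_block_positive_space}, and the zero blocks are diagonal in the reflection eigenspaces. Reflection also preserves the neutral condition, because $C_0$ commutes with the $p$-independent graph condition and $\mathbf r$ fixes the scaled traces. Combined with Proposition~\ref{prop:hrank_reflection_equivariance}, this shows that $\mathcal U_{\mathbf r}$ preserves the domain and commutes with the operator.

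\textbf{Main obstacle.} I expect the hardest step to be the rigorous justification that norm continuity of the projectors gives gap continuity of the realizations. This needs a continuous family of unitaries $W_p$ with $W_p\operatorname{ran}(1-P_{t,p_0})=\operatorname{ran}(1-P_{t,p})$, extended into the interior by a collar cutoff so that they preserve $H^1$ and commute with the trace. I would construct $W_p$ blockwise as rotations in the finitely many two-dimensional boundary blocks, which are smooth in $p$, and extend them by a cutoff in $t$. The resulting conjugated operator then differs from $D_{p_0}$ by a bounded operator of small norm.
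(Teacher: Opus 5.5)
Your proposal is correct and follows essentially the same route as the paper's proof: ellipticity and Fredholmness come from the fact that the completed APS projector is modified, by a finite-rank (hence smoothing) change, on only finitely many blocks; self-adjointness comes from the blockwise Lagrangian property via Green's formula; gap continuity comes from the continuity of the finitely many $p$-dependent boundary lines; and $O(2)$-equivariance comes from the representative/reflected-partner construction together with the two-endpoint neutral graph. Your unitary-intertwiner argument makes concrete what the paper only cites as standard continuity of elliptic realizations under continuously varying finite-dimensional Lagrangian data, and there is one refinement to note: the block rotations should be taken of the form $\exp(i\phi\sigma_1)$. Since these commute with $\sigma_1$, conjugating by them changes the operator only by a small bounded term, not a first-order one.
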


\begin{proof}
In a crossing block, the ordinary APS domain line is replaced by the regularized domain line $\ell_t^{\text{dom}}(m)$, and in the rotating zero block by the chosen admissible
zero-block Lagrangian line.  In the neutral zero sector, the finite-dimensional global
condition \eqref{eq:neutral_global_completion} completes the stationary boundary-kernel
part.  These replacements occur only in finitely many Fourier blocks.  Indeed, the image of $A_p$ is bounded on the compact interval $[0,1]$, and for
fixed weights $\mu_j$ only finitely many Fourier labels $k$ can satisfy
$|k\pm\mu_jA_p|<\delta$ for some $p\in[0,1]$.  Hence the changes are finite-rank,
and therefore smoothing, perturbations of the completed APS projector on the boundary.

The principal APS boundary condition is unchanged by a smoothing finite-rank
perturbation.  Thus ellipticity is inherited from the completed APS boundary problem;
no new principal boundary symbol is introduced by the finite-dimensional regularization.
The new domain subspaces are Lagrangian for the boundary Green form, and the neutral
zero-sector graph \eqref{eq:neutral_global_completion} is Lagrangian for the total Green
form on the two neutral endpoint spaces.  Therefore the boundary term in Green's formula
vanishes on the domain and the adjoint domain is the same.  Equivalently, the boundary
relation is maximal isotropic in the boundary symplectic space.  The regularized
realization is therefore self-adjoint.  Since the boundary condition is elliptic and
differs from the Fredholm APS realization by a finite-rank boundary perturbation, the
corresponding operator is Fredholm.

The coefficients of $D_p^{E_{\mathbb C}^{(n)}}$ are continuously differentiable functions of $p$.
The only $p$-dependent boundary projections are the finite-dimensional regularized
block projections, and these vary continuously in $p$ by construction.  Standard
continuity of elliptic realizations under continuous variation of finite-dimensional
Lagrangian boundary data gives gap-continuity of
\eqref{eq:sep_gap_continuous_family}.

Finally, rotation equivariance follows from the $\theta$-independence of the metric,
the connection, and the blockwise construction.  Reflection equivariance on the rotating
sector follows from $C_n^2=I$ and $C_nJ_nC_n^{-1}=-J_n$, together with the fact that the
regularized projection was defined by fixing representative blocks and adding their
reflected partners.  On the neutral zero sector, the lifted reflection acts by the same
operator on both endpoint trace spaces, so it preserves the graph
\eqref{eq:neutral_global_completion}.  Hence the full domain is preserved by the
$O(2)$-action, and the family is pointwise $O(2)$-equivariant.
\end{proof}

\subsection{Crossing sets for the higher-rank family}
\label{subsec:hrank_crossing_sets}

The moving crossings of the higher-rank family occur only in the rotating blocks.
For $k\geq1$ and $j\in\mathcal J$, define the nonzero-mode crossing set by
\begin{equation}\label{eq:sep_crossing_set_nonzero_higher_rank}
\mathcal C_{k,j}
=
\left\{
p_*\in(0,1):
A_{p_*}=\frac{k}{\mu_j}
\text{ or }
A_{p_*}=-\frac{k}{\mu_j}
\right\}.
\end{equation}
For the rotating zero mode of the $j$-th two-plane, define
\begin{equation}\label{eq:sep_crossing_set_zero_higher_rank} \mathcal C_{0,j} = \left\{ p_*\in(0,1): A_{p_*}=0 \right\}.
\end{equation}
The neutral sector has no moving crossing set. Its contribution is fixed by
Definition~\ref{def:hrank_neutral_sector_convention} and is zero in the spectral-flow
calculation.

If $p_*\in\mathcal C_{k,j}$, then one of the scalar parameters
$m_{k,j,+}(p)$, $m_{k,j,-}(p)$, $m_{-k,j,+}(p)$, or $m_{-k,j,-}(p)$ vanishes.
At such a crossing, the two reflected zero blocks together carry the real
$O(2)$-representation $\rho_k$. If $p_*\in\mathcal C_{0,j}$, then the rotating
zero-mode crossing space carries a copy of $\mathbf 1\oplus\det$.

Thus the higher-rank regularized APS construction reduces the spectral-flow computation
to a finite sum of scalar crossing contributions, one for each rotating weight
$\mu_j$ and each Fourier block $k$.

\section{The higher-rank \texorpdfstring{$RO(O(2))$}{RO(O(2))}-valued spectral-flow formula}
\label{sec:hrank_ROO2_spectral_flow_formula}

We now compute the representation-valued spectral flow of the regularized APS family
constructed in~\autoref{sec:hrank_separated_blocks_regularized_APS}. The formula is
obtained by summing the local crossing contributions of the rotating weighted blocks;
the stationary neutral sector is fixed as in
Definition~\ref{def:hrank_neutral_sector_convention}.

Throughout this section, we work in the periodic spin sector unless explicitly stated otherwise. Thus
\begin{equation}\label{eq:hrank_formula_periodic_lattice} \mathcal K=\mathbb Z, \qquad \mathcal K_+=\{k\in\mathbb Z:\ k\geq1\}.
\end{equation}
We also keep the normal form notation
\begin{equation}\label{eq:hrank_formula_Jn_normal_form_recall} J_n=\mu_1J\oplus\cdots\oplus\mu_{r}J\oplus0_z, \qquad \mu_j>0, \qquad 2r+z=n.
\end{equation}

\subsection{Local signs and regular crossings}
\label{subsec:hrank_local_signs_regular_crossings}

Let
\begin{equation}\label{eq:hrank_formula_family} p\longmapsto D_{p,\text{APS,reg}}^{E_{\mathbb C}^{(n)}}
\end{equation}
be the higher-rank regularized APS family. By
Proposition~\ref{prop:sep_wellposed_higher_rank_reg_APS}, this is a continuous path of
self-adjoint Fredholm operators. Under the reflection-compatible neutral convention,
it is pointwise $O(2)$-equivariant.

A crossing time is a parameter $p_*\in(0,1)$ for which
\begin{equation}\label{eq:hrank_crossing_time_def} \ker D_{p_*,\text{APS,reg}}^{E_{\mathbb C}^{(n)}}\neq0.
\end{equation}
Let the path be continuously differentiable in the relevant finite-dimensional crossing charts. The
crossing form at $p_*$ is the quadratic form on
$\ker D_{p_*,\text{APS,reg}}^{E_{\mathbb C}^{(n)}}$ obtained by differentiating
the locally trivialized family, as in \eqref{eq:hrank_operator_crossing_form}. The
crossing is called regular if this crossing form is nondegenerate.

In a separated scalar block with effective parameter $m=m_{k,j,\sigma}(p)$, the
local scalar detector is the transfer determinant $F(m)$. Under the admissibility
condition from Definition~\ref{def:hrank_admissible_scalar_crossing}, $F(0)=0$
and $F'(0)\neq0$. After the local graph trivialization of the regularized APS
domains, the scalar crossing form is represented by the derivative of the transfer
determinant along the path:
\begin{equation}\label{eq:hrank_scalar_crossing_form_via_F}
\left.
\frac{d}{dp}
\right|_{p=p_*}
F(m_{k,j,\sigma}(p))
=
F'(0)\,m'_{k,j,\sigma}(p_*).
\end{equation}
Thus the scalar crossing sign is the sign of the actual $m$-motion through zero:
\begin{equation}\label{eq:hrank_scalar_epsilon_definition}
\epsilon_{k,j,\sigma,p_*}^{\text{sc}}
=
\operatorname{sign}
\left(
F'(0)\,m'_{k,j,\sigma}(p_*)
\right).
\end{equation}
This convention is the one directly induced by the operator crossing form
\eqref{eq:hrank_operator_crossing_form}; it is not an auxiliary sign convention
coming from an $A$-coordinate.

For a nonzero Fourier block $\{k,-k\}$, with $k\geq1$, and a rotating two-plane
$j$, the displayed sign $\epsilon_{k,j,p_*}$ in the $RO(O(2))$-valued
formula is defined by the scalar branch that vanishes at $p_*$. More explicitly,
if
\begin{equation}\label{eq:hrank_epsilon_nonzero_plus_branch} A_{p_*}=-\frac{k}{\mu_j},
\end{equation}
then the vanishing representative branch is
$m_{k,j,+}(p)=k+\mu_jA_p$, and we define
\begin{equation}\label{eq:hrank_nonzero_epsilon_plus_definition} \epsilon_{k,j,p_*} = \operatorname{sign} \left( F'(0)\,m'_{k,j,+}(p_*) \right).
\end{equation}
If
\begin{equation}\label{eq:hrank_epsilon_nonzero_minus_branch} A_{p_*}=\frac{k}{\mu_j},
\end{equation}
then the vanishing representative branch is
$m_{k,j,-}(p)=k-\mu_jA_p$, and we define
\begin{equation}\label{eq:hrank_nonzero_epsilon_minus_definition} \epsilon_{k,j,p_*} = \operatorname{sign} \left( F'(0)\,m'_{k,j,-}(p_*) \right).
\end{equation}
Equivalently, $\epsilon_{k,j,p_*}$ is the sign of the crossing form on the
corresponding reflection-paired $O(2)$-crossing block, computed through the
representative scalar branch whose effective parameter actually crosses zero.

\begin{remark}[Representative branch, reflected coordinate, and chart consistency]
\label{rem:representative_branch_reflected_coordinate}
For $k\geq1$, the coefficient $\epsilon_{k,j,p_*}$ is attached to the
whole reflection-paired $O(2)$-crossing block, not to two independently counted
scalar crossings.  For example, if
$A_{p_*}=-\frac{k}{\mu_j}$,
we use the representative branch
$q(p)=m_{k,j,+}(p)=k+\mu_jA_p$.
The reflected partner is $(-k,j,-)$.  Written in its raw scalar coordinate, its
parameter is
$\widetilde q(p)=m_{-k,j,-}(p)=-k-\mu_jA_p=-q(p)$.
Thus, the raw reflected coordinate has the opposite $m$-slope.  This does not
mean that the reflected partner contributes an independent opposite signed crossing.
The crossing sign is the sign of the derivative of the scalar transfer determinant
in a fixed local scalar chart:
\begin{equation}
\operatorname{sign}
\left(
\left.
\frac{d}{dp}
\right|_{p=p_*}
F(q(p))
\right).
\end{equation}

If the reflected coordinate $\widetilde q=-q$ is used instead, then the transfer determinant must be written in the reflected chart as
$\widetilde F(\widetilde q)=F(-\widetilde q)$. Here $\widetilde{F}$ denotes the same scalar transfer determinant expressed as
a function of the reflected coordinate $\widetilde q$; it is the pullback of $F$ under the coordinate change $q=-\widetilde q$. Consequently, $\widetilde F'(0)=-F'(0)$ and $\widetilde q'(p_*)=-q'(p_*)$, and hence $\widetilde F'(0)\widetilde q'(p_*)=F'(0)q'(p_*)$.
So, reversing the scalar coordinate does not reverse the invariant crossing-form sign when the transfer-determinant chart is transformed consistently.

Equivalently, if $u$ is a crossing vector in the representative branch and $\mathcal U_{\mathbf r}$ denotes the lifted reflection, reflection equivariance of the locally trivialized family gives
\begin{equation}
\Gamma_{p_*}(\mathcal U_{\mathbf r}u,\mathcal U_{\mathbf r}u)
=
\Gamma_{p_*}(u,u).
\end{equation}
Thus, the reflected half has the same crossing-form sign as the representative half
when the crossing form is viewed on the real $O(2)$-paired block.  The pair
$(k,j,+)\oplus(-k,j,-)$ therefore contributes a single signed copy of
$\rho_k$, with coefficient computed from the chosen representative branch.
\end{remark}

For the rotating zero mode of the $j$-th two-plane, the crossing eigenspace
decomposes as a copy of $\mathbf 1\oplus\det$. The chosen zero-block
regularization fixes which scalar zero branch corresponds to the $\mathbf 1$-line
and which corresponds to the $\det$-line. With that convention, we define
\begin{equation}\label{eq:hrank_zero_local_signs_def}
\epsilon_{\mathbf{1},j,p_*}
=
\operatorname{sign}
\left(
F'(0)\,m'_{0,j,+}(p_*)
\right),
\qquad
\epsilon_{\det,j,p_*}
=
\operatorname{sign}
\left(
F'(0)\,m'_{0,j,-}(p_*)
\right),
\end{equation}
where
\begin{equation}\label{eq:hrank_zero_branch_parameters_for_epsilon} m_{0,j,+}(p)=\mu_jA_p, \qquad m_{0,j,-}(p)=-\mu_jA_p.
\end{equation}
If a different admissible zero-block convention swaps the two reflection lines, then
the labels $\mathbf 1$ and $\det$ in
\eqref{eq:hrank_zero_local_signs_def} must be swapped accordingly. In all cases,
the signs are the actual crossing-form signs on the corresponding one-dimensional
$O(2)$-representation blocks.

\begin{remark}[What we mean by a block sign]
\label{rem:hrank_block_sign_meaning}
The notation $\epsilon_{k,j,p_*}$ does not mean that the crossing kernel is
one-dimensional. It means that, on the relevant $O(2)$-irreducible crossing block,
the regular crossing form has the coefficient sign defined above. 
\eqref{eq:hrank_scalar_crossing_form_via_F} shows that the coefficient sign is computed from the actual scalar parameter $m$ crossing zero. If several blocks cross simultaneously, we either treat the total crossing form directly, or we impose the
genericity condition that the crossing blocks are separated. In either case the formula below is a sum of the corresponding representation-valued local contributions.
\end{remark}

\begin{remark}[Sign normalization]
\label{rem:hrank_sign_normalization}
The signs $\epsilon_{k,j,p_*}$, $\epsilon_{\mathbf{1},j,p_*}$, and
$\epsilon_{\det,j,p_*}$ are the signs of the regular crossing form on the
corresponding $O(2)$-representation crossing blocks. In the scalar block
description, this is the same as the sign of $F'(0)m'(p_*)$, where $m(p)$ is
the effective scalar parameter of the branch that crosses zero.

For an affine path $A_p=A+\nu p$, the raw scalar slopes are
\begin{equation}\label{eq:hrank_affine_raw_m_slopes_for_signs} m'_{k,j,+}(p_*)=\mu_j\nu, \qquad m'_{k,j,-}(p_*)=-\mu_j\nu.
\end{equation}
Thus, at a crossing with $A_{p_*}=-k/\mu_j$, one has
\begin{equation}\label{eq:hrank_affine_plus_sign_from_m} \epsilon_{k,j,p_*} = \operatorname{sign}\left(F'(0)\mu_j\nu\right),
\end{equation}
whereas at a crossing with $A_{p_*}=k/\mu_j$, one has
\begin{equation}\label{eq:hrank_affine_minus_sign_from_m} \epsilon_{k,j,p_*} = \operatorname{sign}\left(-F'(0)\mu_j\nu\right).
\end{equation}
In particular, if the scalar regularization is oriented so that $F'(0)>0$, then
the $m_{k,j,+}$-branch contributes with $\operatorname{sign}(\nu)$, while
the $m_{k,j,-}$-branch contributes with $-\operatorname{sign}(\nu)$. This
is same as the convention coming from the operator crossing form
\eqref{eq:hrank_operator_crossing_form}.
\end{remark}

\subsection{Boundary-to-operator crossing correspondence in higher rank}
\label{subsec:hrank_boundary_operator_correspondence}

The scalar regularized APS boundary family from
\autoref{subsec:hrank_scalar_regularized_APS} was chosen so that a scalar block
crosses exactly when its effective parameter $m$ vanishes. We state the consequence
for the higher-rank family.

\begin{lemma}[Higher-rank boundary-to-operator crossing correspondence]
\label{lem:hrank_boundary_to_operator_crossing}
Using the higher-rank regularized APS convention of
\autoref{sec:hrank_separated_blocks_regularized_APS}. In a rotating block
$(k,j,\sigma)$, the regularized scalar boundary-value problem has a kernel when
\begin{equation}\label{eq:hrank_scalar_kernel_condition} m_{k,j,\sigma}(p)=0.
\end{equation}
If the scalar zero is simple, namely if
\begin{equation}\label{eq:hrank_simple_scalar_zero} \frac{d}{dp}m_{k,j,\sigma}(p_*)\neq0,
\end{equation}
then the corresponding operator crossing is regular.

Consequently, for $k\geq1$, the nonzero reflection-paired block belonging to the $j$-th rotating two-plane crosses at the points
\begin{equation}\label{eq:hrank_nonzero_crossing_equation} A_{p_*}=\frac{k}{\mu_j} \qquad\text{or}\qquad A_{p_*}=-\frac{k}{\mu_j}.
\end{equation}
At such a crossing, the crossing eigenspace carries the real $O(2)$-representation
$\rho_k$. For the rotating zero Fourier mode of the $j$-th two-plane, the crossing
equation is
\begin{equation}\label{eq:hrank_zero_crossing_equation} A_{p_*}=0,
\end{equation}
and the corresponding crossing eigenspace carries a copy of $\mathbf 1\oplus\det$.
The neutral sector contributes no moving crossings under
Definition~\ref{def:hrank_neutral_sector_convention}.
\end{lemma}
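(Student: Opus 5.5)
The plan is to reduce everything to the scalar results already established, block by block, and then identify the representation carried by each crossing space. First, by the separated-block decomposition of \autoref{subsec:hrank_separated_scalar_blocks_boundary_operators} and the construction of the global domain \eqref{eq:sep_global_reg_APS_domain}, the regularized operator $D_{p,\APSreg}^{E_{\mathbb C}^{(n)}}$ is a direct sum over Fourier modes $k$ and fiber eigenvectors $\xi_{j,\pm}$, $\eta_a$ of scalar radial operators. The boundary conditions are imposed blockwise: the scalar regularized lines on the representative blocks $(k,j,\sigma)$, their reflected images on the partners, the diagonal zero-block lines of Lemma~\ref{lem:hrank_admissible_zero_block_regularization}, and the neutral graph condition. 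Hence the kernel is the direct sum of the block kernels. On a rotating block the relevant problem is exactly $D^{\text{reg}}_{m}$ with $m=m_{k,j,\sigma}(p)$. Lemma~\ref{lem:transfer_determinant_detects_zero_modes}, together with admissibility (Lemma~\ref{lem:hrank_scalar_boundary_family_admissible}), shows that this kernel is nonzero, within the regularization window, exactly when $F(m_{k,j,\sigma}(p))=0$, that is, when $m_{k,j,\sigma}(p)=0$. Outside the window the domain is ordinary APS, and there $F$ never vanishes: it is a nonzero multiple of $e^{-mL_f}$ or $e^{mL_f}$. The regularity claim under \eqref{eq:hrank_simple_scalar_zero} is then precisely Corollary~\ref{cor:hrank_scalar_boundary_to_operator_crossing}.

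Second, I will solve the crossing equations. The equation $k+\mu_jA_{p_*}=0$ or $k-\mu_jA_{p_*}=0$ gives $A_{p_*}=\mp k/\mu_j$, using $\mu_j>0$. The partner block $(-k,j,\mp)$ has parameter $-m_{k,j,\pm}$, so it vanishes at the same $p_*$. Its kernel is the image under $\mathcal U_{\mathbf r}$ of the representative kernel, by Proposition~\ref{prop:hrank_reflection_equivariance} and the fact that its domain line is defined as the reflected line. I will then identify the representation. The crossing space is spanned by $e^{ik\theta}(\cdot)\otimes\xi_{j,\sigma}$ and its reflection, which carries $e^{-ik\theta}$. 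Rotations act by $e^{\pm ik\theta}$ and reflection swaps the two lines, so after the regrouping convention of Remark~\ref{rem:representation_ring_convention} the pair is $\rho_k$.

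For $k=0$, both branches $\pm\mu_jA_p$ vanish exactly at $A_{p_*}=0$. The diagonal zero-block choice puts the $+$ branch on the reflection-even line and the $-$ branch on the reflection-odd line, with rotations acting trivially, so the crossing space is $\mathbf 1_j\oplus\det_j$. Each line is a scalar admissible problem, so the same transfer-determinant argument applies. For the neutral blocks, $m=k$ is constant: if $k\neq0$ the ordinary APS condition is fixed and $F(k)\neq0$, so there is no kernel, and if $k=0$ the graph condition \eqref{eq:neutral_global_completion} kills the zero solution, as shown in Definition~\ref{def:hrank_neutral_sector_convention}.

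The main obstacle is justifying that the scalar kernel detection is legitimately applied blockwise. The kernel of the full Fredholm operator must be the direct sum of block kernels. This needs the domain to split orthogonally along the decomposition, which holds because every projector in the construction is block diagonal. It also needs the zero-energy solutions to be exactly the explicit ones in \eqref{eq:hrank_scalar_zero_energy_solutions}, which follows since the kernel condition is $\lambda=0$. A second point I will handle with care is that ``regular'' means nondegeneracy of the operator crossing form $\Gamma_{p_*}$, not merely $\frac{d}{dp}F\neq0$. For this I will invoke the identification \eqref{eq:hrank_scalar_crossing_form_via_F} of the restricted crossing form with $F'(0)m'(p_*)$ up to a positive factor. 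On the reflected pair, I will use $\mathcal U_{\mathbf r}$-invariance of $\Gamma_{p_*}$ from Remark~\ref{rem:representative_branch_reflected_coordinate} to conclude that the form is nondegenerate on the whole $\rho_k$ block.
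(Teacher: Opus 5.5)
Your proposal is correct and follows essentially the same route as the paper. You reduce the representative blocks to the scalar problem via the transfer-determinant lemma and Corollary~\ref{cor:hrank_scalar_boundary_to_operator_crossing}, and obtain the partner $(-k,j,\mp)$ by applying $\mathcal U_{\mathbf r}^{E_{\mathbb C}^{(n)}}$ to the reflected boundary line. You then solve $k\pm\mu_jA_{p_*}=0$, read off $\rho_k$ and $\mathbf 1\oplus\det$ from the reflection action, and dismiss the neutral sector by $p$-independence and the graph condition. Your extra checks (the kernel splitting into block kernels, nonvanishing of $F$ outside the window, and nondegeneracy on the whole $\rho_k$ block via reflection invariance of $\Gamma_{p_*}$) only make explicit what the paper's shorter proof leaves implicit.
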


\begin{proof}
By Corollary~\ref{lem:hrank_scalar_boundary_to_operator_crossing}, the scalar regularized
boundary-value problem in a chosen representative block has a kernel when the representative scalar parameter $m$ vanishes, and the crossing is regular whenever that scalar parameter vanishes but its $p$-derivative must not vanish. For the reflected partner, the same conclusion follows by applying the lifted reflection
$\mathcal U_{\mathbf r}^{E_{\mathbb C}^{(n)}}$, since the reflected boundary condition was
defined as the reflected image of the representative regularization. Thus, the
representative scalar calculation gives the crossing condition, and equivariance gives
the paired reflected crossing. The neutral blocks are treated separately, since their
parameters are independent of $p$.

In a rotating block, the effective parameters are
$m_{k,j,+}(p)=k+\mu_jA_p$ and $m_{k,j,-}(p)=k-\mu_jA_p$. Thus a scalar kernel occurs
exactly when one of these quantities vanishes. For the paired Fourier block
$\{k,-k\}$, the four rotating parameters are
\begin{equation}\label{eq:hrank_four_block_parameters_j}
m_{k,j,+}(p)=k+\mu_jA_p,
\qquad
m_{k,j,-}(p)=k-\mu_jA_p,
\qquad
m_{-k,j,+}(p)=-k+\mu_jA_p,
\qquad
m_{-k,j,-}(p)=-k-\mu_jA_p.
\end{equation}
A zero of one of these parameters occurs exactly when
$A_{p_*}=k/\mu_j$ or $A_{p_*}=-k/\mu_j$. At such a crossing, the two zero blocks
are exchanged by the lifted reflection. Therefore their real crossing space is the
reflection-paired $O(2)$-representation $\rho_k$.

For $k=0$, the moving rotating parameters are $\mu_jA_p$ and $-\mu_jA_p$. Hence
the rotating zero-mode crossing equation is $A_{p_*}=0$. The reflection action on
the corresponding real crossing space has one even line and one odd line, so the
$O(2)$-representation is $\mathbf 1\oplus\det$. The neutral parameters are
$m_{k,a,0}(p)=k$, independent of $p$. Therefore they do not produce moving
crossings. The neutral $k=0$ sector is handled by the fixed convention from
Definition~\ref{def:hrank_neutral_sector_convention}, and hence contributes no moving
spectral-flow term.
\end{proof}

\subsection{Block decomposition of the higher-rank family}
\label{subsec:hrank_block_decomposition}

We next organize the higher-rank family into $O(2)$-invariant blocks. The $O(2)$
representation labels come from the angular Fourier modes. The higher-rank fiber only
adds multiplicities and weights.

\begin{lemma}[Periodic higher-rank block structure]
\label{lem:hrank_periodic_block_structure}
We work in the periodic spin sector. Then the higher-rank regularized APS family decomposes as an orthogonal direct sum of $O(2)$-invariant block paths:
\begin{equation}\label{eq:hrank_operator_block_decomposition}
D_{p,\text{APS,reg}}^{E_{\mathbb C}^{(n)}}
\cong
D_{p,\text{APS,reg}}^{\text{rot},(0)}
\oplus
D_{p,\text{APS,reg}}^{\text{neu}}
\oplus
\bigoplus_{k\in\mathcal K_+}
D_{p,\text{APS,reg}}^{\text{rot},\{k,-k\}} .
\end{equation}
Here $D_{p,\text{APS,reg}}^{\text{rot},\{k,-k\}}$ is the rotating part of the
reflection-paired nonzero Fourier block, $D_{p,\text{APS,reg}}^{\text{rot},(0)}$
is the rotating zero Fourier block, and $D_{p,\text{APS,reg}}^{\text{neu}}$ is the
neutral sector.

In other words, for each $k\geq1$ and each rotating block $j$, a regular crossing
in the $j$-th block of $D_{p,\text{APS,reg}}^{\text{rot},\{k,-k\}}$ carries
the real $O(2)$-representation $\rho_k$. For each $j$, a regular crossing in the rotating
zero Fourier block carries the real $O(2)$-representation $\mathbf 1\oplus\det$. The neutral
sector contributes no moving crossing term.
\end{lemma}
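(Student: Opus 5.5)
The plan is to build the decomposition from the Hilbert space first, then check that each structure used to define the family preserves the summands. Write $L^2(M;S\otimes E_{\mathbb C}^{(n)})$ as the orthogonal sum of $\theta$-Fourier modes $k\in\mathbb Z$ tensored with the complexified fiber decomposition $\mathbb C^n=\bigoplus_j(\mathbb C\xi_{j,+}\oplus\mathbb C\xi_{j,-})\oplus\ker J_n$ from \autoref{subsec:hrank_complex_weight_blocks}. Choose the eigenvectors $\xi_{j,\pm}$ orthonormal, which is possible since $iJ_n$ is Hermitian. Then define three kinds of closed subspaces. For $k\ge1$, let $H^{\{k,-k\}}$ be the span of modes $\pm k$ tensored with $\bigoplus_j\operatorname{span}\{\xi_{j,+},\xi_{j,-}\}$. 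Let $H^{(0)}$ be the $k=0$ mode tensored with the same rotating span. Let $H^{\mathrm{neu}}$ be all modes tensored with $\ker J_n$. These are mutually orthogonal and exhaust the space.

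The first step is to show that each summand is $O(2)$-invariant. Rotations act diagonally on Fourier modes, so they preserve every summand. For the reflection $\mathcal U_{\mathbf r}^{E_{\mathbb C}^{(n)}}=U_{\mathbf r}C_n\mathbf r^*$:
\begin{itemize}
\item $\mathbf r^*$ sends $e^{ik\theta}$ to $e^{-ik\theta}$.
\item $C_n$ preserves $\ker J_n$, because $C_nJ_n=-J_nC_n$.
\item $C_n$ maps the $\pm i\mu_j$ eigenspaces to the $\mp i\mu_j$ eigenspaces. In the normal form \eqref{eq:hrank_Cn_normal_form} it preserves each rotating two-plane.
\end{itemize}
Hence the reflection preserves each $H^{\{k,-k\}}$, $H^{(0)}$ and $H^{\mathrm{neu}}$.

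The second step is to show that the operator and its domain split compatibly. The differential operator $D_p^{E_{\mathbb C}^{(n)}}$ preserves each summand: by \eqref{eq:hrank_rotating_block_action} and \eqref{eq:hrank_neutral_block_action} it acts on each separated block through the scalar radial system \eqref{eq:sep_scalar_radial_system}. For the domain \eqref{eq:sep_global_reg_APS_domain}, the pieces of the boundary condition sit in the following summands:
\begin{itemize}
\item $P_{t,p}^{(n),\mathrm{APS,reg}}$ is by construction a direct sum of projectors on the summands. These are $E^{+,\mathrm{reg}}_{t,k,p}$ from \eqref{eq:sep_nonzero_block_positive_space} inside $H^{\{k,-k\}}$, $P^{(0,j),\APSreg}_{t,p}$ inside $H^{(0)}$, and ordinary APS projectors elsewhere.
\item The neutral graph condition \eqref{eq:neutral_global_completion} involves only $H^{\mathrm{neu}}$.
\end{itemize}
So the domain equals the direct sum of its intersections with the summands, and the restricted operators give the stated decomposition \eqref{eq:hrank_operator_block_decomposition}. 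Each restricted operator is self-adjoint because the Green form splits orthogonally and each boundary piece is Lagrangian. Each block path is gap-continuous by Proposition~\ref{prop:sep_wellposed_higher_rank_reg_APS} restricted to the block.

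The third step identifies the representation on each crossing space, using Lemma~\ref{lem:hrank_boundary_to_operator_crossing}.
\begin{itemize}
\item In $H^{\{k,-k\}}$, at a crossing for the $j$-th plane, the kernel is spanned by a representative zero mode $u$ in the vanishing branch and its reflection $\mathcal U_{\mathbf r}u$, which lies in mode $-k$. Rotations act by $e^{\pm ik\phi}$ and the reflection swaps the two. After regrouping conjugate blocks as in Remark~\ref{rem:representation_ring_convention}, this is $\rho_k$.
\item At $k=0$, rotations act trivially, and the reflection splits the $j$-th crossing space into even and odd lines, giving $\mathbf 1\oplus\det$.
\item The neutral sector has no crossings by Definition~\ref{def:hrank_neutral_sector_convention}.
\end{itemize}

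I expect the main obstacle to be the $\rho_k$ identification. One has to check that the reflected partner's kernel vector genuinely lies in the regularized domain at the same $p_*$. This holds because the partner's boundary line was defined as the reflected image in \eqref{eq:sep_nonzero_block_positive_space}, and the family is equivariant. One also has to check that the partner is not an independent second crossing, which is the point of Remark~\ref{rem:representative_branch_reflected_coordinate}. I would argue this directly from equivariance: $\mathcal U_{\mathbf r}$ maps $\ker D_{p_*}$ restricted to mode $k$ isomorphically onto the kernel restricted to mode $-k$.
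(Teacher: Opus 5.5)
Your proposal is correct and follows essentially the same route as the paper's proof. Rotation invariance of the operator and boundary projectors gives the Fourier splitting, the reflection pairs the $k$ and $-k$ sectors (leaving $k=0$ self-paired), and the crossing representations are read off from the angular weights; the paper uses the real plane spanned by $\cos(k\theta),\sin(k\theta)$, where you use the complex weights $\pm k$ exchanged by $\mathcal U_{\mathbf r}$ (the complexification of $\rho_k$), and your version is more explicit in checking that $P_{t,p}^{(n),\mathrm{APS,reg}}$ and the neutral graph condition split along the summands so that the domain decomposes, a step the paper only asserts.
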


\begin{proof}
The coefficients of the bulk operator and of the regularized APS boundary
conditions are invariant under rotations of the circle.  Equivalently, the
bulk operator and the boundary projectors commute with the $S^1$-action.
Hence the family preserves the Fourier-mode decomposition.  The reflection
then pairs the $k$- and $-k$-blocks, so the natural $O(2)$-invariant
blocks are the zero-mode block and the paired blocks indexed by
$\{k,-k\}$ for $k\geq1$.

For $k\geq1$, the base reflection sends the $k$-th Fourier sector to the $(-k)$-th
Fourier sector. Therefore the direct sum of the two sectors $\{k,-k\}$ is invariant
under the full $O(2)$-action. On the real angular plane spanned by
$\cos(k\theta)$ and $\sin(k\theta)$, rotations act by the standard rotation of
angle $k\varphi$, and reflection acts by changing the sign of the sine direction.
This is precisely the irreducible real $O(2)$-representation $\rho_k$. The fiber
label $j$ does not change this angular $O(2)$-representation; it only labels which rotating
two-plane of the twisting bundle produced the crossing.

For $k=0$, the Fourier mode is fixed by reflection. In each rotating two-plane, the
two complex blocks with parameters $\mu_jA_p$ and $-\mu_jA_p$ combine into a real
two-dimensional crossing space. The restricted reflection has one even and one odd
real line, so the $O(2)$-representation is $\mathbf 1\oplus\det$. Finally, the neutral sector
is fixed by a $p$-independent convention and contributes no moving crossings.
\end{proof}

\subsection{The higher-rank formula}
\label{subsec:hrank_main_formula}

We now state the representation-valued formula. For $k\geq1$ and
$j\in\{1,\ldots,r\}$, let
\begin{equation}\label{eq:hrank_Ckj_def}
\mathcal C_{k,j}
=
\left\{
p_*\in(0,1):
A_{p_*}=\frac{k}{\mu_j}
\text{ or }
A_{p_*}=-\frac{k}{\mu_j}
\right\}.
\end{equation}
For the rotating zero block of the $j$-th two-plane, let
\begin{equation}\label{eq:hrank_C0j_def} \mathcal C_{0,j} = \left\{ p_*\in(0,1): A_{p_*}=0 \right\}.
\end{equation}

\begin{center}\fbox{\begin{minipage}{0.94\textwidth}\textbf{Hypotheses for Theorem~\ref{thm:hrank_ROO2_spectral_flow}.} We fix one admissible regularized APS convention throughout: the scalar regularized APS boundary family of Definition~\ref{def:hrank_admissible_scalar_crossing}, the rotating zero-block regularization of Definition~\ref{def:admissible_zero_block_regularization}, and the fixed neutral-sector convention of Definition~\ref{def:hrank_neutral_sector_convention}. We also impose endpoint invertibility, regular rotating crossings, and the separated or block-diagonal convention for simultaneous crossings. With these choices, Proposition~\ref{prop:sep_wellposed_higher_rank_reg_APS} supplies the required continuous path of $O(2)$-equivariant self-adjoint Fredholm operators; the theorem then computes its representation-valued spectral flow by crossing forms.
\end{minipage}}
\end{center}

\begin{theorem}[Higher-rank blockwise \texorpdfstring{$RO(O(2))$}{RO(O(2))}-valued spectral flow]
\label{thm:hrank_ROO2_spectral_flow}
In the periodic spin sector, let the endpoints $p=0,1$ be invertible
for the higher-rank regularized APS family. Let the scalar regularized APS boundary family be admissible in the sense of
Definition~\ref{def:hrank_admissible_scalar_crossing}. Let the rotating zero-block regularization be admissible in the sense of
Definition~\ref{def:admissible_zero_block_regularization}, and require all rotating-block crossings to be regular. If simultaneous crossings occur, we require either that they have been separated by the chosen reflection-compatible genericity convention, or that the total crossing form is nondegenerate and block-diagonal with respect to the block decomposition used in \autoref{sec:hrank_separated_blocks_regularized_APS}.
Finally, we impose the fixed neutral-sector convention of
Definition~\ref{def:hrank_neutral_sector_convention}.

Then, in the real-representation convention of
Remark~\ref{rem:representation_ring_convention}, the $RO(O(2))$-valued spectral flow of
the higher-rank regularized APS family is
\begin{equation}\label{eq:hrank_ROO2_formula}
\begin{split}
\operatorname{sf}_{O(2)}
\left(
D_{p,\text{APS,reg}}^{E_{\mathbb C}^{(n)}}
\right)
=
&
\sum_{j=1}^{r}
\sum_{k\in\mathcal K_+}
\sum_{p_*\in\mathcal C_{k,j}}
\epsilon_{k,j,p_*}\,[\rho_k]
\\
&
+
\sum_{j=1}^{r}
\sum_{p_*\in\mathcal C_{0,j}}
\left(
\epsilon_{\mathbf{1},j,p_*}[\mathbf 1]
+
\epsilon_{\det,j,p_*}[\det]
\right).
\end{split}
\end{equation}
The neutral sector contributes no term. Applying the dimension homomorphism
$RO(O(2))\to\mathbb Z$ gives the ordinary spectral-flow formula
\begin{equation}\label{eq:hrank_integer_sf_formula}
\begin{split}
\operatorname{sf}
\left(
D_{p,\text{APS,reg}}^{E_{\mathbb C}^{(n)}}
\right)
=
&
2
\sum_{j=1}^{r}
\sum_{k\in\mathcal K_+}
\sum_{p_*\in\mathcal C_{k,j}}
\epsilon_{k,j,p_*}
\\
&
+
\sum_{j=1}^{r}
\sum_{p_*\in\mathcal C_{0,j}}
\left(
\epsilon_{\mathbf{1},j,p_*}
+
\epsilon_{\det,j,p_*}
\right).
\end{split}
\end{equation}
\end{theorem}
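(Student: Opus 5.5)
The argument is a reduction to the crossing-form computation of equivariant spectral flow. Proposition~\ref{prop:sep_wellposed_higher_rank_reg_APS} supplies a gap-continuous path of $O(2)$-equivariant self-adjoint Fredholm operators with invertible endpoints. For such a path with only regular crossings, $\operatorname{sf}_{O(2)}$ is the sum over the finitely many crossing times $p_*$ of the equivariant signature of $\Gamma_{p_*}$. This signature is the element of $RO(O(2))$ obtained by splitting $\ker D_{p_*,\APSreg}$ into $O(2)$-isotypic components and recording, on each component, the signature of the restricted (invariant) crossing form divided by the dimension of the irreducible, times its class. So the plan is: (1) establish that the crossing set is finite and the crossings are regular; (2) localize the crossing form to the block decomposition; (3) identify each local equivariant signature; (4) sum and apply $\dim$.

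For step (1), only finitely many Fourier labels $k$ ever enter the regularization window, as in the proof of Proposition~\ref{prop:sep_wellposed_higher_rank_reg_APS}. Outside the windows each block carries ordinary APS data with $m\neq0$, and Lemma~\ref{lem:transfer_determinant_detects_zero_modes} shows there is no kernel there. Inside a window, admissibility (Lemma~\ref{lem:hrank_scalar_boundary_family_admissible}) leaves only the zero $m=0$. Hence the crossings are exactly the sets $\mathcal C_{k,j}$ and $\mathcal C_{0,j}$, and by Lemma~\ref{lem:hrank_boundary_to_operator_crossing} they are regular. The neutral sector has no kernel for any $p$ by Definition~\ref{def:hrank_neutral_sector_convention}, together with $k\neq0$ giving ordinary invertible APS data; it therefore contributes nothing.

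For step (2), Lemma~\ref{lem:hrank_periodic_block_structure} shows that the operator, the boundary projectors, and hence the locally trivialized family are block-diagonal with respect to the $O(2)$-invariant decomposition \eqref{eq:hrank_operator_block_decomposition}. So $\Gamma_{p_*}$ is the orthogonal sum of its restrictions to the block kernels; for simultaneous crossings this is exactly the block-diagonal hypothesis. For step (3), at $p_*\in\mathcal C_{k,j}$ with $k\geq1$, the kernel in block $(\{k,-k\},j)$ is spanned by the representative zero mode $u$ and $\mathcal U_{\mathbf r}u$. After regrouping as in Remark~\ref{rem:representation_ring_convention}, this kernel is a copy of $\rho_k$. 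The form is invariant, and $\Gamma(\mathcal U_{\mathbf r}u,\mathcal U_{\mathbf r}u)=\Gamma(u,u)$ by Remark~\ref{rem:representative_branch_reflected_coordinate}, so it is definite with sign $\operatorname{sign}\Gamma(u,u)$. The scalar computation \eqref{eq:hrank_scalar_crossing_form_via_F} gives $\operatorname{sign}\Gamma(u,u)=\epsilon_{k,j,p_*}$, so the contribution is $\epsilon_{k,j,p_*}[\rho_k]$. At $p_*\in\mathcal C_{0,j}$, the diagonal zero-block choice of Lemma~\ref{lem:hrank_admissible_zero_block_regularization} makes $\Gamma$ diagonal on $\mathbf 1_j\oplus\det_j$ with entries $\epsilon_{\mathbf 1,j,p_*}$ and $\epsilon_{\det,j,p_*}$. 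Summing over $p_*$, $j$, and $k$ gives \eqref{eq:hrank_ROO2_formula}. Applying the ring homomorphism $\dim$, with $\dim\rho_k=2$, then gives \eqref{eq:hrank_integer_sf_formula}, consistent with the remark that $\dim$ recovers the ordinary spectral flow.

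The main obstacle is the identification in step (3) of the sign of the genuine operator crossing form $\Gamma_{p_*}$ with $\operatorname{sign}(F'(0)m'(p_*))$. This is needed because the boundary condition itself moves, so $\Gamma$ must be computed after a graph trivialization of the moving Lagrangian domain. I would first reduce to the scalar two-point problem. There I would invoke the Maslov-index/spectral-flow equality for Lagrangian boundary data (Boo{\ss}-Bavnbek--Furutani, Kirk--Lesch), which expresses the local crossing as the local Maslov intersection of $\Phi_m\ell_0^{\mathrm{dom}}(m)$ with $\ell_T^{\mathrm{dom}}(m)$. A simple zero of $F$ then gives intersection number $\operatorname{sign}(F'(0)m')$, once the orientation normalization is fixed as in Appendix~\ref{app:maslov_reflected_frame_check}. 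Everything else is bookkeeping with results already established.
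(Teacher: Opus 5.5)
Your proposal is correct and is essentially the paper's proof: Proposition~\ref{prop:sep_wellposed_higher_rank_reg_APS} gives the equivariant self-adjoint Fredholm path, Lemma~\ref{lem:hrank_periodic_block_structure} gives the $O(2)$-block splitting, Lemma~\ref{lem:hrank_boundary_to_operator_crossing} locates the crossings and identifies the kernels as $\rho_k$ or $\mathbf 1\oplus\det$, and the local contributions are summed before applying the dimension map. The only difference is emphasis: the paper takes $\operatorname{sign}\Gamma_{p_*}=\operatorname{sign}(F'(0)m'(p_*))$ as already encoded in \eqref{eq:hrank_scalar_crossing_form_via_F}, whereas you plan to justify it via the Maslov index; an implicit-function computation on the scalar two-point problem, giving $\lambda'(p_*)=F'(0)m'(p_*)/T$, would also do this and avoids the orientation bookkeeping.
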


\begin{proof}
By Proposition~\ref{prop:sep_wellposed_higher_rank_reg_APS}, the path
$p\mapsto D_{p,\text{APS,reg}}^{E_{\mathbb C}^{(n)}}$ is a continuous path of
$O(2)$-equivariant self-adjoint Fredholm operators under the reflection-compatible
neutral convention. Since the endpoints are invertible,
the $RO(O(2))$-valued spectral flow is defined.

By Lemma~\ref{lem:hrank_periodic_block_structure}, the family decomposes into
$O(2)$-invariant rotating nonzero blocks, rotating zero blocks, and the fixed neutral
sector. The neutral sector has no moving crossing contribution by the imposed
neutral-sector convention.

For $k\geq1$, a crossing in the $j$-th rotating two-plane occurs at
$p_*\in\mathcal C_{k,j}$, by
Lemma~\ref{lem:hrank_boundary_to_operator_crossing}. After regrouping the conjugate
complex crossing blocks as in Remark~\ref{rem:representation_ring_convention}, the
crossing eigenspace carries the real $O(2)$-representation $\rho_k$, and its local
contribution is therefore $\epsilon_{k,j,p_*}[\rho_k]$. Summing these contributions over all $j$, all
positive Fourier modes $k$, and all crossing points $p_*\in\mathcal C_{k,j}$
gives the first term in \eqref{eq:hrank_ROO2_formula}.

For the rotating zero Fourier block, the $j$-th two-plane crosses only at points
$p_*\in\mathcal C_{0,j}$, by the admissible zero-block regularization fixed in
Definition~\ref{def:admissible_zero_block_regularization}. After applying the
same real-representation regrouping convention, the crossing eigenspace carries the representation
$\mathbf 1\oplus\det$. Since the crossing form may have separate signs on the two
blocks, the local contribution is
$\epsilon_{\mathbf{1},j,p_*}[\mathbf 1]+\epsilon_{\det,j,p_*}[\det]$. Summing over all
$j$ and all $p_*\in\mathcal C_{0,j}$ gives the second term in
\eqref{eq:hrank_ROO2_formula}.

Adding all local representation-valued contributions proves the $RO(O(2))$-valued
formula. The ordinary spectral-flow formula follows by applying the dimension map,
using $\dim\rho_k=2$, $\dim\mathbf 1=1$, and $\dim\det=1$.
\end{proof}

\begin{remark}[Anti-periodic spin sector and the representation group]
The theorem is stated in the periodic spin sector because the displayed
$RO(O(2))$-classes use integral Fourier weights. The scalar crossing calculation in the
anti-periodic sector is recorded separately in
\autoref{subsec:hrank_antiperiodic_sector}.
\end{remark}

\subsection{Affine specialization}
\label{subsec:hrank_affine_specialization}

We now specialize to an affine parameter path. In this case the crossing locations are
explicit, and the signs are obtained from the actual scalar parameters
$m_{k,j,\pm}(p)$ through the crossing form.

The next statement fixes the scalar $m$-slope sign convention used in the affine
formulas below. This convention is the one directly represented by the crossing form in
\eqref{eq:hrank_operator_crossing_form}: in a separated scalar block, after the local
graph trivialization of the regularized APS domain, the crossing-form sign is represented
by the derivative of the scalar transfer determinant along the path.

\begin{proposition}[Affine crossing signs from the scalar crossing form]
\label{prop:hrank_affine_crossing_signs}
Let
\begin{equation}\label{eq:hrank_affine_path} A_p=A+\nu p, \qquad \nu \neq0.
\end{equation}
Use the regularity hypotheses of
Theorem~\ref{thm:hrank_ROO2_spectral_flow}. Let $F(m)$ be the scalar transfer
determinant from Definition~\ref{def:hrank_admissible_scalar_crossing}. Since the
regularized APS family is admissible, $F'(0)\neq0$.

For a separated scalar block with effective parameter $m=m_{k,j,\sigma}(p)$, the local
crossing sign is
\begin{equation}\label{eq:hrank_affine_scalar_crossing_sign}
\epsilon_{k,j,\sigma,p_*}
=
\operatorname{sign}\!\left(F'(0)\,m'_{k,j,\sigma}(p_*)\right).
\end{equation}
Equivalently, the sign records whether the actual scalar parameter $m$ increases or
decreases through $0$, with the common orientation factor $F'(0)$.

For the affine path \eqref{eq:hrank_affine_path}, the nonzero rotating branches satisfy
\begin{equation}\label{eq:hrank_affine_m_derivatives} m'_{k,j,+}(p_*)=\mu_j\nu, \qquad m'_{k,j,-}(p_*)=-\mu_j\nu.
\end{equation}
Hence
\begin{equation}\label{eq:hrank_affine_nonzero_signs}
\epsilon_{k,j,+,p_*}
=
\operatorname{sign}\!\left(F'(0)\mu_j\nu\right),
\qquad
\epsilon_{k,j,-,p_*}
=
\operatorname{sign}\!\left(-F'(0)\mu_j\nu\right).
\end{equation}
Since $\mu_j>0$, this is equivalently
\begin{equation}\label{eq:hrank_affine_nonzero_signs_simplified}
\epsilon_{k,j,+,p_*}
=
\operatorname{sign}\!\left(F'(0)\nu\right),
\qquad
\epsilon_{k,j,-,p_*}
=
-\operatorname{sign}\!\left(F'(0)\nu\right).
\end{equation}

For the rotating zero block, with the convention that the $\mathbf 1$-line is represented
by the $+$-branch $m_{0,j,+}(p)=\mu_jA_p$ and the $\det$-line is represented by the
$-$-branch $m_{0,j,-}(p)=-\mu_jA_p$, one obtains
\begin{equation}\label{eq:hrank_affine_zero_signs}
\epsilon_{\mathbf{1},j,p_*}
=
\operatorname{sign}\!\left(F'(0)\nu\right),
\qquad
\epsilon_{\det,j,p_*}
=
-\operatorname{sign}\!\left(F'(0)\nu\right).
\end{equation}
If the opposite zero-block labeling is used, the two zero-block signs in
\eqref{eq:hrank_affine_zero_signs} are interchanged.
\end{proposition}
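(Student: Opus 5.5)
The plan is to reduce Proposition~\ref{prop:hrank_affine_crossing_signs} to the scalar crossing calculation already carried out in \autoref{subsec:hrank_scalar_transfer_admissibility}, and then to substitute the explicit affine slopes. The argument has three steps.

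\textbf{Step 1: the general sign formula.} Fix a separated scalar block $(k,j,\sigma)$ with $m_{k,j,\sigma}(p_*)=0$. By Lemma~\ref{lem:transfer_determinant_detects_zero_modes}, the kernel of the block near $p_*$ is detected by the zeros of $p\mapsto F(m_{k,j,\sigma}(p))$. In the local graph trivialization of the regularized domain lines, the crossing form \eqref{eq:hrank_operator_crossing_form} restricted to this one-dimensional kernel is represented, up to a positive normalization, by the derivative of this function, as in \eqref{eq:hrank_scalar_crossing_form_via_F}. The chain rule (Corollary~\ref{cor:hrank_scalar_boundary_to_operator_crossing}) gives
\[
\left.\frac{d}{dp}\right|_{p=p_*}F(m_{k,j,\sigma}(p)) = F'(0)\,m'_{k,j,\sigma}(p_*).
\]
This quantity is nonzero by admissibility (Lemma~\ref{lem:hrank_scalar_boundary_family_admissible}) together with $m'_{k,j,\sigma}(p_*)\neq0$. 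Taking signs yields \eqref{eq:hrank_affine_scalar_crossing_sign}.

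\textbf{Step 2: the affine slopes.} For $A_p=A+\nu p$ we differentiate \eqref{eq:hrank_rotating_block_parameters} directly, obtaining $m'_{k,j,\pm}=\pm\mu_j\nu$, which is \eqref{eq:hrank_affine_m_derivatives}. Substituting into Step 1 gives \eqref{eq:hrank_affine_nonzero_signs}. Since $\mu_j>0$, the factor $\mu_j$ drops out of the sign, which gives \eqref{eq:hrank_affine_nonzero_signs_simplified}.

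\textbf{Step 3: the zero block.} For the rotating zero block we invoke the diagonal admissible regularization of Lemma~\ref{lem:hrank_admissible_zero_block_regularization}. There the $\mathbf 1_j$-line carries the scalar construction with parameter $\mu_jA_p$, and the $\det_j$-line carries it with parameter $-\mu_jA_p$. The same chain rule computation, now with slopes $\pm\mu_j\nu$, yields \eqref{eq:hrank_affine_zero_signs}. Swapping the labeling swaps the roles of the two branches and hence interchanges the two signs.

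\textbf{Main obstacle.} The delicate point is the claim that the operator crossing form is represented by $F'(0)m'$ with a \emph{positive} proportionality constant that is uniform across blocks. The sign of this constant fixes the overall orientation factor, so it must be the same for every block. To justify it, I would write the kernel vector explicitly via \eqref{eq:hrank_scalar_zero_energy_solutions}. I would then compute $\Gamma_{p_*}(u,u)$ as a boundary Green-form pairing of $u$ against the $p$-derivative of the domain lines, and compare this expression with $\partial_p F$. Since the scalar model is identical in every block and only the parameter $m$ changes, the normalization depends on $m$ alone and not on $(k,j,\sigma)$. This is what makes the sign common, and it is consistent with Remark~\ref{rem:representative_branch_reflected_coordinate}.
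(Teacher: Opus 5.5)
Your proposal is correct and follows essentially the same route as the paper's proof: both represent the crossing form by $\left.\frac{d}{dp}\right|_{p=p_*}F(m_{k,j,\sigma}(p))=F'(0)\,m'_{k,j,\sigma}(p_*)$, substitute the affine slopes $\pm\mu_j\nu$, use $\mu_j>0$, and read off the zero-block signs from the fixed $\mathbf 1$/$\det$ labeling. Your extra paragraph on a uniform positive normalization goes beyond the paper, which simply asserts it; if you carry it out, note that $\Gamma_{p_*}$ is not purely a boundary Green-form pairing: for an $L^2$-normalized kernel vector $u$ the bulk term $m'(p_*)\langle -\sigma_2 f^{-1}u,u\rangle=-L_f\,m'(p_*)/T$ must be added to the boundary-rotation term $2\alpha'(0)\,m'(p_*)/(\delta T)$, and only their sum gives $F'(0)\,m'(p_*)/T$, so the common factor is $1/T>0$.
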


\begin{proof}
In a separated scalar block, the crossing form in
\eqref{eq:hrank_operator_crossing_form} is represented, after the local graph
trivialization of the regularized boundary condition, by the derivative of the scalar
transfer determinant along the path:
\begin{equation}\label{eq:hrank_affine_transfer_derivative_along_path}
\left.\frac{d}{dp}\right|_{p=p_*}
F(m_{k,j,\sigma}(p))
=
F'(0)\,m'_{k,j,\sigma}(p_*).
\end{equation}
Therefore the local scalar crossing sign is
$\operatorname{sign}\!\left(F'(0)\,m'_{k,j,\sigma}(p_*)\right)$.

For the affine path $A_p=A+\nu p$, the two nonzero scalar branches are
$m_{k,j,+}(p)=k+\mu_jA_p$ and $m_{k,j,-}(p)=k-\mu_jA_p$.
Differentiating gives
$m'_{k,j,+}(p_*)=\mu_j\nu$ and $m'_{k,j,-}(p_*)=-\mu_j\nu$.
Substitution into
\eqref{eq:hrank_affine_scalar_crossing_sign} gives
\eqref{eq:hrank_affine_nonzero_signs}. Since $\mu_j>0$, this is equivalent to
\eqref{eq:hrank_affine_nonzero_signs_simplified}.

For the rotating zero block, the scalar parameters are
$m_{0,j,+}(p)=\mu_jA_p$ and $m_{0,j,-}(p)=-\mu_jA_p$.
Thus their derivatives are $\mu_j\nu$ and $-\mu_j\nu$. With the stated convention
identifying the $+$-branch with the $\mathbf 1$-line and the $-$-branch with the
$\det$-line, the crossing-form signs are exactly those in
\eqref{eq:hrank_affine_zero_signs}.
\end{proof}

\begin{remark}[Dependence on the scalar APS orientation]
\label{rem:hrank_affine_sign_convention_dependence}
Proposition~\ref{prop:hrank_affine_crossing_signs} uses the scalar $m$-slope convention
coming from the crossing form. The common factor $F'(0)$ records the orientation of the
chosen scalar regularized APS boundary family. If the scalar regularization is chosen so
that $F'(0)>0$, then the $+$-branches have sign $\operatorname{sign}(\nu)$, while
the $-$-branches have sign $-\operatorname{sign}(\nu)$. For another admissible
regularization, the same formulas hold with the corresponding sign of $F'(0)$.
\end{remark}

\begin{corollary}[Affine higher-rank formula]
\label{cor:hrank_affine_formula}
Under the hypotheses of Theorem~\ref{thm:hrank_ROO2_spectral_flow}, suppose in addition
that
\begin{equation}\label{eq:hrank_affine_corollary_path} A_p=A+\nu p, \qquad \nu\neq0.
\end{equation}
Let
\begin{equation}\label{eq:hrank_affine_delta_definition} \delta_{\text{scal}} = \operatorname{sign}\!\left(F'(0)\nu\right).
\end{equation}
In the formula below we keep the zero-block labeling used in
Proposition~\ref{prop:hrank_affine_crossing_signs}, namely the $+$-branch contributes
to $\mathbf 1$ and the $-$-branch contributes to $\det$. Then, with the scalar
$m$-slope crossing convention of Proposition~\ref{prop:hrank_affine_crossing_signs},
\begin{equation}\label{eq:hrank_affine_ROO2_formula}
\begin{split}
\operatorname{sf}_{O(2)}
\left(
D_{p,\text{APS,reg}}^{E_{\mathbb C}^{(n)}}
\right)
=
&
\delta_{\text{scal}}
\sum_{j=1}^{r}
\sum_{k\in\mathcal K_+}
\left(
\#\left\{
p\in(0,1):
A+\nu p=-\frac{k}{\mu_j}
\right\}
-
\#\left\{
p\in(0,1):
A+\nu p=\frac{k}{\mu_j}
\right\}
\right)
[\rho_k]
\\
&
+
\delta_{\text{scal}}
\sum_{j=1}^{r}
\#\left\{
p\in(0,1):
A+\nu p=0
\right\}
\left(
[\mathbf 1]-[\det]
\right).
\end{split}
\end{equation}
After applying the dimension map, one obtains
\begin{equation}\label{eq:hrank_affine_integer_formula}
\begin{split}
\operatorname{sf}
\left(
D_{p,\text{APS,reg}}^{E_{\mathbb C}^{(n)}}
\right)
=
&
2\delta_{\text{scal}}
\sum_{j=1}^{r}
\sum_{k\in\mathcal K_+}
\left(
\#\left\{
p\in(0,1):
A+\nu p=-\frac{k}{\mu_j}
\right\}
-
\#\left\{
p\in(0,1):
A+\nu p=\frac{k}{\mu_j}
\right\}
\right).
\end{split}
\end{equation}
\end{corollary}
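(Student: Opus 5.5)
The plan is to specialize Theorem~\ref{thm:hrank_ROO2_spectral_flow} to the affine path $A_p=A+\nu p$ and substitute the signs from Proposition~\ref{prop:hrank_affine_crossing_signs}. Since $\nu\neq0$, the map $p\mapsto A+\nu p$ is injective. Hence, for each $j$ and each $k\geq1$, the crossing set $\mathcal C_{k,j}$ splits as the disjoint union of
\[
\mathcal C_{k,j}^{-}=\{p\in(0,1):A+\nu p=-k/\mu_j\},\qquad
\mathcal C_{k,j}^{+}=\{p\in(0,1):A+\nu p=k/\mu_j\}.
\]
Each of these sets has at most one element. The two values are distinct because $k/\mu_j>0$. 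Likewise $\mathcal C_{0,j}=\{p\in(0,1):A+\nu p=0\}$ has at most one element. Regularity of every crossing is automatic here: $m'_{k,j,\pm}=\pm\mu_j\nu\neq0$, so Corollary~\ref{cor:hrank_scalar_boundary_to_operator_crossing} applies. The hypotheses of the theorem are therefore consistent with the affine path.

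Next, identify the signs. At $p_*\in\mathcal C_{k,j}^{-}$ the vanishing representative branch is $m_{k,j,+}$. By \eqref{eq:hrank_nonzero_epsilon_plus_definition} and \eqref{eq:hrank_affine_nonzero_signs_simplified}, $\epsilon_{k,j,p_*}=\operatorname{sign}(F'(0)\nu)=\delta_{\text{scal}}$. At $p_*\in\mathcal C_{k,j}^{+}$ the branch is $m_{k,j,-}$, so $\epsilon_{k,j,p_*}=-\delta_{\text{scal}}$. Remark~\ref{rem:representative_branch_reflected_coordinate} confirms that the reflected partner does not contribute a second, independent sign. Thus the inner sum $\sum_{p_*\in\mathcal C_{k,j}}\epsilon_{k,j,p_*}$ equals $\delta_{\text{scal}}\bigl(\#\mathcal C_{k,j}^{-}-\#\mathcal C_{k,j}^{+}\bigr)$, which gives the first line of \eqref{eq:hrank_affine_ROO2_formula}.

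For the zero block, with the labeling fixed in Proposition~\ref{prop:hrank_affine_crossing_signs}, \eqref{eq:hrank_affine_zero_signs} gives $\epsilon_{\mathbf 1,j,p_*}=\delta_{\text{scal}}$ and $\epsilon_{\det,j,p_*}=-\delta_{\text{scal}}$. The zero-mode term is therefore $\delta_{\text{scal}}\#\mathcal C_{0,j}([\mathbf 1]-[\det])$. When $r>1$ all $r$ zero blocks cross simultaneously at $A_p=0$. By the block-diagonal convention assumed in the theorem, their contributions simply add over $j$. Substituting both pieces into \eqref{eq:hrank_ROO2_formula} yields \eqref{eq:hrank_affine_ROO2_formula}.

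Finally, apply the dimension homomorphism. Using $\dim[\rho_k]=2$ and $\dim([\mathbf 1]-[\det])=0$, the zero-mode term vanishes and the factor $2\delta_{\text{scal}}$ remains, which is \eqref{eq:hrank_affine_integer_formula}.

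The only point needing care is the sign bookkeeping for the nonzero blocks. At each crossing we must use the branch that actually vanishes ($m_{k,j,+}$ when $A_{p_*}=-k/\mu_j$ and $m_{k,j,-}$ when $A_{p_*}=k/\mu_j$), never the raw reflected coordinate. This is guaranteed by the definitions \eqref{eq:hrank_nonzero_epsilon_plus_definition}--\eqref{eq:hrank_nonzero_epsilon_minus_definition} and the chart-consistency remark. Once that is in place, everything else is direct substitution.
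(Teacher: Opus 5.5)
Your proposal is correct and matches the paper's proof step for step. You split each crossing set according to which branch $m_{k,j,\pm}$ vanishes, read off the signs $\pm\delta_{\text{scal}}$ (and $\pm\delta_{\text{scal}}$ on the $\mathbf 1$/$\det$ lines) from Proposition~\ref{prop:hrank_affine_crossing_signs}, substitute into Theorem~\ref{thm:hrank_ROO2_spectral_flow}, and apply the dimension map, exactly as the paper does; your remarks on injectivity of $p\mapsto A+\nu p$, automatic regularity, and the simultaneous zero-block crossings are extra bookkeeping the paper leaves implicit.
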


\begin{proof}
For the affine path, the nonzero crossing equations are
$k+\mu_j(A+\nu p)=0$ and $k-\mu_j(A+\nu p)=0$.
These are equivalent to
$A+\nu p=-\frac{k}{\mu_j}$ and $A+\nu p=\frac{k}{\mu_j}$,
respectively. By Proposition~\ref{prop:hrank_affine_crossing_signs}, the first type of
crossing has sign $\delta_{\text{scal}}$, while the second type has sign
$-\delta_{\text{scal}}$. This gives the first line of
\eqref{eq:hrank_affine_ROO2_formula}.

For the rotating zero block, the crossing equation is $A+\nu p=0$. With the zero-block
labeling used in Proposition~\ref{prop:hrank_affine_crossing_signs}, the
$\mathbf 1$-line has sign $\delta_{\text{scal}}$, while the $\det$-line has sign
$-\delta_{\text{scal}}$. This gives the second line of
\eqref{eq:hrank_affine_ROO2_formula}. Applying the dimension map gives
\eqref{eq:hrank_affine_integer_formula}, because
$\dim[\rho_k]=2$ and
$\dim([\mathbf 1]-[\det])=1-1=0$.
\end{proof}

\subsection{Identical-block rank-\texorpdfstring{$2r$}{2r} specialization}
\label{subsec:hrank_identical_block_specialization}

The cleanest higher-rank case occurs when all rotating weights are equal to one and
there is no neutral sector. Thus
\begin{equation}\label{eq:hrank_identical_block_J} J_n=J\oplus\cdots\oplus J, \qquad n=2r.
\end{equation}
In this case the higher-rank model is the direct sum of $r$ identical rank-two
orthogonal twists.

\begin{corollary}[Identical-block rank-\texorpdfstring{$2r$}{2r} formula]
\label{cor:hrank_identical_block_formula}
Let $J_n=J\oplus\cdots\oplus J$, with $n=2r$, and impose the hypotheses of
Theorem~\ref{thm:hrank_ROO2_spectral_flow}. Suppose moreover that the same regularized
APS convention is used on each identical rotating block, so that the local crossing
signs are independent of the block label $j$. Then
\begin{equation}\label{eq:hrank_identical_block_ROO2_formula}
\operatorname{sf}_{O(2)}
\left(
D_{p,\text{APS,reg}}^{E_{\mathbb C}^{(2r)}}
\right)
=
r
\sum_{k\in\mathcal K_+}
\sum_{p_*\in\mathcal C_k}
\epsilon_{k,p_*}[\rho_k]
+
r
\sum_{p_*\in\mathcal C_0}
\left(
\epsilon_{1,p_*}[\mathbf 1]
+
\epsilon_{\det,p_*}[\det]
\right),
\end{equation}
where
\begin{equation}\label{eq:hrank_identical_block_crossing_sets}
\mathcal C_k
=
\left\{
p_*\in(0,1):
A_{p_*}=k
\text{ or }
A_{p_*}=-k
\right\},
\qquad
\mathcal C_0
=
\left\{
p_*\in(0,1):
A_{p_*}=0
\right\}.
\end{equation}
Applying the dimension map gives
\begin{equation}\label{eq:hrank_identical_block_integer_formula}
\operatorname{sf}
\left(
D_{p,\text{APS,reg}}^{E_{\mathbb C}^{(2r)}}
\right)
=
2r
\sum_{k\in\mathcal K_+}
\sum_{p_*\in\mathcal C_k}
\epsilon_{k,p_*}
+
r
\sum_{p_*\in\mathcal C_0}
\left(
\epsilon_{1,p_*}
+
\epsilon_{\det,p_*}
\right).
\end{equation}
\end{corollary}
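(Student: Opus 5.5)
The corollary should follow directly from Theorem~\ref{thm:hrank_ROO2_spectral_flow}, specialized to $\mu_1=\cdots=\mu_r=1$ and $z=0$. I would not redo any analysis; the plan is to substitute into \eqref{eq:hrank_ROO2_formula} and collect the $j$-sums.

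First, I identify the crossing sets. With $\mu_j=1$ for every $j$, the set $\mathcal C_{k,j}$ of \eqref{eq:hrank_Ckj_def} becomes $\{p_*\in(0,1): A_{p_*}=\pm k\}$. This set no longer depends on $j$, so it equals $\mathcal C_k$ of \eqref{eq:hrank_identical_block_crossing_sets}. In the same way, $\mathcal C_{0,j}=\mathcal C_0$ for every $j$. Because $z=0$, there is no neutral sector, so Definition~\ref{def:hrank_neutral_sector_convention} is vacuous and the theorem gives no neutral term.

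Second, I check that the signs do not depend on $j$. By \eqref{eq:hrank_nonzero_epsilon_plus_definition}–\eqref{eq:hrank_nonzero_epsilon_minus_definition}, $\epsilon_{k,j,p_*}=\operatorname{sign}(F'(0)m'_{k,j,\sigma}(p_*))$. When $\mu_j=1$, the scalar branch $m_{k,j,\sigma}(p)=k\pm A_p$ is the same for every $j$. The hypothesis says the same scalar regularized convention is used on every block, so $F$ is the same and the signs agree across $j$. The same holds for the zero-block signs \eqref{eq:hrank_zero_local_signs_def}, provided the same labeling of $\mathbf 1$ and $\det$ is used in each block. I write $\epsilon_{k,p_*}$, $\epsilon_{1,p_*}$ and $\epsilon_{\det,p_*}$ for these common values.

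Third, I apply the simultaneous-crossing convention. All $r$ blocks cross at the same $p_*$, as noted in Remark~\ref{rem:sep_simultaneous_zero_mode_crossings}. This is exactly the situation covered by the block-diagonal option in the theorem's hypotheses. The local contribution is therefore the sum of the $r$ identical block contributions. Hence the sum over $j$ in \eqref{eq:hrank_ROO2_formula} simply produces the factor $r$, which gives \eqref{eq:hrank_identical_block_ROO2_formula}.

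Finally, I apply $\dim$, using $\dim\rho_k=2$ and $\dim\mathbf 1=\dim\det=1$. This gives \eqref{eq:hrank_identical_block_integer_formula}; equivalently, it is \eqref{eq:hrank_integer_sf_formula} specialized the same way.

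The only step that needs care is the third one: that the coincident crossings really fall under the block-diagonal hypothesis, and are not mixed by the crossing form across different $j$. If the crossing form did mix blocks of the same class, the convention takes the signature of the restricted form. With identical nondegenerate diagonal blocks, that signature is still $r\epsilon$ times the class, so the formula is unchanged.
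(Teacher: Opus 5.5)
Your proposal is correct and follows the paper's proof. Both set all $\mu_j=1$, note that $\mathcal C_{k,j}=\mathcal C_k$ and $\mathcal C_{0,j}=\mathcal C_0$ for every $j$, sum the formula of Theorem~\ref{thm:hrank_ROO2_spectral_flow} over $j=1,\ldots,r$ to get the factor $r$, and then apply $\dim$. Your explicit treatment of the coincident crossings is a reasonable addition that the paper leaves implicit, but the ``mixing'' case in your last paragraph does not arise: using the same convention on each identical block makes the family an $r$-fold direct sum of one rank-two family, so the crossing form is automatically block-diagonal in $j$.
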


\begin{proof}
If $J_n=J\oplus\cdots\oplus J$, then all weights satisfy $\mu_j=1$, and there are
$r$ identical rotating blocks. Therefore $\mathcal C_{k,j}=\mathcal C_k$ and
$\mathcal C_{0,j}=\mathcal C_0$ for every $j$. Summing the formula of
Theorem~\ref{thm:hrank_ROO2_spectral_flow} over $j=1,\ldots,r$ gives exactly
\eqref{eq:hrank_identical_block_ROO2_formula}. The integer formula follows by applying
the dimension map.
\end{proof}

\subsection{Rank-two and rank-three specializations}
\label{subsec:hrank_rank_two_rank_three_specializations}

When $n=2$, we have $r=1$, $z=0$, and $\mu_1=1$. Therefore
Theorem~\ref{thm:hrank_ROO2_spectral_flow} reduces to the rank-two formula:
\begin{equation}\label{eq:hrank_rank_two_specialization_formula}
\operatorname{sf}_{O(2)}
\left(
D_{p,\text{APS,reg}}^{E_{\mathbb C}^{(2)}}
\right)
=
\sum_{k\in\mathcal K_+}
\sum_{p_*\in\mathcal C_k}
\epsilon_{k,p_*}[\rho_k]
+
\sum_{p_*\in\mathcal C_0}
\left(
\epsilon_{1,p_*}[\mathbf 1]
+
\epsilon_{\det,p_*}[\det]
\right).
\end{equation}
Thus the original rank-two orthogonal-twist model is the first nontrivial case of the
higher-rank theorem.

When $n=3$, the standard model is
\begin{equation}\label{eq:hrank_rank_three_J} J_3=J\oplus0.
\end{equation}
Using the fixed neutral-sector convention, the moving part is the rank-two rotating
part. Thus the rank-three moving formula reduces to
\begin{equation}\label{eq:hrank_rank_three_formula}
\operatorname{sf}_{O(2)}
\left(
D_{p,\text{APS,reg}}^{E_{\mathbb C}^{(3)}}
\right)
=
\operatorname{sf}_{O(2)}
\left(
D_{p,\text{APS,reg}}^{E_{\mathbb C}^{(2)}}
\right)
\end{equation}
as an equality of moving $RO(O(2))$-valued contributions, provided the neutral
$k=0$ sector is made invertible, or otherwise fixed so that it has zero spectral
flow.

\begin{remark}[Odd rank]
\label{rem:hrank_odd_rank}
Every odd-rank skew-symmetric twist has at least one neutral direction. These neutral
directions do not create moving $A_p$-dependent crossings. In the periodic neutral
$k=0$ sector, one has $m_{0,a,0}(p)=0$ for all $p$, so the neutral-sector convention
is included among the hypotheses in the higher-rank theorem.
\end{remark}

\subsection{Anti-periodic sector}
\label{subsec:hrank_antiperiodic_sector}

In the anti-periodic spin sector, the Fourier lattice is
\begin{equation}\label{eq:hrank_antiperiodic_lattice} \mathcal K=\mathbb Z+\frac12 .
\end{equation}
The scalar and operator crossing analysis is unchanged: a rotating block crosses when $k\pm\mu_jA_p=0$. Thus the scalar crossing set is
\begin{equation}\label{eq:hrank_antiperiodic_crossing_set}
\mathcal C_{k,j}^{\text{ap}}
=
\left\{
p_*\in(0,1):
A_{p_*}=\frac{k}{\mu_j}
\text{ or }
A_{p_*}=-\frac{k}{\mu_j}
\right\},
\qquad
k\in\mathbb Z+\frac12,\quad k>0.
\end{equation}
There is no self-paired zero Fourier block, and hence there are no
$[\mathbf 1]$ or $[\det]$ terms.

However, the representation-valued statement is no longer literally an
$RO(O(2))$-valued formula, because half-integer angular weights are not honest
representations of $O(2)$. They are naturally associated with the corresponding
spin double cover of the circle symmetry. Therefore in this paper we treat only the ordinary scalar crossing consequence in the anti-periodic sector:
\begin{equation}\label{eq:hrank_antiperiodic_integer_formula}
\operatorname{sf}
\left(
D_{p,\text{APS,reg}}^{E_{\mathbb C}^{(n)}}
\right)
=
2
\sum_{j=1}^{r}
\sum_{\substack{k\in\mathbb Z+\frac12\\ k>0}}
\sum_{p_*\in\mathcal C_{k,j}^{\text{ap}}}
\epsilon_{k,j,p_*}.
\end{equation}
A representation-valued anti-periodic formula can be written after replacing $RO(O(2))$ with the representation ring of the relevant spin double cover, but we do not perform that extension here.

\section{Rank-three example and the neutral block}
\label{sec:rank_three_neutral_block}

We now spell out the first odd-rank case, emphasizing how the general neutral-sector
convention enters the otherwise rank-two moving calculation.

\subsection{The rank-three twisting data}
\label{subsec:rank_three_twisting_data}

Let
\begin{equation}\label{eq:rthree_bundle} E_{\mathbb R}^{(3)}=M\times\mathbb R^3.
\end{equation}
We take the skew-symmetric twisting matrix to be
\begin{equation}\label{eq:rthree_J3_def}
J_3
=
J\oplus 0
=
\begin{pmatrix}
0&-1&0\\
1&0&0\\
0&0&0
\end{pmatrix}.
\end{equation}
Thus $J_3$ rotates the first two real fiber coordinates and leaves the third real
fiber coordinate fixed. The corresponding family of orthogonal connections is
\begin{equation}\label{eq:rthree_connection} \nabla_{A_p}^{E_{\mathbb R}^{(3)}} = d+A_pJ_3\,d\theta .
\end{equation}

Let
\begin{equation}\label{eq:rthree_C3_def}
C_3
=
C\oplus(\chi_0)
=
\begin{pmatrix}
1&0&0\\
0&-1&0\\
0&0&\chi_0
\end{pmatrix},
\qquad
\chi_0\in\{+1,-1\}.
\end{equation}
Here $\chi_0$ indicates whether the neutral fiber line is reflection-even or
reflection-odd. Since the neutral block of $J_3$ is zero, either choice of
$\chi_0$ is compatible with reflection. Indeed,
\begin{equation}\label{eq:rthree_C3_J3_relation} C_3J_3C_3^{-1}=-J_3.
\end{equation}
Therefore the lifted reflection on $S\otimes E_{\mathbb C}^{(3)}$ is
\begin{equation}\label{eq:rthree_reflection_lift} \mathcal U_{\mathbf r}^{E_{\mathbb C}^{(3)}} = U_{\mathbf r}C_3\mathbf r^*.
\end{equation}
By the reflection-compatibility relation \eqref{eq:rthree_C3_J3_relation}, the
rank-three twisted Dirac operator is pointwise invariant under this lifted reflection.

\subsection{Complex fiber weights and separated blocks}
\label{subsec:rank_three_blocks}

After complexification, the rotating two-plane gives two fiber eigenvectors and the
neutral line gives one zero-weight eigenvector. Let
\begin{equation}\label{eq:rthree_eigenvectors}
\xi_+
=
\begin{pmatrix}
1\\
-i\\
0
\end{pmatrix},
\qquad
\xi_-
=
\begin{pmatrix}
1\\
i\\
0
\end{pmatrix},
\qquad
\eta_0
=
\begin{pmatrix}
0\\
0\\
1
\end{pmatrix}.
\end{equation}
Then
\begin{equation}\label{eq:rthree_eigenvalue_equations} J_3\xi_+=i\xi_+, \qquad J_3\xi_-=-i\xi_-, \qquad J_3\eta_0=0.
\end{equation}
Thus the complexified fiber weights are $+1$, $-1$, and $0$. For a Fourier mode
$k\in \mathbb Z$, the three separated blocks are
\begin{equation}\label{eq:rthree_separated_blocks}
\Psi_{k,+}
=
e^{ik\theta}
\binom{u_{k,+}(t)}{v_{k,+}(t)}
\otimes\xi_+,
\qquad
\Psi_{k,-}
=
e^{ik\theta}
\binom{u_{k,-}(t)}{v_{k,-}(t)}
\otimes\xi_-,
\qquad
\Psi_{k,0}
=
e^{ik\theta}
\binom{u_{k,0}(t)}{v_{k,0}(t)}
\otimes\eta_0.
\end{equation}
The angular covariant derivative acts by
\begin{equation}\label{eq:rthree_block_parameters}
\left(\partial_\theta+A_pJ_3\right)\Psi_{k,+}
=
i(k+A_p)\Psi_{k,+},
\qquad
\left(\partial_\theta+A_pJ_3\right)\Psi_{k,-}
=
i(k-A_p)\Psi_{k,-},
\qquad
\left(\partial_\theta+A_pJ_3\right)\Psi_{k,0}
=
ik\Psi_{k,0}.
\end{equation}
Hence the three scalar block parameters are
\begin{equation}\label{eq:rthree_scalar_parameters} m_{k,+}(p)=k+A_p, \qquad m_{k,-}(p)=k-A_p, \qquad m_{k,0}(p)=k.
\end{equation}

The first two parameters are exactly the moving rank-two parameters. The third parameter
is independent of $p$. This is the basic structural point: the rank-three model is
the direct sum of one moving rank-two orthogonal twist and one stationary neutral
block.

\subsection{Rotating crossings}
\label{subsec:rank_three_rotating_crossings}

The rotating blocks are governed by the same scalar radial equation as in the rank-two
case. Therefore the regularized APS boundary family produces isolated crossings exactly
when one of the moving scalar parameters $k+A_p$ or $k-A_p$ vanishes.

For a nonzero reflection-paired Fourier block $\{k,-k\}$, with $k\geq1$, the four
rotating scalar parameters are
\begin{equation}\label{eq:rthree_four_rotating_parameters}
m_{k,+}(p)=k+A_p,
\qquad
m_{k,-}(p)=k-A_p,
\qquad
m_{-k,+}(p)=-k+A_p,
\qquad
m_{-k,-}(p)=-k-A_p.
\end{equation}
Thus the nonzero rotating crossing condition is
\begin{equation}\label{eq:rthree_nonzero_crossing_condition} A_{p_*}=k \qquad\text{or}\qquad A_{p_*}=-k.
\end{equation}
At either value, the two zero blocks are exchanged by the lifted reflection, so the
real crossing eigenspace carries the irreducible $O(2)$-representation $\rho_k$.

For the rotating zero Fourier block, the moving scalar parameters are $A_p$ and
$-A_p$. Therefore the zero-mode rotating crossing condition is
\begin{equation}\label{eq:rthree_rotating_zero_crossing_condition} A_{p_*}=0.
\end{equation}
At such a crossing, the real crossing eigenspace decomposes as one reflection-even line
and one reflection-odd line. Hence its $O(2)$-representation is
\begin{equation}\label{eq:rthree_zero_rotating_representation} \mathbf 1\oplus\det.
\end{equation}

\subsection{The neutral block}
\label{subsec:rank_three_neutral_block}

For $J_3=J\oplus0$, the neutral fiber direction has effective parameter
$m_{k,0}(p)=k$.
Thus only the two rotating blocks $k+A_p$ and $k-A_p$ contribute to the moving
rank-three spectral flow; the periodic neutral zero block is handled by
Definition~\ref{def:hrank_neutral_sector_convention}.

\subsection{Rank-three spectral-flow formula}
\label{subsec:rank_three_spectral_flow_formula}

Let
\begin{equation}\label{eq:rthree_nonzero_crossing_set}
\mathcal C_k^{(3)}
=
\left\{
p_*\in(0,1):
A_{p_*}=k
\text{ or }
A_{p_*}=-k
\right\},
\qquad
k\geq1,
\end{equation}
and let
\begin{equation}\label{eq:rthree_zero_crossing_set} \mathcal C_0^{(3)} = \left\{ p_*\in(0,1): A_{p_*}=0 \right\}.
\end{equation}
For $p_*\in\mathcal C_k^{(3)}$, let $\epsilon_{k,p_*}^{(3)}$ be the local sign
of the rotating nonzero-block crossing. For $p_*\in\mathcal C_0^{(3)}$, let
$\epsilon_{1,p_*}^{(3)}$ and $\epsilon_{\det,p_*}^{(3)}$ be the signs on the
trivial and determinant blocks of the rotating zero-block crossing form.

\begin{proposition}[Rank-three formula with fixed neutral sector]
\label{prop:rthree_formula_fixed_neutral}
Use the rank-three version of the hypotheses of
Theorem~\ref{thm:hrank_ROO2_spectral_flow}: the endpoints are invertible, the scalar
regularized APS family and rotating zero-block regularization are admissible, all
rotating crossings are regular, simultaneous crossings are treated by the separated or
block-diagonal convention, and the neutral $k=0$ block is fixed by the
reflection-compatible neutral-sector convention of
Definition~\ref{def:hrank_neutral_sector_convention}. Then
\begin{equation}\label{eq:rthree_ROO2_formula}
\operatorname{sf}_{O(2)}
\left(
D_{p,\text{APS,reg}}^{E_{\mathbb C}^{(3)}}
\right)
=
\sum_{k\geq 1}
\sum_{p_*\in \mathcal C_k^{(3)}}
\epsilon_{k,p_*}^{(3)}[\rho_k]
+
\sum_{p_*\in \mathcal C_0^{(3)}}
\left(
\epsilon_{1,p_*}^{(3)}[\mathbf 1]
+
\epsilon_{\det,p_*}^{(3)}[\det]
\right).
\end{equation}
The neutral block contributes no moving term. With the reflection-compatible neutral
completion of Definition~\ref{def:hrank_neutral_sector_convention}, the formula is the
full $RO(O(2))$-valued spectral flow of the rank-three family.
\end{proposition}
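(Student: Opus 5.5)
The plan is to obtain Proposition~\ref{prop:rthree_formula_fixed_neutral} as the case $n=3$, $J_3=J\oplus0$, $C_3=C\oplus(\chi_0)$ of Theorem~\ref{thm:hrank_ROO2_spectral_flow}, so that $r=1$, $z=1$, $\mu_1=1$. The only new work is to confirm that the rank-three data meet the hypotheses of that theorem, and that the neutral line contributes nothing for either choice of $\chi_0$.

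First, I will check the input data. By \eqref{eq:rthree_C3_J3_relation} and $C_3^2=I$, the triple $(E^{(3)}_{\mathbb R},J_3,C_3)$ is a reflection-compatible twist in the sense of Definition~\ref{def:hrank_reflection_compatible_twist}. Proposition~\ref{prop:hrank_reflection_equivariance} then gives equivariance of the bulk operator. Proposition~\ref{prop:sep_wellposed_higher_rank_reg_APS} gives a gap-continuous path of $O(2)$-equivariant self-adjoint Fredholm operators. For this I need the neutral completion \eqref{eq:neutral_global_completion} to be reflection invariant. The lifted reflection acts on the neutral $k=0$ traces by the same operator $\chi_0U_{\mathbf r}$ at both endpoints, and this commutes with the graph relation $\widehat\gamma_T=-\widehat\gamma_0$. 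Hence invariance holds for $\chi_0=\pm1$.

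Second, I will invoke Lemma~\ref{lem:hrank_periodic_block_structure} to split the family into the rotating zero block, the neutral sector and the rotating paired blocks $\{k,-k\}$. I then treat the neutral sector separately. For $k\neq0$ its parameter is $m_{k,0}=k$, constant and nonzero, so the ordinary APS domain is fixed and its kernel is $p$-independent. The zero-energy solutions $u=af^{-1/2}e^{k\int dt/f}$ and $v=bf^{-1/2}e^{-k\int dt/f}$ fail the fixed APS conditions, so this kernel is trivial. For $k=0$, the argument at the end of Definition~\ref{def:hrank_neutral_sector_convention} shows that the graph condition forces $\widehat\gamma_0\psi=\widehat\gamma_T\psi=0$, so the kernel is again zero for all $p$. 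The neutral sector is therefore a constant invertible summand, and its spectral flow is $0$ in $RO(O(2))$.

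Third, the rotating part is literally the rank-two family with parameters $k\pm A_p$. Lemma~\ref{lem:hrank_boundary_to_operator_crossing} identifies its crossings as $\mathcal C^{(3)}_k$ carrying $\rho_k$, and $\mathcal C^{(3)}_0$ carrying $\mathbf 1\oplus\det$. The signs are the crossing-form signatures defined in \eqref{eq:hrank_nonzero_epsilon_plus_definition}--\eqref{eq:hrank_zero_local_signs_def}. Additivity of equivariant spectral flow over orthogonal invariant summands, followed by summation of the local contributions exactly as in the proof of Theorem~\ref{thm:hrank_ROO2_spectral_flow}, yields \eqref{eq:rthree_ROO2_formula}. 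This also gives the final sentence of the proposition, namely that the formula is the full flow.

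The main obstacle I expect is justifying that the neutral sector is genuinely decoupled from the rotating zero block. Both sectors sit at $k=0$ and share the reflection action, so I must check that the assembled domain \eqref{eq:sep_global_reg_APS_domain} is a direct sum along the fiber splitting $\mathrm{span}(\xi_\pm)\oplus\mathbb C\eta_0$. This holds because $J_3$, $C_3$ and all the projectors are block diagonal in this splitting. I must also check that the even or odd neutral line does not pair with $\mathbf 1_1$ or $\det_1$ through the crossing form. That follows because the crossing form is computed on the kernel, and the neutral part of the kernel is zero.
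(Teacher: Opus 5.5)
Your proposal is correct and follows essentially the paper's argument: the rank-three twist splits into the rotating two-plane, which is exactly the rank-two model and yields the $[\rho_k]$ and $[\mathbf 1]$, $[\det]$ terms, and the neutral line, which is $p$-independent, non-crossing for $k\neq0$, and has its $k=0$ zero mode removed by the graph condition $\widehat\gamma_T=-\widehat\gamma_0$. Your additional checks make explicit what the paper's shorter proof takes for granted: reflection invariance of the neutral graph for both $\chi_0=\pm1$, the trivial-kernel computation in the neutral $k\neq0$ blocks, and the block-diagonal decoupling of the neutral line from the rotating zero block.
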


\begin{proof}
The rank-three twist splits into the rotating two-plane and the neutral line. The
rotating two-plane is exactly the rank-two orthogonal-twist model, so its nonzero
Fourier crossings contribute $\epsilon_{k,p_*}^{(3)}[\rho_k]$, and its zero
Fourier crossings contribute
$\epsilon_{1,p_*}^{(3)}[\mathbf 1]+\epsilon_{\det,p_*}^{(3)}[\det]$. The neutral
sector is independent of $A_p$. For $k\neq0$, it is stationary and non-crossing.
For $k=0$, the scaled zero-energy trace would be constant at the two endpoints, while
the fixed neutral convention imposes the opposite-endpoint graph
$\widehat\gamma_T=-\widehat\gamma_0$; hence the persistent neutral zero mode is removed.
It therefore has zero spectral-flow contribution.  Since the neutral convention is reflection-compatible, this is
the full $O(2)$-equivariant contribution of the rank-three family.
Summing the rotating contributions proves \eqref{eq:rthree_ROO2_formula}. The integer
formula follows by applying the dimension map.
\end{proof}

\begin{corollary}[Rank three reduces to the rank-two moving formula]
\label{cor:rthree_equals_rank_two_moving_formula}
Under the hypotheses of Proposition~\ref{prop:rthree_formula_fixed_neutral}, the
rank-three $RO(O(2))$-valued spectral flow agrees with the rank-two moving contribution:
\begin{equation}\label{eq:rthree_equals_ranktwo}
\operatorname{sf}_{O(2)}
\left(
D_{p,\text{APS,reg}}^{E_{\mathbb C}^{(3)}}
\right)
=
\operatorname{sf}_{O(2)}
\left(
D_{p,\text{APS,reg}}^{E_{\mathbb C}^{(2)}}
\right),
\end{equation}
provided the rank-two operator is taken with the same rotating parameter path $A_p$
and the same regularized APS convention on the rotating blocks.
\end{corollary}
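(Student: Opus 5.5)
The plan is to compare the two right-hand sides term by term, using Proposition~\ref{prop:rthree_formula_fixed_neutral} for rank three and the rank-two specialization \eqref{eq:hrank_rank_two_specialization_formula} of Theorem~\ref{thm:hrank_ROO2_spectral_flow}. Both are sums of local crossing contributions indexed by Fourier labels $k\geq1$ and by the rotating zero mode. It therefore suffices to show two things. First, the crossing sets agree: $\mathcal C_k^{(3)}=\mathcal C_k$ and $\mathcal C_0^{(3)}=\mathcal C_0$. Second, the local signs agree: $\epsilon^{(3)}_{k,p_*}=\epsilon_{k,p_*}$, $\epsilon^{(3)}_{\mathbf 1,p_*}=\epsilon_{\mathbf 1,p_*}$ and $\epsilon^{(3)}_{\det,p_*}=\epsilon_{\det,p_*}$.

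For the crossing sets, I would use the explicit rank-three parameters \eqref{eq:rthree_scalar_parameters}. In rank three the rotating parameters are $m_{k,\pm}(p)=k\pm A_p$. In rank two, with $n=2$, $r=1$ and $\mu_1=1$, the parameters are the same. Hence, under the same path $A_p$, the defining conditions \eqref{eq:rthree_nonzero_crossing_set}, \eqref{eq:rthree_zero_crossing_set} and \eqref{eq:hrank_identical_block_crossing_sets} coincide. The representation labels also coincide: $\rho_k$ for $k\geq1$ and $\mathbf 1\oplus\det$ at $k=0$. This follows from Lemma~\ref{lem:hrank_periodic_block_structure}, because these labels come only from the angular Fourier action and do not depend on the fiber.

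For the signs, I would argue at the level of operators. The orthogonal splitting $\mathbb C^3=\operatorname{span}\{\xi_+,\xi_-\}\oplus\mathbb C\eta_0$ is preserved by $J_3$ and by $C_3$, since $C_3=C\oplus(\chi_0)$. The rank-three regularized domain \eqref{eq:sep_global_reg_APS_domain} is assembled blockwise. Together these give an $O(2)$-equivariant orthogonal decomposition
\[
D^{E_{\mathbb C}^{(3)}}_{p,\APSreg}\cong D^{E_{\mathbb C}^{(2)}}_{p,\APSreg}\oplus D^{\mathrm{neu}}_{p,\APSreg}.
\]
The first summand is literally the rank-two family, under the hypothesis that the same rotating regularized convention is used. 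This includes the same zero-block labeling of $\mathbf 1$ and $\det$. Since the crossing form \eqref{eq:hrank_operator_crossing_form} is computed on kernels, it restricts blockwise. So at every rotating crossing the kernel and crossing form of the rank-three family equal those of the rank-two family, and the local signs agree. Equivalently, each sign is $\operatorname{sign}(F'(0)m'(p_*))$ with the same $F$ and the same $m$.

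The step needing the most care is the neutral summand. One must check that it contributes nothing and that it does not disturb endpoint invertibility or regularity. For $k\neq0$, the boundary operators \eqref{eq:sep_boundary_operators_neutral_block} are invertible and independent of $p$. For $k=0$, the argument in Definition~\ref{def:hrank_neutral_sector_convention} gives trivial kernel: a zero-energy solution has constant scaled trace, and the graph condition forces that trace to vanish. So the neutral summand is a constant path with no zero eigenvalue, and its equivariant spectral flow is $0$. Additivity of $\operatorname{sf}_{O(2)}$ under equivariant direct sums, which I take from the crossing-form definition, then gives \eqref{eq:rthree_equals_ranktwo}. The same invertibility shows that the rank-two endpoints are invertible whenever the rank-three endpoints are, so the right-hand side is well defined.
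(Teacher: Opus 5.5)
Your proposal is correct and follows the same route as the paper's proof: the rotating parameters $k\pm A_p$ are identical in both models, so the crossing sets and local signs coincide, and the neutral line contributes nothing under the convention of Definition~\ref{def:hrank_neutral_sector_convention}. The operator-level splitting $D^{E_{\mathbb C}^{(3)}}_{p,\APSreg}\cong D^{E_{\mathbb C}^{(2)}}_{p,\APSreg}\oplus D^{\mathrm{neu}}_{p,\APSreg}$ (made $O(2)$-equivariant by $C_3=C\oplus(\chi_0)$), additivity of $\operatorname{sf}_{O(2)}$, and the explicit trivial-kernel check on the neutral summand simply spell out what the paper's two-line proof leaves implicit.
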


\begin{proof}
Both sides have the same moving block parameters $k+A_p$ and $k-A_p$, and hence
the same rotating crossing sets and local signs. The only additional rank-three block
is the neutral line, which contributes zero by the fixed neutral-sector convention of
Definition~\ref{def:hrank_neutral_sector_convention}.
\end{proof}

\subsection{Affine rank-three examples}
\label{subsec:rank_three_affine_examples}

Suppose now that
\begin{equation}\label{eq:rthree_affine_path} A_p=A+\nu p, \qquad \nu\neq0.
\end{equation}
We use the scalar $m$-slope sign convention fixed in
Proposition~\ref{prop:hrank_affine_crossing_signs}. Thus the local sign is determined
by the derivative of the scalar transfer determinant along the actual scalar
parameter:
\begin{equation}\label{eq:rthree_affine_crossing_sign_from_m} \delta = \operatorname{sign}\left(F'(0)m'(p_*)\right).
\end{equation}
This is the convention induced by the crossing form
\eqref{eq:hrank_operator_crossing_form}. In particular, for the affine path
\eqref{eq:rthree_affine_path}, the two rotating scalar parameters in rank three are
\begin{equation}\label{eq:rthree_affine_scalar_parameters} m_{k,+}(p)=k+A_p, \qquad m_{k,-}(p)=k-A_p,
\end{equation}
and hence
\begin{equation}\label{eq:rthree_affine_scalar_derivatives} m'_{k,+}(p)=\nu, \qquad m'_{k,-}(p)=-\nu.
\end{equation}
Thus the two branches have opposite scalar crossing signs.

Assume, for this subsection, that the scalar regularization is oriented so that
\begin{equation}\label{eq:rthree_affine_Fprime_positive} F'(0)>0.
\end{equation}
Then the $+$-branch contributes with sign $\operatorname{sign}(\nu)$, while
the $-$-branch contributes with sign $-\operatorname{sign}(\nu)$. Therefore the
rank-three affine formula is
\begin{equation}\label{eq:rthree_affine_formula}
\begin{split}
\operatorname{sf}_{O(2)}
\left(
D_{p,\text{APS,reg}}^{E_{\mathbb C}^{(3)}}
\right)
=
&
\operatorname{sign}(\nu)
\sum_{k\in\mathcal K_+}
\#\left\{
p\in(0,1):
A+\nu p=-k
\right\}
[\rho_k]
\\
&
-
\operatorname{sign}(\nu)
\sum_{k\in\mathcal K_+}
\#\left\{
p\in(0,1):
A+\nu p=k
\right\}
[\rho_k]
\\
&
+
\operatorname{sign}(\nu)
\#\left\{
p\in(0,1):
A+\nu p=0
\right\}
\left(
[\mathbf 1]-[\det]
\right).
\end{split}
\end{equation}
Here the convention in the last line is that the scalar branch $m_{0,+}=A_p$
contributes to the $\mathbf 1$-line and the scalar branch $m_{0,-}=-A_p$
contributes to the $\det$-line. With the opposite zero-block identification, the
last factor would be $-[\mathbf 1]+[\det]$. The neutral block does not appear in
this expression.

For example, if $A_p=-1.2+p$, then $\nu=1$, and the unique positive-label
crossing occurs at
$A_p=-1$, with $p_*=0.2$.
The representative branch is
$q(p)=m_{1,+}(p)=1+A_p=p-0.2$,
so $q'(p_*)=1$.  Since this subsection assumes $F'(0)>0$, the representative
crossing-form sign is
$\operatorname{sign}\left(F'(0)q'(p_*)\right)=+1$.
The reflected scalar coordinate is
$\widetilde q(p)=m_{-1,-}(p)=-1-A_p=0.2-p=-q(p)$,
so it has the opposite raw derivative.  This does not change the $RO(O(2))$
coefficient, because in the reflected chart the determinant is
$\widetilde F(\widetilde q)=F(-\widetilde q)$, and therefore
$\widetilde F'(0)\widetilde q'(p_*)=F'(0)q'(p_*)$.  Equivalently, the reflected
block is the reflected copy of the representative block, and together
$(1,+)\oplus(-1,-)$ carries one copy of $\rho_1$.  Hence
\begin{equation}\label{eq:rthree_example_nonzero_block}
\operatorname{sf}_{O(2)}
\left(
D_{p,\text{APS,reg}}^{E_{\mathbb C}^{(3)}}
\right)
=
[\rho_1].
\end{equation}

If $A_p=-0.2+p$, then $\nu=1$, and the unique moving rotating zero-mode crossing
occurs at
$A_p=0$, with $p_*=0.2$.
At this point the two scalar zero-mode branches $m_{0,+}=A_p$ and
$m_{0,-}=-A_p$ belong to the two self-paired reflection sectors $\mathbf 1$
and $\det$.  Their opposite $m$-slopes therefore remain as opposite signs on
two distinct $O(2)$-types, and with the zero-block identification fixed above,
\begin{equation}\label{eq:rthree_example_zero_block}
\operatorname{sf}_{O(2)}
\left(
D_{p,\text{APS,reg}}^{E_{\mathbb C}^{(3)}}
\right)
=
[\mathbf 1]-[\det].
\end{equation}
This class is killed by the dimension map:
\begin{equation}\label{eq:rthree_dimension_map_examples} \dim\left([\mathbf 1]-[\det]\right)=0.
\end{equation}
Thus the zero-mode example is the one where the opposite signs survive as a
virtual reflection-sector class rather than being absorbed into a single nonzero
Fourier representation $\rho_k$.

\begin{remark}[What would go wrong without the neutral convention]
\label{rem:rthree_without_neutral_convention}
If the neutral $k=0$ sector is not fixed by an auxiliary $p$-independent boundary
condition, then the block $m_{0,0}(p)=0$ persists for all $p$. This is not a
regular crossing and should not be inserted into the local crossing formula. In that
case the family may fail to have invertible endpoints, and the usual spectral-flow
formula is not directly applicable. The neutral-sector convention is therefore not a
cosmetic choice; it is required to separate the moving rank-two spectral-flow mechanism
from the stationary neutral zero mode.
\end{remark}

\section{Examples and loss of information under the dimension map}
\label{sec:examples_dimension_map_loss}

Now we give explicit examples illustrating the two loss mechanisms previewed in the
introduction: the dimension map can identify distinct nonzero Fourier representations,
and it can kill a signed zero-mode class. We use the representation-ring convention
fixed in Remark~\ref{rem:representation_ring_convention}, so
\begin{equation}\label{eq:examples_dimension_map} \dim:RO(O(2))\longrightarrow \mathbb Z
\end{equation}
is normalized by
\begin{equation}\label{eq:examples_basic_dimensions} \dim[\rho_k]=2, \qquad \dim[\mathbf 1]=1, \qquad \dim[\det]=1.
\end{equation}
while
\begin{equation}\label{eq:examples_zero_difference_dimension} \dim\left([\mathbf 1]-[\det]\right)=0.
\end{equation}

Throughout the examples, we work in the periodic spin sector and use affine paths
\begin{equation}\label{eq:examples_affine_path_general} A_p=A+\nu p, \qquad p\in[0,1], \qquad \nu\neq0.
\end{equation}
All examples in this section use the scalar $m$-slope sign convention of
Proposition~\ref{prop:hrank_affine_crossing_signs}. Thus, the local sign of a scalar
crossing is determined by
\begin{equation}\label{eq:examples_m_slope_sign_convention} \delta = \operatorname{sign}\left(F'(0)m'(p_*)\right),
\end{equation}
which is the sign induced by the crossing form
\eqref{eq:hrank_operator_crossing_form}. For simplicity, throughout the concrete
examples below, we assume that the scalar regularization is oriented so that
\begin{equation}\label{eq:examples_Fprime_positive} F'(0)>0.
\end{equation}

For a nonzero rotating block in rank two, the two scalar branches are
\begin{equation}\label{eq:examples_rank_two_scalar_branches} m_{k,+}(p)=k+A_p, \qquad m_{k,-}(p)=k-A_p.
\end{equation}
Hence, for the affine path $A_p=A+\nu p$,
\begin{equation}\label{eq:examples_rank_two_scalar_branch_derivatives} m'_{k,+}(p)=\nu, \qquad m'_{k,-}(p)=-\nu.
\end{equation}
Thus the two branches have opposite scalar crossing signs. In the rotating zero
block, we use the convention that the branch $m_{0,+}=A_p$ contributes to the
$\mathbf 1$-line and the branch $m_{0,-}=-A_p$ contributes to the $\det$-line.
Therefore, under \eqref{eq:examples_Fprime_positive}, a zero-mode crossing contributes
\begin{equation}\label{eq:examples_zero_mode_signed_class} \operatorname{sign}(\nu)\left([\mathbf 1]-[\det]\right).
\end{equation}
With the opposite zero-block identification, this class would be
$\operatorname{sign}(\nu)(-[\mathbf 1]+[\det])$.

\subsection{Rank-two benchmark: nonzero block versus zero block}
\label{subsec:examples_rank_two_benchmark}

We first recall the basic rank-two phenomenon. Take $n=2$, so that $J_2=J$, and work in
the periodic sector. Consider the path
\begin{equation}\label{eq:examples_rank_two_path_nonzero} A_p=-1.2+p.
\end{equation}
The endpoints are $A_0=-1.2$ and $A_1=-0.2$, so neither endpoint is an integer. The
unique crossing occurs when $A_p=-1$, namely at
\begin{equation}\label{eq:examples_rank_two_nonzero_crossing_time} p=0.2.
\end{equation}
This is the $+$-branch crossing
\begin{equation}\label{eq:examples_rank_two_nonzero_plus_branch} m_{1,+}(p)=1+A_p=0.
\end{equation}
Since $\nu=1$ and $F'(0)>0$, the local sign is positive. The crossing lies in the
nonzero reflection-paired block $\{1,-1\}$, and hence
\begin{equation}\label{eq:examples_rank_two_nonzero_contribution}
\operatorname{sf}_{O(2)}
\left(
D_{p,\text{APS,reg}}^{E_{\mathbb C}^{(2)}}
\right)
=
[\rho_1].
\end{equation}
Applying the dimension map gives ordinary spectral flow
\begin{equation}\label{eq:examples_rank_two_nonzero_dimension} \dim[\rho_1]=2.
\end{equation}

Now consider the shifted path
\begin{equation}\label{eq:examples_rank_two_path_zero} A_p=-0.2+p.
\end{equation}
The endpoints are $A_0=-0.2$ and $A_1=0.8$, so again the endpoints are invertible. The
unique moving rotating zero-mode crossing occurs when $A_p=0$, again at
\begin{equation}\label{eq:examples_rank_two_zero_crossing_time} p=0.2.
\end{equation}
At this crossing, the two zero-mode scalar branches are
\begin{equation}\label{eq:examples_rank_two_zero_branches} m_{0,+}(p)=A_p, \qquad m_{0,-}(p)=-A_p.
\end{equation}
They cross with opposite $m$-slopes. With the zero-block convention fixed above,
the $\mathbf 1$-line has positive sign and the $\det$-line has negative sign.
Therefore
\begin{equation}\label{eq:examples_rank_two_zero_contribution}
\operatorname{sf}_{O(2)}
\left(
D_{p,\text{APS,reg}}^{E_{\mathbb C}^{(2)}}
\right)
=
[\mathbf 1]-[\det].
\end{equation}
Applying the dimension map gives
\begin{equation}\label{eq:examples_rank_two_zero_dimension} \dim\left([\mathbf 1]-[\det]\right)=0.
\end{equation}

Thus, the two paths no longer have the same ordinary spectral flow under the
$m$-slope convention. Instead, they illustrate two different features. The nonzero
crossing contributes the two-dimensional representation $[\rho_1]$, while the
rotating zero-mode crossing contributes the signed class $[\mathbf 1]-[\det]$,
which is nonzero in $RO(O(2))$ but becomes invisible after applying the ordinary dimension
map:
\begin{equation}\label{eq:examples_rank_two_zero_invisible}
[\mathbf 1]-[\det]\neq0
\quad\text{in }RO(O(2)),
\qquad
\dim\left([\mathbf 1]-[\det]\right)=0.
\end{equation}

\begin{remark}[Interpretation of the shift]
\label{rem:examples_rank_two_shift_interpretation}
The two paths above differ by an integer shift of the scalar parameter, but here we are
not quotienting the path by an equivariant gauge identification. We keep the fixed
Fourier decomposition and the fixed $O(2)$-linearization. With this convention, the
first path is represented in the nonzero Fourier block $\{1,-1\}$, while the second
path is represented in the self-paired zero block. Under the $m$-slope sign
convention, the self-paired zero block carries opposite signs on the two real
one-dimensional reflection sectors.
\end{remark}

\subsection{Identical higher-rank blocks}
\label{subsec:examples_identical_higher_rank_blocks}

We now repeat the same construction for the identical-block higher-rank twist
\begin{equation}\label{eq:examples_identical_J} J_{2r}=J\oplus\cdots\oplus J.
\end{equation}
There are $r$ identical rotating two-planes, so every rank-two moving crossing is copied
$r$ times.

For the path $A_p=-1.2+p$, the unique moving crossing occurs in the nonzero block
$\{1,-1\}$. Since there are $r$ identical rotating two-planes,
\begin{equation}\label{eq:examples_identical_nonzero_contribution}
\operatorname{sf}_{O(2)}
\left(
D_{p,\text{APS,reg}}^{E_{\mathbb C}^{(2r)}}
\right)
=
r[\rho_1].
\end{equation}
After applying the dimension map,
\begin{equation}\label{eq:examples_identical_nonzero_dimension} \dim r[\rho_1]=2r.
\end{equation}

For the shifted path $A_p=-0.2+p$, the unique moving crossing occurs in the rotating
zero block. Each rotating two-plane contributes $[\mathbf 1]-[\det]$ under the
zero-block convention fixed above. Hence
\begin{equation}\label{eq:examples_identical_zero_contribution}
\operatorname{sf}_{O(2)}
\left(
D_{p,\text{APS,reg}}^{E_{\mathbb C}^{(2r)}}
\right)
=
r\left([\mathbf 1]-[\det]\right).
\end{equation}
After applying the dimension map,
\begin{equation}\label{eq:examples_identical_zero_dimension} \dim r\left([\mathbf 1]-[\det]\right)=0.
\end{equation}
Thus the identical-block higher-rank example is a multiplicity-enhanced version of the
signed zero-mode phenomenon: the representation-valued class may be nonzero even when
the ordinary spectral flow is zero.

\subsection{Weighted rank-four example}
\label{subsec:examples_weighted_rank_four}

The next example shows a genuinely higher-rank feature. To avoid accidental simultaneous
crossings, take
\begin{equation}\label{eq:examples_weighted_rank_four_J} J_4=2J\oplus3J.
\end{equation}
Thus the rotating weights are $\mu_1=2$ and $\mu_2=3$. In the first rotating plane, the
$k$-th nonzero block crosses when $A_p=\pm k/2$. In the second rotating plane, the
$k$-th nonzero block crosses when $A_p=\pm k/3$.

Consider first the path
\begin{equation}\label{eq:examples_weighted_path_first} A_p=-\frac{3}{5}+\frac{1}{5}p.
\end{equation}
This path has $\nu=1/5>0$ and crosses $A_p=-1/2$ at $p=1/2$. The value
$-1/2$ is the $k=1$ $+$-branch crossing for the first rotating plane, since
\begin{equation}\label{eq:examples_weighted_first_plus_branch} m_{1,1,+}(p)=1+2A_p=0.
\end{equation}
It is not a crossing value for the second rotating plane, because the equation
$-1/2=-k/3$ would force $k=3/2$, which is not an allowed integer Fourier label in
the periodic sector. Therefore, with $F'(0)>0$,
\begin{equation}\label{eq:examples_weighted_first_contribution}
\operatorname{sf}_{O(2)}
\left(
D_{p,\text{APS,reg}}^{E_{\mathbb C}^{(4)}}
\right)
=
[\rho_1].
\end{equation}

Now consider the path
\begin{equation}\label{eq:examples_weighted_path_second} A_p=-\frac{2}{5}+\frac{2}{15}p.
\end{equation}
This path has $\nu=2/15>0$ and crosses $A_p=-1/3$ at $p=1/2$. The value
$-1/3$ is the $k=1$ $+$-branch crossing for the second rotating plane, since
\begin{equation}\label{eq:examples_weighted_second_plus_branch} m_{1,2,+}(p)=1+3A_p=0.
\end{equation}
It is not a crossing value for the first rotating plane, because the equation
$-1/3=-k/2$ would force $k=2/3$, which is not an allowed integer Fourier label.
Therefore
\begin{equation}\label{eq:examples_weighted_second_contribution}
\operatorname{sf}_{O(2)}
\left(
D_{p,\text{APS,reg}}^{E_{\mathbb C}^{(4)}}
\right)
=
[\rho_1].
\end{equation}

These two examples have the same $RO(O(2))$-value even though the internal rotating
two-plane responsible for the crossing is different. Thus the $RO(O(2))$-valued
spectral flow detects the angular representation class but not, by itself, the internal
label $j$.

\begin{remark}[What $RO(O(2))$ can and cannot see]
\label{rem:examples_ROO2_limit}
The $RO(O(2))$-valued invariant sees the representation of the geometric symmetry
group of the base circle. It distinguishes, for example, a nonzero-mode crossing
$[\rho_k]$ from a signed zero-mode crossing such as $[\mathbf 1]-[\det]$, and it
distinguishes $[\rho_k]$ from $[\rho_\ell]$ when $k\neq\ell$. However, if two
different internal rotating two-planes produce crossings with the same angular label
$\rho_k$, then $RO(O(2))$ keeps only their sum as copies of $[\rho_k]$. To remember
the internal plane, one would need a refined invariant keeping track of the internal
centralizer representation.
\end{remark}

\subsection{Different Fourier labels with the same ordinary dimension}
\label{subsec:examples_different_fourier_labels}

The dimension map also forgets which nonzero Fourier mode crossed. For every $k\geq1$,
the representation $\rho_k$ has dimension two:
\begin{equation}\label{eq:examples_rhok_rhol_same_dimension} \dim[\rho_k]=2 \qquad \text{for all }k\geq1.
\end{equation}
But $\rho_k$ and $\rho_\ell$ are distinct real $O(2)$-representation classes whenever
$k\neq\ell$.

For example, in the rank-two model, the path
\begin{equation}\label{eq:examples_path_rho_one} A_p=-1.2+p
\end{equation}
has a unique $+$-branch crossing at $A_p=-1$, so
\begin{equation}\label{eq:examples_rho_one_contribution} \operatorname{sf}_{O(2)} = [\rho_1].
\end{equation}
On the other hand, the path
\begin{equation}\label{eq:examples_path_rho_two} A_p=-2.2+p
\end{equation}
has a unique $+$-branch crossing at $A_p=-2$, so
\begin{equation}\label{eq:examples_rho_two_contribution} \operatorname{sf}_{O(2)} = [\rho_2].
\end{equation}
Both have ordinary spectral flow equal to $2$, but
\begin{equation}\label{eq:examples_rho_one_neq_rho_two} [\rho_1]\neq[\rho_2] \qquad \text{in }RO(O(2)).
\end{equation}
Thus ordinary spectral flow cannot distinguish which angular Fourier representation
carried the crossing.

\subsection{A weighted example distinguishing \texorpdfstring{$\rho_1$ and $\rho_2$}{rho1 and rho2}}
\label{subsec:examples_weighted_rho1_rho2}

Weighted higher-rank twists can also produce different Fourier labels at different
parameter values without forcing simultaneous crossings in the other rotating block.
For this purpose, take
\begin{equation}\label{eq:examples_weighted_rho_J} J_4=J\oplus\sqrt{2}J.
\end{equation}
The irrational weight prevents the weighted crossing values from accidentally
coinciding with the integer crossing values of the first block.

Consider first the path
\begin{equation}\label{eq:examples_weighted_rho1_path} A_p=-\frac{1}{\sqrt{2}}-\frac{1}{10}+\frac{1}{5}p.
\end{equation}
This path has positive slope and crosses $A_p=-1/\sqrt{2}$ at $p=1/2$. This is
the $k=1$ $+$-branch crossing in the weighted block with $\mu_2=\sqrt{2}$, and
it is not a crossing value for the unweighted block. Hence
\begin{equation}\label{eq:examples_weighted_rho1_contribution} \operatorname{sf}_{O(2)} = [\rho_1].
\end{equation}

Now consider the path
\begin{equation}\label{eq:examples_weighted_rho2_path} A_p=-\sqrt{2}-\frac{1}{10}+\frac{1}{5}p.
\end{equation}
This path has positive slope and crosses $A_p=-\sqrt{2}$ at $p=1/2$. This is the
$k=2$ $+$-branch crossing in the weighted block, because
\begin{equation}\label{eq:examples_weighted_rho2_crossing_identity} -\sqrt{2}=-\frac{2}{\sqrt{2}}.
\end{equation}
It is again not a crossing value for the unweighted block. Therefore
\begin{equation}\label{eq:examples_weighted_rho2_contribution} \operatorname{sf}_{O(2)} = [\rho_2].
\end{equation}

Again both $[\rho_1]$ and $[\rho_2]$ have dimension two, but they are different
classes in $RO(O(2))$. This shows that the equivariant invariant remembers the
Fourier label even when the ordinary spectral flow does not.

\subsection{General loss-of-information statement}
\label{subsec:examples_general_loss_statement}

We summarize the preceding examples in a single statement.

\begin{proposition}[The dimension map loses equivariant spectral-flow information]
\label{prop:examples_dimension_map_loses_information}
In the higher-rank orthogonal-twist model, the ordinary spectral flow obtained from
$\operatorname{sf}_{O(2)}$ by applying the dimension map does not determine the
$RO(O(2))$-valued spectral flow.

More precisely, there are two basic ways in which information is lost. First, the
dimension map can kill a nontrivial signed zero-mode class. For example, under the
zero-block convention fixed above, one may obtain
\begin{equation}\label{eq:examples_general_zero_kernel_class} \operatorname{sf}_{O(2)} = [\mathbf 1]-[\det],
\end{equation}
which has image $0\in\mathbb Z$ under the dimension map. This has the same ordinary
spectral flow as a path with no crossings, but a different $RO(O(2))$-valued spectral
flow.

Second, for $k\neq\ell$, one may obtain
\begin{equation}\label{eq:examples_general_loss_rhok_rhol}
\operatorname{sf}_{O(2)}=[\rho_k]
\qquad
\text{and}
\qquad
\operatorname{sf}_{O(2)}=[\rho_\ell],
\end{equation}
both of which have image $2\in\mathbb Z$ under the dimension map, although
$[\rho_k]\neq[\rho_\ell]$ in $RO(O(2))$.
\end{proposition}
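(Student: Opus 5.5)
The plan is to prove the proposition by exhibiting explicit paths and checking them against the affine formula of Corollary~\ref{cor:hrank_affine_formula}, together with two facts about $RO(O(2))$: the classes $[\mathbf 1]$, $[\det]$, $[\rho_k]$ ($k\geq1$) are pairwise non-isomorphic irreducible real representations, and $RO(O(2))$ is free abelian on them. For the second fact I will use characters: $\mathbf 1$ and $\det$ differ on a reflection, and the character of $\rho_k$ on the rotation by $\varphi$ is $2\cos(k\varphi)$, which separates different $k$. Given this, $[\mathbf 1]-[\det]\neq0$ and $[\rho_k]\neq[\rho_\ell]$ for $k\neq\ell$ are immediate. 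In both cases the dimension values follow from \eqref{eq:examples_basic_dimensions}.

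For the first mechanism I will take the rank-two model with $A_p=-0.2+p$, as in \autoref{subsec:examples_rank_two_benchmark}. I first check the hypotheses of Theorem~\ref{thm:hrank_ROO2_spectral_flow}.
\begin{enumerate}
\item The endpoints $A_0=-0.2$ and $A_1=0.8$ avoid all crossing values $\pm k$, so by Lemma~\ref{lem:transfer_determinant_detects_zero_modes} no rotating block has a kernel there. The neutral sector is absent when $n=2$, and in general is fixed by Definition~\ref{def:hrank_neutral_sector_convention}. Hence the endpoints are invertible.
\item The only crossing is at $p_*=0.2$. It is regular because $\nu\neq0$, by Corollary~\ref{cor:hrank_scalar_boundary_to_operator_crossing}.
\item The zero block is admissible and diagonal by Lemma~\ref{lem:hrank_admissible_zero_block_regularization}.
\end{enumerate}
Corollary~\ref{cor:hrank_affine_formula}, with the orientation $F'(0)>0$, then gives $\operatorname{sf}_{O(2)}=[\mathbf 1]-[\det]$.

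To compare with a path with no crossings, I will take a short affine path such as $A_p=0.3+0.1p$. It meets no value $\pm k$ with $k\geq0$, so its equivariant spectral flow is $0$. Its ordinary spectral flow is therefore $0=\dim([\mathbf 1]-[\det])$, while its equivariant value differs. If the opposite zero-block labeling is used, the class changes only by a sign and is still nonzero.

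For the second mechanism I will use $A_p=-k-0.2+p$ and $A_p=-\ell-0.2+p$. Each has a single $+$-branch crossing, at $A_p=-k$ and $A_p=-\ell$ respectively, and invertible endpoints, because $-k-0.2$ and $-k+0.8$ are not integers. The affine formula gives $[\rho_k]$ and $[\rho_\ell]$, and each has dimension $2$.

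The main obstacle is not the computation. It is making sure these examples genuinely satisfy the standing hypotheses: endpoint invertibility must include the neutral sector, and the sign must be independent of the reflected chart. I will handle these by citing Remark~\ref{rem:representative_branch_reflected_coordinate} for the sign and Definition~\ref{def:hrank_neutral_sector_convention} for the neutral sector. I will also note that the $[\mathbf 1]-[\det]$ conclusion depends on the diagonal zero-block convention, which is part of the stated hypotheses.
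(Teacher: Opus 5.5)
Your proposal is correct and follows the paper's proof essentially step for step. The paper likewise realizes $[\mathbf 1]-[\det]$ by the rank-two path $A_p=-0.2+p$, compares it with a path with no crossings, and realizes $[\rho_k]$ and $[\rho_\ell]$ by affine paths with a unique $+$-branch crossing at $A_p=-k$ or $A_p=-\ell$ (e.g.\ $A_p=-1.2+p$ and $A_p=-2.2+p$). Your hypothesis checks, the concrete no-crossing path $A_p=0.3+0.1p$, and the character argument for $[\mathbf 1]-[\det]\neq0$ and $[\rho_k]\neq[\rho_\ell]$ make explicit what the paper asserts without detail.
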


\begin{proof}
The first phenomenon is realized by the rank-two path $A_p=-0.2+p$, or by the same
moving part inside the rank-three model with fixed neutral sector. This path has a
rotating zero-mode crossing. Under the scalar $m$-slope convention and the zero-block
identification fixed above, its contribution is $[\mathbf 1]-[\det]$. This class is
nonzero in $RO(O(2))$, but its image under the dimension map is zero. A path with no
crossing has $RO(O(2))$-valued spectral flow $0$, and hence has the same ordinary
spectral flow but a different representation-valued spectral flow.

The second phenomenon is realized using affine paths with unique crossings in different
nonzero Fourier blocks, for example $A_p=-1.2+p$ and $A_p=-2.2+p$. These give
$[\rho_1]$ and $[\rho_2]$, respectively. More generally, paths can be chosen so that
the unique $+$-branch crossing occurs at $A_p=-k$ or $A_p=-\ell$. Since
$\rho_k$ and $\rho_\ell$ are distinct for $k\neq\ell$, the equivariant spectral
flows are different, while their ordinary dimensions are both two.
\end{proof}

\begin{remark}[Meaning of the examples]
\label{rem:examples_meaning}
The examples show that $RO(O(2))$-valued spectral flow keeps track of the symmetry
representation of the crossing kernel together with its crossing-form sign. Ordinary
spectral flow keeps only the dimension-weighted signed count. Therefore it can forget
whether a nontrivial signed zero-mode representation occurred, and it can forget which
nonzero Fourier representation carried a crossing.
\end{remark}

\begin{remark}[Two kinds of information loss]
\label{rem:examples_two_kinds_information_loss}
The examples above exhibit two different ways in which the dimension map loses
information. The first is the usual representation-label loss: for $k\neq\ell$,
one has
$[\rho_k]\neq[\rho_\ell]$ in $RO(O(2))$,
but
$\dim[\rho_k]=\dim[\rho_\ell]=2$.
Thus, ordinary spectral flow remembers the signed dimension of the crossing space
but forgets which nonzero Fourier representation carried the crossing.

The second phenomenon is stronger. A rotating zero-mode crossing may contribute
the signed class
$[\mathbf 1]-[\det]$.
This class can be nonzero in $RO(O(2))$, but
$\dim\left([\mathbf 1]-[\det]\right)=0$.
Hence, ordinary spectral flow does not merely identify two different nonzero representation classes; it can completely eliminate a nontrivial signed equivariant zero-mode contribution.
\end{remark}

\section{APS index and endpoint $\eta$ terms}
\label{sec:hrank_APS_index_endpoint_eta}
The moving-family result of the paper is the $RO(O(2))$-valued spectral-flow formula above. Now, we describe the fixed-parameter APS consequences in the cylinder model.

We discuss the APS index consequences of the higher-rank orthogonal-twist model. In the spectral-flow sections, the parameter $A_p$ varied along a path and the regularized APS boundary condition was used to obtain a continuous path of self-adjoint Fredholm realizations. Now, we fix a single parameter value $A$, or more generally two endpoint values $A_0$ and $A_T$, and discuss the corresponding chiral
APS index.

In this two-dimensional cylinder model, the interior local
index density vanishes. For a constant parameter $A$, the higher-rank orthogonal connection
$\nabla_A^{E_{\mathbb R}^{(n)}}=d+AJ_n\,d\theta$
is flat. Thus, the twisting field contributes no curvature density in the interior. The purely Riemannian contribution also gives no two-form density on a two-dimensional spin cylinder. Therefore, the chiral APS index is controlled entirely by endpoint $\eta$ terms, up to the finite-dimensional correction coming from whatever APS kernel convention is chosen for non-invertible boundary operators.

\subsection{The fixed-parameter higher-rank chiral APS operator}
\label{subsec:hrank_fixed_parameter_chiral_APS}

Let $A$ be a real parameter. The higher-rank twisting connection is
\begin{equation}\label{eq:eta_fixed_higher_rank_connection} \nabla_A^{E_{\mathbb R}^{(n)}}=d+AJ_n\,d\theta .
\end{equation}
Its complexification is
\begin{equation}\label{eq:eta_fixed_complex_connection} \nabla_A^{E_{\mathbb C}^{(n)}}=d+AJ_n\,d\theta .
\end{equation}
Let $D_A^{E_{\mathbb C}^{(n)}}$ denote the corresponding twisted Dirac operator on
$S\otimes E_{\mathbb C}^{(n)}$. Since $S=S^+\oplus S^-$, the Dirac operator has
the chiral form
\begin{equation}\label{eq:eta_chiral_decomposition}
D_A^{E_{\mathbb C}^{(n)}}
=
\begin{pmatrix}
0&D_A^-\\
D_A^+&0
\end{pmatrix}.
\end{equation}
We write
\begin{equation}\label{eq:eta_chiral_APS_operator}
D_{A,\text{APS}}^{E_{\mathbb C}^{(n)},+}
:
\Dom\left(D_{A,\text{APS}}^{E_{\mathbb C}^{(n)},+}\right)
\subset
L^2(M;S^+\otimes E_{\mathbb C}^{(n)})
\longrightarrow
L^2(M;S^-\otimes E_{\mathbb C}^{(n)})
\end{equation}
for the chiral operator with APS boundary condition.

The endpoint boundary operators associated to $D_A^{E_{\mathbb C}^{(n)},+}$ are
denoted by
\begin{equation}\label{eq:eta_endpoint_boundary_operators_def} B_{A,0}^{E_{\mathbb C}^{(n)},+}, \qquad B_{A,T}^{E_{\mathbb C}^{(n)},+}.
\end{equation}
With the sign convention fixed in \autoref{sec:setup}, these are the positive-chirality
boundary operators entering the APS projection at $Y_0$ and $Y_T$, respectively.

\subsection{Endpoint boundary spectra in separated blocks}
\label{subsec:hrank_endpoint_boundary_spectra}

We now describe the endpoint tangential spectra for the chiral operator. In the chiral index discussion, the boundary operator is the tangential operator entering the APS condition for $D_A^+$.

With the sign convention fixed by the inward normals in \autoref{sec:setup}, the endpoint chiral tangential operators have the following separated-block eigenvalues. For a rotating two-plane with weight $\mu_j$, let
\begin{equation}\label{eq:eta_rotating_parameters_fixed_A} m_{k,j,+}(A)=k+\mu_jA, \qquad m_{k,j,-}(A)=k-\mu_jA .
\end{equation}
For the neutral block,
\begin{equation}\label{eq:eta_neutral_parameter_fixed_A} m_{k,a,0}(A)=k .
\end{equation}

On a rotating block $(k,j,\sigma)$, the endpoint chiral tangential eigenvalues are
\begin{equation}\label{eq:eta_rotating_boundary_eigenvalues}
\lambda_{0}^{(k,j,\sigma)}(A)
=
\frac{m_{k,j,\sigma}(A)}{f(0)},
\qquad
\lambda_{T}^{(k,j,\sigma)}(A)
=
-\frac{m_{k,j,\sigma}(A)}{f(T)} .
\end{equation}
On a neutral block $(k,a,0)$, the corresponding eigenvalues are
\begin{equation}\label{eq:eta_neutral_boundary_eigenvalues}
\lambda_{0}^{(k,a,0)}(A)
=
\frac{k}{f(0)},
\qquad
\lambda_{T}^{(k,a,0)}(A)
=
-\frac{k}{f(T)} .
\end{equation}

Equivalently, after identifying the two endpoint circles by the angular coordinate,
the endpoint chiral tangential operators satisfy the sign-scale relation 
\begin{equation}\label{eq:eta_endpoint_chiral_sign_scale_pre}
B_{A,T}^{E_{\mathbb C}^{(n)},+}
=
-c\,\mathcal V\,
B_{A,0}^{E_{\mathbb C}^{(n)},+}
\mathcal V^{-1},
\qquad
c=\frac{f(0)}{f(T)}>0,
\end{equation}
whenever the same parameter $A$ is used at both endpoints. Where,
\begin{equation}
\mathcal V:
L^2(Y_0;S^+\otimes E_{\mathbb C}^{(n)}|_{Y_0})
\longrightarrow
L^2(Y_T;S^+\otimes E_{\mathbb C}^{(n)}|_{Y_T})
\end{equation}
be the unitary identification induced by the coordinate identification
$(0,\theta)\mapsto(T,\theta)$, together with the fixed product trivializations
of the spinor and twisting factors. Under this identification, the Fourier
mode $e^{ik\theta}$ and the fiber eigenvectors of $J_n$ are identified at the
two endpoints.

Thus the boundary operators are invertible on the rotating part when no block parameter
$k\pm\mu_jA$ vanishes. In the periodic sector, a neutral block has a $k=0$ boundary
kernel whenever $z>0$. This is why endpoint kernel conventions must be specified in
odd rank.

\begin{definition}[Endpoint nonresonance]
\label{def:eta_endpoint_nonresonance}
We say that the fixed parameter $A$ is rotating-nonresonant if
\begin{equation}\label{eq:eta_rotating_nonresonance_condition}
k+\mu_jA\neq0
\quad\text{and}\quad
k-\mu_jA\neq0
\qquad
\text{for all }k\in\mathcal K\text{ and all }j=1,\ldots,r .
\end{equation}
Equivalently, no rotating endpoint block has zero chiral tangential eigenvalue.
\end{definition}

\begin{remark}[Neutral boundary kernels]
\label{rem:eta_neutral_boundary_kernels}
If $z>0$ and the spin structure are periodic, then the neutral $k=0$ boundary
parameter is zero for every value of $A$. Thus, the endpoint chiral tangential operator
has a stationary neutral kernel even when $A$ is rotating nonresonant. The APS index
formula still makes sense after fixing a self-adjoint kernel convention, but the endpoint
correction must include the corresponding finite-dimensional kernel or spectral-section
correction.
\end{remark}

\subsection{Vanishing of the local index density}
\label{subsec:hrank_local_index_density_vanishes}
In our cylinder model there is no
interior two-form density left to integrate in the APS index formula. The index is
therefore defined by the boundary terms.

First suppose that the parameter $A$ is fixed. The higher-rank orthogonal connection is
\begin{equation}\label{eq:eta_fixed_connection_density} \nabla_A^{E_{\mathbb C}^{(n)}}=d+AJ_n\,d\theta .
\end{equation}
Its curvature is
\begin{equation}\label{eq:eta_curvature_vanishes}
F_{\nabla_A^{E_{\mathbb C}^{(n)}}}
=
d(AJ_n\,d\theta)
+
(AJ_n\,d\theta)\wedge(AJ_n\,d\theta)
=
0 .
\end{equation}
Indeed, $A$ is constant, so $d(AJ_n\,d\theta)=0$, and the second term vanishes
because $d\theta\wedge d\theta=0$.

Thus the twisting field contributes no interior curvature density. The Riemannian
part also contributes no two-form density on a two-dimensional spin cylinder: its first
possible nonzero curvature contribution occurs in degree four. Since the cylinder is
two-dimensional, there is no degree-four form to integrate. Hence the whole interior
local density in the APS index formula vanishes:
\begin{equation}\label{eq:eta_local_density_vanishes} \text{local interior APS index density}=0 .
\end{equation}

In the following proposition, both endpoint boundary operators are understood with the
inward-normal sign convention fixed in \autoref{sec:setup}. Thus the sign difference
between $Y_0$ and $Y_T$ is already built into the operators
$B_{A,0}^{E_{\mathbb C}^{(n)},+}$ and $B_{A,T}^{E_{\mathbb C}^{(n)},+}$. Translating to an
outward-normal APS convention changes the notation for the endpoint tangential
operators, but not the cancellation statement below.

\begin{proposition}[APS index reduces to endpoint $\eta$ terms]
\label{prop:eta_APS_index_endpoint_terms}
Consider the fixed-parameter higher-rank orthogonal model. Let
$D_{A,\text{APS}}^{E_{\mathbb C}^{(n)},+}$ be the chiral APS operator obtained from
$D_A^{E_{\mathbb C}^{(n)}}$, with endpoint chiral tangential operators
$B_{A,0}^{E_{\mathbb C}^{(n)},+}$ and $B_{A,T}^{E_{\mathbb C}^{(n)},+}$, and with
finite-dimensional endpoint kernel conventions $\mathcal P_0$ and $\mathcal P_T$
whenever endpoint kernels occur. Then the local interior index density vanishes, and
the APS index has the endpoint form
\begin{equation}\label{eq:eta_APS_index_endpoint_formula}
\operatorname{ind}
\left(
D_{A,\text{APS}}^{E_{\mathbb C}^{(n)},+}
\right)
=
-\bar\eta_{\mathcal P_0}
\left(
B_{A,0}^{E_{\mathbb C}^{(n)},+}
\right)
-
\bar\eta_{\mathcal P_T}
\left(
B_{A,T}^{E_{\mathbb C}^{(n)},+}
\right).
\end{equation}
Here $\bar\eta_{\mathcal P_0}$ and $\bar\eta_{\mathcal P_T}$ denote the reduced $\eta$ corrections determined by the chosen endpoint APS kernel conventions. If both endpoint
chiral tangential operators are invertible, these are the usual reduced $\eta$ invariants.
\end{proposition}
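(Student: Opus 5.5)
I will apply the Atiyah--Patodi--Singer index theorem for the chiral operator on a manifold with two boundary components, in its version for possibly non-invertible boundary operators completed by a finite-dimensional kernel convention. For the model on $M=[0,T]\times S^1$ with inward normals as in \eqref{eq:inward_normal}, this has the form
\begin{equation*}
\operatorname{ind}\left(D_{A,\text{APS}}^{E_{\mathbb C}^{(n)},+}\right)=\int_M \alpha_0(x)\,dx-\bar\eta_{\mathcal P_0}\left(B_{A,0}^{E_{\mathbb C}^{(n)},+}\right)-\bar\eta_{\mathcal P_T}\left(B_{A,T}^{E_{\mathbb C}^{(n)},+}\right).
\end{equation*}
Here $\bar\eta_{\mathcal P}(B)=\tfrac12(\eta(B)+\dim\ker B)$, modified by the integer kernel correction determined by $\mathcal P$ as in \cite{APS1,BaerBallmann2012}. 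The proof then reduces to two steps: showing that $\int_M\alpha_0=0$, and checking that the sign conventions of the theorem match those fixed in \autoref{sec:setup}.

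\textbf{The integrand.} The local index density in dimension two is the degree-two component of $\hat A(M)\operatorname{ch}(E_{\mathbb C}^{(n)})$. Since $\hat A$ has no degree-two part, this component reduces to $\operatorname{rank}\cdot\hat A_{(2)}+\operatorname{ch}_{(2)}=\tfrac{i}{2\pi}\operatorname{tr}F$. By \eqref{eq:eta_curvature_vanishes} we have $F=0$. Independently, $\operatorname{tr}(AJ_n)=0$ because $J_n$ is skew, so even the trace term vanishes for nonconstant $A$. This gives \eqref{eq:eta_local_density_vanishes}.

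\textbf{Product structure near the boundary.} APS requires a product metric near $\partial M$, and the warped metric need not be product-type there. I would handle this in one of two ways. The first option is to deform $f$ near $t=0,T$ to be locally constant, keeping the endpoint values. This is a homotopy of elliptic APS problems with fixed boundary operators, so the index is unchanged, and the transgression term integrates to zero because the density vanishes identically in degree two. The second option is to invoke the version of the theorem for non-product metrics, where the boundary correction is a local transgression form in degree one. In that version the Riemannian part is an exact term built from the geodesic curvature. Its total over $\partial M$ cancels with the Gauss--Bonnet-type $\hat A$ contribution, which is again absent in degree two, so no boundary local term survives for the spin Dirac operator.

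\textbf{Orientation of the boundary operators.} The main obstacle I expect is checking that the $B^{+}_{A,t}$ of \eqref{eq:eta_endpoint_boundary_operators_def}, which use the inward-normal convention \eqref{eq:boundary_clifford}, are exactly the operators whose positive spectral projections define the APS domain. I would verify this blockwise using \eqref{eq:eta_rotating_boundary_eigenvalues} and the mode description \eqref{eq:DomDkAPS}: on $S^+$ the condition $u(0)=0$ for $m>0$ corresponds to killing the positive eigenvalue $m/f(0)$, and similarly at $Y_T$ with eigenvalue $-m/f(T)$. This confirms the sign in front of both $\bar\eta$ terms.

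\textbf{Kernels.} Where kernels occur, for example the neutral $k=0$ sector when $z>0$, the kernel conventions $\mathcal P_0,\mathcal P_T$ enter only through the finite-dimensional correction to $\bar\eta$, which is exactly how $\bar\eta_{\mathcal P_t}$ is defined. In the invertible case these reduce to the usual reduced $\eta$ invariants.
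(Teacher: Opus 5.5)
Your route is the paper's: apply the APS theorem, note that the degree-two part $\tfrac{i}{2\pi}\operatorname{tr}F$ of $\hat A\,\mathrm{ch}$ vanishes, and let $\mathcal P_0,\mathcal P_T$ absorb the kernel corrections. The density computation and the collar deformation of option (a) are fine. They are more careful than the paper, which ignores that the warped metric is not of product type near $\partial M$. Option (b) is off target: geodesic curvature belongs to Gauss--Bonnet, whereas for the spin Dirac operator the degree-one transgression of $\hat A$ vanishes because $\hat A$ has no degree-two part.

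\textbf{The gap is the orientation step.} Checking that $P_{>0}(B^+_{A,t})$ reproduces \eqref{eq:DomDkAPS} only compares the paper's conventions with themselves. The sign $-\bar\eta(B)$ in the APS theorem needs two things: $B$ must be the operator in the normal form $D^+=\sigma(\partial_{\mathrm{in}}+B)$ at each end, and the domain must kill $P_{\ge0}(B)$.

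Done on the $u$-component, where you put $S^+$, this check fails. On the $k$-th mode $D^+=i\bigl(\partial_t+\tfrac{f'}{2f}-\tfrac{m}{f}\bigr)$. So the normal-form operator has eigenvalue $-m/f(0)$ at $Y_0$, and, in the inward coordinate $T-t$, eigenvalue $+m/f(T)$ at $Y_T$. It is $-B^+_{A,t}$, not $B^+_{A,t}$.

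Equivalently, the condition $u(0)=0$ for $m>0$ kills the exponentially small end of the zero mode $f^{-1/2}\exp\bigl(m\int_0^tf^{-1}\bigr)$. The normalized modes $f^{-1/2}\exp\bigl(m(\int_0^tf^{-1}-L_f)\bigr)$, modified near $t=0$ to vanish there, are orthonormal approximate kernel vectors for all large $m$. So with the domain you identify, $D^+$ is not Fredholm and the APS theorem does not apply to it.

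Relabelling $S^+$ as the $\gamma_3=+1$ component $v$ does not help. It makes \eqref{eq:eta_rotating_boundary_eigenvalues} the correct normal-form spectrum, but then APS kills $v(0)$ for $m\ge0$, again opposite to \eqref{eq:DomDkAPS}. With the genuine APS domain on $u$, the theorem gives $\tfrac12\bigl(\eta(B^+_{A,0})-h_0\bigr)+\tfrac12\bigl(\eta(B^+_{A,T})-h_T\bigr)$, where $h_t=\dim\ker B^+_{A,t}$. This is not literally the displayed formula.

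\textbf{What rescues the proposition, and what should replace your sign check,} is that the sign of the $\eta$-part does not matter in this model. The eigenvalues of $B^+_{A,0}$ are $(k\pm\mu_jA)/f(0)$ and $k/f(0)$ with $k\in\mathcal K=-\mathcal K$. Those of $B^+_{A,T}$ are their negatives rescaled by $f(T)$. The map $k+\mu_jA\mapsto -k-\mu_jA$ exchanges the two conjugate weight blocks, so each spectrum is symmetric under $\lambda\mapsto-\lambda$ and $\eta(B^+_{A,t})=0$. In the fixed-parameter case, \eqref{eq:eta_endpoint_sign_scale_relation} already gives $\eta(B^+_{A,T})=-\eta(B^+_{A,0})$ and $h_T=h_0$.

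Both sign conventions then give $-\tfrac12(h_0+h_T)$, a pure kernel term, which is exactly what the convention-dependent $\bar\eta_{\mathcal P_t}$ is meant to record. Use the genuine APS domain and this observation in place of the orientation check. Your argument then proves the proposition at the same level of precision as the paper's proof, which simply declares the minus signs part of the endpoint convention.
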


\begin{proof}
The APS index theorem expresses the chiral APS index as the sum of the local interior
index integral and the endpoint $\eta$ corrections determined by the APS boundary
operators and their kernel conventions. By \eqref{eq:eta_local_density_vanishes}, the
local interior density vanishes in the fixed-parameter orthogonal model.
Therefore only the endpoint $\eta$ corrections remain. The two minus signs in
\eqref{eq:eta_APS_index_endpoint_formula} are part of the endpoint convention used here:
the endpoint tangential operators
$B_{A,0}^{E_{\mathbb C}^{(n)},+}$ and $B_{A,T}^{E_{\mathbb C}^{(n)},+}$ already
include the sign difference coming from the two inward-normal choices. If an endpoint
operator has kernel, the notation $\bar\eta_{\mathcal P_t}$ encodes the corresponding
finite-dimensional kernel correction.
\end{proof}

\subsection{Endpoint sign and scale relation}
\label{subsec:hrank_endpoint_sign_scale_relation}

The two endpoint boundary operators are related by a sign and a positive scale. This is the same mechanism already present in the scalar and rank-two models.

First take the same fixed parameter $A$ at both endpoints. After
identifying $Y_0$ and $Y_T$ by the angular coordinate $\theta$, the endpoint
operators satisfy
\begin{equation}\label{eq:eta_endpoint_sign_scale_relation}
B_{A,T}^{E_{\mathbb C}^{(n)},+}
=
-c\,\mathcal V\,
B_{A,0}^{E_{\mathbb C}^{(n)},+}
\mathcal V^{-1},
\qquad
c=\frac{f(0)}{f(T)}>0,
\end{equation}
where $\mathcal V$ is the unitary identification of the endpoint spinor and twisting
fibers induced by the chosen trivialization.

Positive rescaling does not change the $\eta$ invariant, while multiplication of a
self-adjoint operator by $-1$ reverses the sign of its $\eta$ invariant. Hence, when
the endpoint operators are invertible,
\begin{equation}\label{eq:eta_endpoint_eta_cancellation}
\bar\eta
\left(
B_{A,T}^{E_{\mathbb C}^{(n)},+}
\right)
=
-
\bar\eta
\left(
B_{A,0}^{E_{\mathbb C}^{(n)},+}
\right).
\end{equation}

\begin{corollary}[Vanishing of the fixed-parameter chiral APS index]
\label{cor:eta_fixed_parameter_index_vanishes}
Take the same parameter $A$ at both endpoints. Require also that the endpoint
boundary operators are invertible, or that the endpoint kernel conventions are chosen
compatibly with the sign-scale identification
\eqref{eq:eta_endpoint_sign_scale_relation}. Then
\begin{equation}\label{eq:eta_fixed_parameter_index_zero} \operatorname{ind} \left( D_{A,\text{APS}}^{E_{\mathbb C}^{(n)},+} \right) = 0.
\end{equation}
\end{corollary}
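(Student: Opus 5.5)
The plan is to combine Proposition~\ref{prop:eta_APS_index_endpoint_terms} with the sign--scale relation \eqref{eq:eta_endpoint_sign_scale_relation}. Since the local interior density vanishes by \eqref{eq:eta_local_density_vanishes}, Proposition~\ref{prop:eta_APS_index_endpoint_terms} gives
\[
\operatorname{ind}\bigl(D_{A,\text{APS}}^{E_{\mathbb C}^{(n)},+}\bigr)
=
-\bar\eta_{\mathcal P_0}\bigl(B_{A,0}^{E_{\mathbb C}^{(n)},+}\bigr)
-\bar\eta_{\mathcal P_T}\bigl(B_{A,T}^{E_{\mathbb C}^{(n)},+}\bigr).
\]
It then suffices to show that the two endpoint terms cancel.

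\emph{Invertible case.} Here I would work blockwise using \eqref{eq:eta_rotating_boundary_eigenvalues} and \eqref{eq:eta_neutral_boundary_eigenvalues}. The spectrum of $B_{A,T}^{E_{\mathbb C}^{(n)},+}$ is exactly $\{-c\lambda\}$ as $\lambda$ runs over the spectrum of $B_{A,0}^{E_{\mathbb C}^{(n)},+}$, with multiplicities preserved, because $\mathcal V$ is unitary. Consequently the zeta-regularized sums satisfy
\[
\eta\bigl(B_{A,T}^{E_{\mathbb C}^{(n)},+};s\bigr)=-c^{-s}\,\eta\bigl(B_{A,0}^{E_{\mathbb C}^{(n)},+};s\bigr).
\]
After meromorphic continuation, evaluating at $s=0$ gives $\eta(B_T)=-\eta(B_0)$; I would check that the factor $c^{-s}$ is harmless there because $\eta(s)$ is regular at $s=0$. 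There are no kernels, so the reduced invariants also satisfy $\bar\eta(B_T)=-\bar\eta(B_0)$, which is \eqref{eq:eta_endpoint_eta_cancellation}. Substituting into the displayed formula gives index zero.

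\emph{Kernel case.} The main obstacle is the periodic neutral $k=0$ kernel, or a resonant rotating kernel, because there $\bar\eta_{\mathcal P_t}$ contains a finite-dimensional correction. That correction is $\tfrac12\dim\ker$ plus a term depending on the chosen completion. The hypothesis says the conventions are compatible with \eqref{eq:eta_endpoint_sign_scale_relation}, and I would interpret this as follows: $\mathcal P_T$ is the image of $\mathcal P_0$ under $\mathcal V$, with the sign reversal, so that the kernel correction at $Y_T$ is the negative of the one at $Y_0$.

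With this interpretation I would split each $\bar\eta_{\mathcal P_t}$ into its nonzero-spectrum part and its kernel correction. The nonzero-spectrum parts cancel by the invertible argument above, and the kernel corrections cancel by the compatibility assumption.

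To confirm consistency, I would cross-check against the neutral global graph condition \eqref{eq:neutral_global_completion}, used as the kernel completion. A zero mode in the neutral zero sector has constant scaled trace, so it is killed by that condition, in the same way as in Definition~\ref{def:hrank_neutral_sector_convention}. The kernel and cokernel in this sector therefore both vanish, and the neutral sector contributes zero directly, which agrees with the cancellation.
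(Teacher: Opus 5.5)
Your proposal is correct and is essentially the paper's argument: you invoke Proposition~\ref{prop:eta_APS_index_endpoint_terms} and cancel the two endpoint terms via the sign--scale relation \eqref{eq:eta_endpoint_sign_scale_relation}, as in \eqref{eq:eta_endpoint_eta_cancellation}, and your $c^{-s}$ computation with the regularity check at $s=0$ fills in what the paper states in one line. Your reading of the kernel hypothesis is the right one provided ``with the sign reversal'' means complementary kernel choices at the two ends (transporting the same choice by $\mathcal V$ would make the $\tfrac12\dim\ker$ terms add up to $\dim\ker$ rather than cancel); also note that the two-endpoint graph condition \eqref{eq:neutral_global_completion} is not of the separate endpoint form $\mathcal P_0,\mathcal P_T$, so that cross-check is a direct index computation rather than an instance of the endpoint formula.
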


\begin{proof}
By Proposition~\ref{prop:eta_APS_index_endpoint_terms}, the index is the negative of
the sum of the two endpoint reduced $\eta$ corrections. Under the sign-scale relation
\eqref{eq:eta_endpoint_sign_scale_relation}, the two reduced $\eta$ corrections cancel:
\begin{equation}\label{eq:eta_endpoint_sum_zero}
\bar\eta_{\mathcal P_0}
\left(
B_{A,0}^{E_{\mathbb C}^{(n)},+}
\right)
+
\bar\eta_{\mathcal P_T}
\left(
B_{A,T}^{E_{\mathbb C}^{(n)},+}
\right)
=
0.
\end{equation}
Therefore the index vanishes.
\end{proof}

\subsection{Distinct endpoint parameters}
\label{subsec:hrank_distinct_endpoint_parameters}

One may also allow the angular connection parameter to take different values at the two
boundary components. To make this into a genuine bulk boundary-value problem, let $a$ be a
smooth function
\begin{equation}\label{eq:eta_tdependent_parameter} a\in C^\infty([0,T];\mathbb R), \qquad a(0)=A_0, \qquad a(T)=A_T,
\end{equation}
and use the orthogonal connection
\begin{equation}\label{eq:eta_tdependent_connection} \nabla^{E_{\mathbb R}^{(n)}}_a = d+a(t)J_n\,d\theta .
\end{equation}
Its complexification is
\begin{equation}\label{eq:eta_tdependent_complex_connection} \nabla^{E_{\mathbb C}^{(n)}}_a = d+a(t)J_n\,d\theta .
\end{equation}
The curvature is
\begin{equation}\label{eq:eta_tdependent_curvature} F_{\nabla_a} = a'(t)J_n\,dt\wedge d\theta .
\end{equation}
This curvature need not vanish. However, the local twisting density that could
contribute to the two-dimensional index density is obtained by taking the fiber trace
of this curvature term. Since $J_n\in\mathfrak{so}(n)$ is skew-symmetric, it has trace
zero:
\begin{equation}\label{eq:eta_trace_Jn_zero} \operatorname{Tr}(J_n)=0 .
\end{equation}
Therefore
\begin{equation}\label{eq:eta_tdependent_twisting_density_zero}
\operatorname{Tr}
\left(
F_{\nabla_a}
\right)
=
a'(t)\operatorname{Tr}(J_n)\,dt\wedge d\theta
=
0 .
\end{equation}
So even for a $t$-dependent parameter $a(t)$, the twisting field contributes no
interior two-form density. As before, the purely Riemannian part has no degree-two
contribution on the cylinder. Hence the local interior APS index density again
vanishes.

Let $B_{A_0,0}^{E_{\mathbb C}^{(n)},+}$ be the endpoint boundary operator at $Y_0$
computed from $a(0)=A_0$, and let $B_{A_T,T}^{E_{\mathbb C}^{(n)},+}$ be the
endpoint boundary operator at $Y_T$ computed from $a(T)=A_T$. The chiral APS index
therefore has the endpoint form
\begin{equation}\label{eq:eta_distinct_endpoint_index_formula}
\operatorname{ind}
\left(
D_{\text{APS},a}^{E_{\mathbb C}^{(n)},+}
\right)
=
-\bar\eta_{\mathcal P_0}
\left(
B_{A_0,0}^{E_{\mathbb C}^{(n)},+}
\right)
-
\bar\eta_{\mathcal P_T}
\left(
B_{A_T,T}^{E_{\mathbb C}^{(n)},+}
\right).
\end{equation}
Here $\mathcal P_0$ and $\mathcal P_T$ encode the endpoint APS kernel conventions.

If there is a unitary identification and a positive scale $c>0$ such that
\begin{equation}\label{eq:eta_distinct_endpoint_sign_scale}
B_{A_T,T}^{E_{\mathbb C}^{(n)},+}
=
-c\,\mathcal V\,
B_{A_0,0}^{E_{\mathbb C}^{(n)},+}
\mathcal V^{-1},
\end{equation}
then the nonzero $\eta$ contributions cancel, subject again to compatible treatment of
endpoint kernels. In that case the index vanishes. Without such an endpoint comparison,
the index is generally an endpoint $\eta$ expression rather than zero.

\begin{remark}[Endpoint formula versus moving spectral flow]
\label{rem:eta_endpoint_vs_moving_spectral_flow}
The formula \eqref{eq:eta_distinct_endpoint_index_formula} is an endpoint APS index
formula for one chiral Fredholm problem determined by the bulk connection
$a(t)$. It should not be confused with the moving-family spectral-flow formula of
\autoref{sec:hrank_ROO2_spectral_flow_formula}. The moving-family formula counts
regularized crossings along $p\mapsto A_p$, whereas the endpoint APS index formula
computes a single chiral Fredholm index from the $\eta$ data of the two boundary
components.
\end{remark}

\subsection{Reflection and endpoint $\eta$ terms}
\label{subsec:hrank_reflection_equivariant_eta}

The reflection symmetry used in the self-adjoint spectral-flow part acts naturally on
the full two-component boundary data. In the chiral APS index problem, however, one
must be more careful: the chosen spinorial lift $U_{\mathbf r}=\sigma_1$ exchanges the two
spinor components, and hence does not automatically define a symmetry of the chiral
endpoint tangential operator by itself.

For this reason we do not use a reflection-equivariant $\eta$ formula in the proof of the
ordinary chiral APS index statement above. The ordinary endpoint formula
\eqref{eq:eta_APS_index_endpoint_formula} only requires the chiral tangential operators
and their reduced $\eta$ corrections.

If one uses an additional boundary linearization for which a reflection operator
$\tau_t$ preserves the relevant endpoint chiral boundary problem and commutes with
the chiral tangential operator $B$, then one may define the corresponding equivariant
$\eta$ function by
\begin{equation}\label{eq:eta_equivariant_eta_function}
\eta_{\tau_t}(B;s)
=
\sum_{\lambda\in\operatorname{Spec}(B)\setminus\{0\}}
\operatorname{sign}(\lambda)|\lambda|^{-s}
\operatorname{Tr}_{E_\lambda}(\tau_t),
\qquad
\operatorname{Re}(s)\gg0,
\end{equation}
where $E_\lambda=\ker(B-\lambda)$. Its value at $s=0$, after meromorphic
continuation, is denoted by $\eta_{\tau_t}(B)$, and the corresponding reduced
correction is
\begin{equation}\label{eq:eta_equivariant_reduced_eta}
\bar\eta_{\tau_t}(B)
=
\frac{\eta_{\tau_t}(B)+h_{\tau_t}(B)}{2},
\qquad
h_{\tau_t}(B)=\operatorname{Tr}_{\ker B}(\tau_t).
\end{equation}

Under such an additional commuting-boundary-symmetry hypothesis, the nonzero
reflection-paired Fourier modes give zero trace contributions to the equivariant $\eta$ function. Neutral zero modes, when present, must still be treated separately because
their contribution depends on the chosen neutral reflection convention and on the endpoint
kernel convention.

\appendix

\section{Maslov framework for reflected nonzero crossings}
\label{app:maslov_reflected_frame_check}
We show an explicit finite-dimensional check of the sign of a reflected crossing. The purpose is to distinguish three related objects: the raw scalar boundary parameter $m$, the reflected scalar boundary parameter $-m$, and the lifted reflected Maslov problem. The scalar boundary slopes of $m$ and $-m$ are opposite. However, the two boundary problems are related by the boundary restriction of the lifted reflection, and the signed Maslov crossing is invariant after passing to the reflected boundary frame.

We work in one scalar boundary block, with ordered basis
$e_+=\binom{1}{0}$ and $e_-=\binom{0}{1}$.
Let
$c(m)=\cos\alpha(m/\delta)$ and $s(m)=\sin\alpha(m/\delta)$.
Let $v_0(m), v_T(m)$ represent the regularized APS domain at $Y_0$ and $Y_T$ respectively, then
$v_0(m)=\binom{c(m)}{i\,s(m)}$ and $v_T(m)=\binom{-i\,s(m)}{c(m)}$.
Thus
\begin{equation}
\ell_0^{\text{dom}}(m)=\mathbb C v_0(m)
=
\mathbb C\binom{c(m)}{i\,s(m)},
\qquad
\ell_T^{\text{dom}}(m)=\mathbb C v_T(m)
=
\mathbb C\binom{-i\,s(m)}{c(m)}.
\end{equation}
 With $L_f$ defined in \eqref{eq:hrank_Lf_def}, the zero-energy transfer matrix is 
\begin{equation}
\Phi_m=
\begin{pmatrix}
e^{mL_f} & 0\\
0 & e^{-mL_f}
\end{pmatrix},
\end{equation}
so
$\Phi_m v_0(m) = \binom{e^{mL_f}c(m)}{i\,e^{-mL_f}s(m)}$.

The transfer Maslov matrix is
\begin{equation}
M(m)
=
\begin{pmatrix}
-i\,s(m) & e^{mL_f}c(m)\\
c(m) & i\,e^{-mL_f}s(m)
\end{pmatrix}.
\end{equation}
Its determinant is
$F(m):=\det M(m) = s(m)^2e^{-mL_f} - c(m)^2e^{mL_f}$.
The equation $F(m)=0$ is the condition that the transported line $\Phi_m\ell_0^{\text{dom}}(m)$ meets the terminal line $\ell_T^{\text{dom}}(m)$.

Now we take the affine example
$A_p=-1.2+p, k=1, \mu=1$.
Then $m(p)=1+A_p=-0.2+p$ for $p \in (0, 1)$, then the crossing occurs at
$p_*=0.2, m(p_*)=0$.
The representative determinant is $F_{\text{rep}}(p)=F(m(p))$, and
$F_{\text{rep}}'(p_*)=F'(0)m'(p_*)$.

The reflected boundary parameter is $-m(p)$. Its scalar slope is opposite:
\begin{equation}
\frac{d}{dp}(-m(p))\bigg|_{p=p_*} = -m'(p_*).
\end{equation}
Thus, at the level of raw scalar boundary labels, the two crossings have opposite boundary slopes.

If one only reverses the scalar coordinate and keeps the old scalar frame, then, since $\alpha(-x)=\pi/2-\alpha(x)$, one has
$c(-m)=s(m)$ and $s(-m)=c(m)$.
Therefore
\begin{equation}
M(-m)
=
\begin{pmatrix}
-i\,c(m) & e^{-mL_f}s(m)\\
s(m) & i\,e^{mL_f}c(m)
\end{pmatrix},
\end{equation}
and hence
$\det M(-m) = c(m)^2e^{mL_f} - s(m)^2e^{-mL_f} = -F(m)$.
Consequently,
$\frac{d}{dp}F(-m(p))\bigg|_{p=p_*} = -F'(0)m'(p_*)$.
This is the raw boundary-coordinate sign reversal. It is real, but it is not yet the equivariant Maslov comparison.

We now include the boundary reflection. Let $\mathcal R_t$ denote the boundary restriction of the lifted reflection $\mathcal U_{\mathbf r}$. In this two-dimensional boundary coordinate model,
\begin{equation}
\mathcal R_t
=
\mathcal U_{\mathbf r}
=
\sigma_1
=
\begin{pmatrix}
0&1\\
1&0
\end{pmatrix}.
\end{equation}
It acts on the basis vectors by
$\mathcal R_t e_+=e_-$ and $\mathcal R_t e_-=e_+$.
It also intertwines the two boundary blocks:
$\mathcal R_t B_m\mathcal R_t^{-1}=B_{-m}$.
Thus the reflected boundary-frame view is
\begin{equation}
\left(
B_m,\ell^{\text{dom}}(m)
\right)
\ \xrightarrow{\ \mathcal R_t\ }\
\left(
B_{-m},\mathcal R_t\ell^{\text{dom}}(m)
\right).
\end{equation}
Here $\mathcal R_t\ell^{\text{dom}}(m)$ means the image of the domain line under $\mathcal R_t$. Explicitly,
\begin{equation}
\mathcal R_t\ell_0^{\text{dom}}(m)
=
\mathcal R_t\left(\mathbb C\binom{c(m)}{i\,s(m)}\right)
=
\mathbb C\,\mathcal R_t\binom{c(m)}{i\,s(m)}.
\end{equation}
Since
\begin{equation}
\mathcal R_t\binom{c(m)}{i\,s(m)}
=
\begin{pmatrix}
0&1\\
1&0
\end{pmatrix}
\binom{c(m)}{i\,s(m)}
=
\binom{i\,s(m)}{c(m)},
\end{equation}
we get
$\mathcal R_t\ell_0^{\text{dom}}(m) = \mathbb C\binom{i\,s(m)}{c(m)}$.
Similarly,
\begin{equation}
\mathcal R_t\ell_T^{\text{dom}}(m)
=
\mathcal R_t\left(\mathbb C\binom{-i\,s(m)}{c(m)}\right)
=
\mathbb C\,\mathcal R_t\binom{-i\,s(m)}{c(m)}.
\end{equation}
Since
\begin{equation}
\mathcal R_t\binom{-i\,s(m)}{c(m)}
=
\begin{pmatrix}
0&1\\
1&0
\end{pmatrix}
\binom{-i\,s(m)}{c(m)}
=
\binom{c(m)}{-i\,s(m)},
\end{equation}
we get
$\mathcal R_t\ell_T^{\text{dom}}(m) = \mathbb C\binom{c(m)}{-i\,s(m)}$.
Therefore the boundary crossing with parameter $-m$ is not being counted as an independent crossing in the original scalar frame. It is the same boundary crossing transported by the boundary reflection $\mathcal R_t$.

Now apply $\mathcal R_t$ to the two Maslov columns. First,
$\mathcal R_t v_T(m) = \binom{c(m)}{-i\,s(m)}$.
Second,
\begin{equation}
\mathcal R_t\Phi_m v_0(m)
=
\mathcal R_t
\binom{e^{mL_f}c(m)}{i\,e^{-mL_f}s(m)}
=
\binom{i\,e^{-mL_f}s(m)}{e^{mL_f}c(m)}.
\end{equation}
If these reflected vectors are still written in the old ordered basis $\{e_+,e_-\}$, the raw reflected matrix is
\begin{equation}
M_{\text{ref}}^{\text{old}}(m)
=
\mathcal R_t M(m)
=
\begin{pmatrix}
c(m) & i\,e^{-mL_f}s(m)\\
-i\,s(m) & e^{mL_f}c(m)
\end{pmatrix}.
\end{equation}
Therefore
$\det M_{\text{ref}}^{\text{old}}(m) = \det(\mathcal R_t)\det M(m) = -\det M(m)$,
because $\det(\mathcal R_t)=-1$. This minus sign comes only from measuring reflected vectors in the old ordered basis.

The natural boundary frame for the reflected block is the reflected ordered basis
\begin{equation}
e_+^{\text{ref}}:=\mathcal R_t e_+=e_-,
\qquad
e_-^{\text{ref}}:=\mathcal R_t e_-=e_+.
\end{equation}
Let $B_{\text{ref}}$ be the change-of-basis matrix whose columns are $e_+^{\text{ref}}$ and $e_-^{\text{ref}}$, written in the old basis. Since $e_+^{\text{ref}}=e_-=(0,1)^T$ and $e_-^{\text{ref}}=e_+=(1,0)^T$, one has
\begin{equation}
B_{\text{ref}}
=
\begin{pmatrix}
0&1\\
1&0
\end{pmatrix}
=
\mathcal R_t.
\end{equation}
This matrix sends reflected-basis coordinates to old-basis coordinates. Hence old-basis coordinates must be multiplied by $B_{\text{ref}}^{-1}$ to obtain reflected-basis coordinates. Therefore the reflected Maslov matrix in the natural reflected boundary frame is
$M_{\text{ref}}^{\text{nat}}(m) = B_{\text{ref}}^{-1}M_{\text{ref}}^{\text{old}}(m)$.
Using $B_{\text{ref}}=\mathcal R_t$, we get
$M_{\text{ref}}^{\text{nat}}(m) = \mathcal R_t^{-1}\mathcal R_t M(m) = M(m)$.
Thus
$\det M_{\text{ref}}^{\text{nat}}(m) = \det M(m)$.

This is the boundary Maslov explanation of the sign. The scalar boundary parameters $m(p)$ and $-m(p)$ have opposite raw slopes, but the two boundary problems are identified by the boundary reflection $\mathcal R_t$. Once the reflected Maslov data are written in the reflected boundary frame, the signed Maslov crossing is the same as the representative one.

Equivalently, the two sign changes cancel:
$m'(p_*)\longmapsto -m'(p_*)$ and $F'(0)\longmapsto -F'(0)$,
so that
$(-F'(0))(-m'(p_*)) = F'(0)m'(p_*)$.
Therefore the boundary Maslov sign agrees with the bulk crossing sign after the equivariant reflected identification.

At the operator level, the same conclusion follows from reflection equivariance:
$\mathcal U_{\mathbf r}D_p\mathcal U_{\mathbf r}^{-1}=D_p$.
If $D_pu(p)=\lambda(p)u(p)$, then
$D_p\mathcal U_{\mathbf r}u(p) = \lambda(p)\mathcal U_{\mathbf r}u(p)$.
The reflected eigenvector has the same eigenvalue branch $\lambda(p)$, not the opposite branch. Hence the bulk eigenvalue crosses zero in the same direction.

Therefore a nonzero reflected pair contributes one signed real $O(2)$-representation,
$\epsilon_{k,j,p_*}[\rho_k]$,
rather than two opposite one-dimensional signs.

\begin{figure}[H]
\centering
\includegraphics[width=0.48\textwidth]{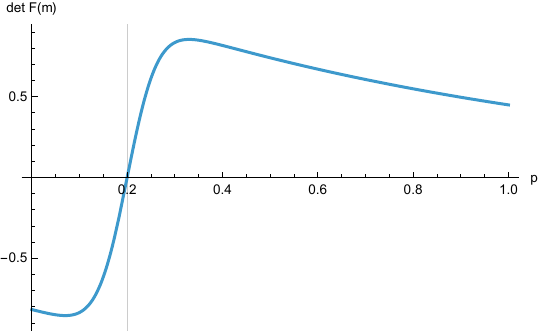}
\hfill
\includegraphics[width=0.48\textwidth]{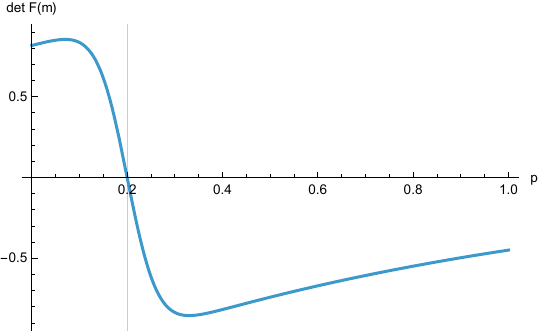}
\caption{Maslov determinants for the self-paired zero Fourier block in the affine example
$A_p=-0.2+p$, with $k=0$, $\mu=1$, $p_*=0.2$, $L_f=1$, and
$\delta=0.3$. We use
$\alpha(x)=\frac{\pi}{4}(1+\tanh(3x))$. Left: the determinant for the
reflection-even sector, with effective parameter
$m_{0,+}(p)=A_p=-0.2+p$, contributing $[\mathbf 1]$. Right: the determinant
for the reflection-odd sector, with effective parameter
$m_{0,-}(p)=-A_p=0.2-p$, contributing $-[\det]$. The two determinants vanish
at the same value $p_*=0.2$ but have opposite crossing directions. Hence the
zero-mode contribution is $[\mathbf 1]-[\det]$, which is nonzero in
$RO(O(2))$ but maps to zero under the ordinary dimension map.}
\label{fig:maslov_zero_block}
\end{figure}

\subsection*{The self-paired zero mode}

The previous calculation applies to a nonzero reflected pair. There the two scalar boundary labels $m$ and $-m$ belong to the two distinct Fourier modes with angular weights $k$ and $-k$, for some $k>0$, and these modes are exchanged by reflection. This is why the reflected boundary problem is the same
as the crossing transported by $\mathcal R_t$, and why the two modes combine into one signed copy of $\rho_k$. 

The zero Fourier block is different. At $k=0$, reflection does not exchange two distinct Fourier modes, because $0=-0$. Instead, the zero block is already self-paired. The real crossing space splits into the reflection-even and reflection-odd lines,
$V_0 = V_0^+\oplus V_0^-$.
Reflection acts on these two lines by
$\mathcal R_t|_{V_0^+}=+1, \mathcal R_t|_{V_0^-}=-1$.
Thus they are genuine one-dimensional $O(2)$-representations:
$V_0^+\cong \mathbf 1, V_0^-\cong \det$.
Hence the zero block is not one irreducible two-dimensional representation $\rho_k$. It is
$V_0\cong \mathbf 1\oplus \det$.

In this case, the two scalar signs are not forced to agree by a reflected-pair identification. Instead, the zero-block crossing form restricts separately to the two reflection sectors:
$\Gamma_{0,p_*} = \Gamma_{\mathbf{1},p_*}\oplus \Gamma_{\det,p_*}$.
Equivalently, if $e\in V_0^+$ and $o\in V_0^-$ are unit vectors, then
\begin{equation}
\Gamma_{\mathbf{1},p_*}
=
\Gamma_{0,p_*}(e,e),
\qquad
\Gamma_{\det,p_*}
=
\Gamma_{0,p_*}(o,o).
\end{equation}
The local zero-mode contribution therefore has the form
$\epsilon_{\mathbf{1},j,p_*}[\mathbf 1] + \epsilon_{\det,j,p_*}[\det]$,
where
\begin{equation}
\epsilon_{\mathbf{1},j,p_*}
=
\operatorname{sign}\Gamma_{\mathbf{1},p_*},
\qquad
\epsilon_{\det,j,p_*}
=
\operatorname{sign}\Gamma_{\det,p_*}.
\end{equation}

It is important that the opposite signs do not come merely from applying reflection. Indeed,
$\mathcal R_t e=e$ and $\mathcal R_t o=-o$.
For a reflection-invariant crossing form,
$\Gamma_{0,p_*}(\mathcal R_t u,\mathcal R_t v) = \Gamma_{0,p_*}(u,v)$.
Thus
$\Gamma_{0,p_*}(\mathcal R_t e,\mathcal R_t e) = \Gamma_{0,p_*}(e,e)$,
and
\begin{equation}
\Gamma_{0,p_*}(\mathcal R_t o,\mathcal R_t o)
=
\Gamma_{0,p_*}(-o,-o)
=
\Gamma_{0,p_*}(o,o).
\end{equation}
So reflection invariance alone does not force the two signs to be opposite. It only says that the crossing form is compatible with the reflection splitting.

For the sign-normalized zero-block convention used in the affine examples, the crossing form is chosen diagonal with opposite coefficients on the two reflection sectors:
\begin{equation}
\Gamma_{0,p_*}
=
c\,A_p'(p_*)\,(\,\cdot\,,\,\cdot\,)_{V_0^+}
-
c\,A_p'(p_*)\,(\,\cdot\,,\,\cdot\,)_{V_0^-},
\qquad
c>0.
\end{equation}
Equivalently,
\begin{equation}
\Gamma_{0,p_*}(e,e)
=
c\,A_p'(p_*),
\qquad
\Gamma_{0,p_*}(o,o)
=
-c\,A_p'(p_*).
\end{equation}
Therefore
\begin{equation}
\epsilon_{\mathbf{1},j,p_*}
=
\operatorname{sign}A_p'(p_*),
\qquad
\epsilon_{\det,j,p_*}
=
-\operatorname{sign}A_p'(p_*).
\end{equation}
For an affine path $A_p=A+\nu p$, this becomes
\begin{equation}
\epsilon_{\mathbf{1},j,p_*}
=
\operatorname{sign}\nu,
\qquad
\epsilon_{\det,j,p_*}
=
-\operatorname{sign}\nu.
\end{equation}
Thus, the zero-mode contribution is
$\operatorname{sign}(\nu)\bigl([\mathbf 1]-[\det]\bigr)$.

We see for $k\neq0$, the two scalar halves are exchanged by reflection and form one real irreducible block $\rho_k$. For $k=0$, there is no distinct reflected Fourier partner. The block splits into two invariant reflection sectors, $\mathbf 1$ and $\det$, and opposite signs can remain visible in $RO(O(2))$ once the zero-block regularization has been fixed with the above sign-normalized convention.

\bibliographystyle{utphys}
\bibliography{ref}

\end{document}